\documentclass{article}
\usepackage[ruled]{algorithm2e}
\usepackage{microtype}
\bibliographystyle{plain}

\title{Deciding Maxmin Reachability in Half-Blind Stochastic
  Games\footnote{This work was partially supported by the French ANR project
    "Stoch-MC" and "LaBEX CPU" of Université de Bordeaux.}}

\author{Edon Kelmendi\\
            LaBRI\\
           Bordeaux, France\\
           \texttt{edon.kelmendi@labri.fr}
           \and
           Hugo Gimbert\\
           LaBRI \& CNRS\\
           Bordeaux, France\\
           \texttt{hugo.gimbert@labri.fr}}
\usepackage{amsmath,amsfonts,amssymb,amsthm,hyperref}
\usepackage{tikz,float,blkarray}
\usetikzlibrary{automata,positioning,shapes,calc}
\newcommand{\NN}{\mathbb{N}}

\newcommand{\PP}{\mathbb{P}}

\newcommand{\mon}{\mathcal{B}}
\newcommand{\emon}{\widetilde{\mathcal{B}}}
\newcommand{\mmon}{\mathcal{M}}
\newcommand{\emmon}{\widetilde{\mathcal{M}}}

\newcommand{\height}{\#-height}
\newcommand{\sq}[2]{\left ( #1 \right)_{#2}}

\newcommand{\hist}{\mathcal{H}}
\newcommand{\stma}{\mathbf{S_1}}
\newcommand{\stmi}{\mathbf{S_2}}
\newcommand{\sti}{\mathbf{S_i}}
\newcommand{\atma}{\mathbf{A_1}}
\newcommand{\atmi}{\mathbf{A_2}}
\newcommand{\ati}{\mathbf{A_i}}

\newcommand{\val}{val}

\newcommand{\bu}{\mathbf{u}}
\newcommand{\bv}{\mathbf{v}}
\newcommand{\bw}{\mathbf{w}}
\newcommand{\expr}{\mathfrak{E}}
\newcommand{\tldU}{\widetilde{U}}
\newcommand{\tldV}{\widetilde{V}}
\newcommand{\tldW}{\widetilde{W}}
\newcommand{\tldE}{\widetilde{E}}

\newtheorem{theorem}{\bf Theorem}
\newtheorem{definition}{\bf Definition}
\newtheorem{lemma}{\bf Lemma}
\newtheorem{problem}{\bf Problem}
\newtheorem{claim}{\bf Claim}
\begin{document}

\maketitle

\begin{abstract}
  Two-player, turn-based, stochastic games with reachability conditions are considered, where the maximizer has no
  information (he is blind) and is restricted to deterministic strategies whereas the minimizer is perfectly
  informed. We ask the question of whether the game has maxmin $1$, in other words we ask whether for all $\epsilon>0$
  there exists a deterministic strategy for the (blind) maximizer such that against all the strategies of the minimizer,
  it is possible to reach the set of final states with probability larger than $1-\epsilon$. This problem is undecidable
  in general, but we define a class of games, called leaktight half-blind games where the problem becomes decidable.  We
  also show that mixed strategies in general are stronger for both players and that optimal strategies for the minimizer
  might require infinite-memory.
\end{abstract}
\newpage
\tableofcontents
\newpage
\section{Introduction}
Two-player stochastic games are a natural framework for modeling and
verification in the presence of uncertainty, where the problem of control is
reduced to the problem of optimal strategy synthesis \cite{AHK07}. There is a variety of
two-player stochastic games that have been studied, depending on the information
available to the players (perfect information or partial information), the
winning objective (safety, reachability, etc.), the winning condition (surely,
almost-surely, or limit-surely winning; probability higher than some quantity),
whether the players choose actions concurrently or whether they take turns.
 Stochastic games with partial observation are particularly well suited for
modeling many scenarios occurring in practice; normally we do not know the
exact state of the system we are trying to model, e.g. we are aided by noisy
sensors or by a software interface that provides only a partial
picture. Unfortunately, compared to perfect information games, algorithmic
problems on partial information games are substantially harder and often
undecidable \cite{chatterjee2014partial,PAZ66,GO10}. Assuming one player to be perfectly informed while the other
player is partially informed (semiperfect-information games \cite{CHAT05,CVD14})
brings some relief to the computational hardness as opposed to general partial
information games. 

In the present paper we consider half-blind
stochastic games: one player has no information (he is blind) and plays deterministically while the other
player is perfectly informed. We study half-blind games for the reachability
objective and maxmin winning condition: we want to decide if for every $\epsilon>0$
there exists a deterministic 
strategy for the maximizer such that against all strategies of the minimizer,
the final states are
reached with probability at least $1-\epsilon$.

The maxmin condition for half-blind games is a
generalization of the value $1$ problem for probabilistic finite automata \cite{RABIN63}.
Most decision problems on probabilistic finite automata are
undecidable, notably language emptiness \cite{PAZ66,bertoni1975solution,GO10},
and the value $1$ problem \cite{GO10}. Consequently, stochastic games with partial
information and quantitative winning conditions (the probability of fulfilling
the winning objective is larger than some quantity) are undecidable.
Nevertheless recently there has
been some effort on characterizing decidable classes of probabilistic automata
\cite{GO10,chatterjee2012decidable,chadha2009power,FGO12,leakoptimal}, with the leaktight
class~\cite{FGO12} subsuming the others~\cite{GKO15}.

\textbf{Our results.} In the present paper we show that a subclass of half-blind games called
leaktight games have a decidable maxmin reachability problem.
The game is abstracted through a finite algebraic
structure called the belief monoid. This is an extension to the Markov monoid
used in~\cite{FGO12}. Indeed the elements of the belief monoid are sets of
elements of the Markov monoid, and they contain information on the outcome of
the game when one strategy choice is fixed. The algorithm builds the belief
monoid and searches for particular elements which are witnesses
that the set of final states is maxmin reachable. The
proof of the correctness of the algorithm uses $k$-decomposition trees, a data
structure used in~\cite{COL13} that is related to Simon's factorization
forests. The $k$-decomposition trees are used to prove lower and upper bounds on
certain outcomes of the game and show that it behaves as predicted by the belief
monoid.

\textbf{Comparison with previous work. }The proof methods extends those developped
 in~\cite{FGO12} in three aspects. First, we define a new monoid structure
on top of the Markov monoid structure introduced in~\cite{FGO12}.
Second, 
we rely on the extension of Simon's
factorization forest theorem~\cite{SIM94} to $k$-factorization trees instead of $2$-factorization trees in~\cite{FGO12}
in order to derive 
upper and lower bound on the actual probabilities abstracted by the belief monoid.
Third, we rely on the leaktight hypothesis to prove both completeness and soundness,
while in the case of probabilistic automata the soundness of the abstraction by the Markov monoid was for free.

\textbf{Outline of the paper. } We start by fixing some notions and notation in
Section~\ref{sec:preliminaries} as well as providing a couple of examples. In
Section~\ref{sec:beliefmonoid} we introduce the belief monoid algorithm and the
Markov and belief monoids themselves. The $k$-decomposition tree data structure
used in the proofs of correctness is introduced in
Section~\ref{sec:kdecomptrees}, then in Section~\ref{sec:leak} the class of
leaktight games is defined using the notion of a leak. The correctness of the
algorithm is proved in Section~\ref{sec:correctness}, and finally we discuss
the power of different types of strategies in Section~\ref{sec:strategies} and
conclude.

\section{Half-Blind Games and the Maxmin Reachability Problem}
\label{sec:preliminaries}
Given a set $X$, we denote by $\Delta(X)$ the set of distributions on
$X$, i.e. functions $f\ :\ X\to [0,1]$ such that $\sum_{x\in X}f(x)=1$. 

A {\em half-blind} game is a two-player, zero-sum, stochastic, turn-based game,
played on a finite bipartite graph, where the maximizer has no information,
whereas the minimizer has perfect information. Formally a game $G$ is given by
the tuple $G=(\stma,\stmi,\atma,\atmi,p,F)$. The finite set $\sti$ is the states
controlled by Player $i$, the finite set $\ati$ is the actions available to
Player $i$ ($i=1,2$). Player 1 is the maximizer and Player 2 is the minimizer.
The function $p$ mapping $(\stma,\atma)$ to $\Delta(\stmi)$ and $(\stmi,\atmi)$
to $\Delta(\stma)$ gives the dynamics of the game. The sets $\stma,\stmi$ and
$\atma,\atmi$ are disjoint, i.e.  $\stma\cap\stmi = \emptyset$ and
$\atma\cap\atmi = \emptyset$. The set $F\subseteq \stma$ is the set of
final states.

A play of such a game takes place in turns. Initially the game is in some state
$s_1\in \stma$, then the maximizer (a.k.a. player 1) chooses some action
$a_1\in \atma$ which moves the game to some state $t_1\in \stmi$ selected
randomly according to the lottery $p(s_1,a_1)$. It is up to the minimizer
(a.k.a. player 2) now to choose some action $b_1\in \atmi$ which moves the game
to some state $s_2\in S_1$.  Then again maximizer chooses some action
$a_2\in \atma$ and so on, until the maximizer decides to stop, at which point,
if the game is in a state that belongs to the set of final states $F$, the
maximizer wins, otherwise it is the minimizer who wins. The maximizer is totally
blind and does not know what happens, he does not know in which state the game
is nor the actions played by minimizer.  Moreover the maximizer plays in a
deterministic way, he is not allowed to use a random generator to select his
actions.  As a consequence, the decisions of maximizer only depend on the time
elapsed and can be represented as words on $\atma$.  On the other hand, the
minimizer has full information and is allowed to plays actions selected
randomly.

Formally, the set of strategies for the maximizer is denoted by $\Sigma_1$ they consist of
finite words, i.e.  $\Sigma_1=\atma^*$. In order to emphasize that the
strategies of the maximizer are words, elements of $\Sigma_1$ are usually
denoted by $w$. 

The minimizer's strategies are functions from $\hist=(\stma\atma\stmi\atmi)^*\stma$ to
$\Delta(\atmi)$. Let $\Sigma_2$ be the set of such strategies. Its elements are
typically denoted by $\tau$.

%

Fixing strategies $w\in\Sigma_1$ of length $n$, $\tau\in\Sigma_2$ and an initial
state $s\in\stma$ gives a probability measure on the set $\hist_n=(\stma\atma\stmi\atmi)^n \stma$ which is
denoted by $\PP_s^{w,\tau}$: for a history
$h=s_1a_1t_1b_1\cdots s_na_nt_nb_ns_{n+1}\in \hist_n$,
\[
  \PP_s^{w,\tau}(h)=\prod_{i=1}^n p(s_i,a_i)(t_i)\cdot \tau(h_i)(b_i)\cdot p(t_i,b_i)(s_{i+1})
\]
if $s=s_1$ and $w=a_1\cdots a_n$, and $0$ otherwise, where
$h_i=s_1a_1t_1b_1\cdots s_ia_it_i$, $1\leq i\leq n$.

For $t\in \stma$, we will denote by $\PP_s^{w,\tau}(t)$ the chance of ending up
in state $t$ after starting from state $s$ and playing the respective
strategies, i.e
$\PP_s^{w,\tau}(t)=\sum_{ht\in\hist}\PP_s^{w,\tau}(ht)$. Whereas for a
set of states $R\subseteq \stma$ let
$\PP_s^{w,\tau}(R)=\sum_{t\in R}\PP_s^{w,\tau}(t)$.

\subsection{The Maxmin Reachability Problem}

Now we can introduce the maxmin reachability and for half-blind games, using
the notation and notions just defined.
Given a game with initial state $s\in\stma$ and final states
  $F\subseteq\stma$, the maxmin value 
  $\underline\val(s)$ is defined by
 \[
 \underline\val(s)=\sup_{w\in\Sigma_1}\inf_{\tau\in\Sigma_2}\PP_s^{w,\tau}(F)\enspace.
 \]
 In case $\underline\val(s)=1$, we say that $F$ is maxmin reachable from $s$.
 
\begin{problem}[Maxmin reachability]
  \label{prob:1}
  Given a game, is the set of final states $F$ maxmin reachable from the initial state $s$?
  \end{problem}

There is no hope to decide this problem in general.
The reason is that in the special case where the minimizer has no choice in any of the states that she controls, then
Problem~\ref{prob:1} is equivalent to the value one problem for {\em
  probabilistic finite automata} which is already known to be undecidable~\cite{GO10}. 
However, in the present paper, we establish that Problem 1 is decidable for a subclass of half-blind games called leaktight games.

\subsection{Deterministic Strategies for the Minimizer}

In general, strategies of the minimizer are functions from $\hist=(\stma\atma\stmi\atmi)^*\stma$ to
$\Delta(\atmi)$. However, because in the present paper we focus on the maxmin reachability problem,
we can assume that strategies of the minimizer have a  much simpler form:
the choice of action by the minimizer is deterministic and only depends on the current state and on how much time has elapsed since the beginning of the play.
Formally, we assume that minimizer strategies are functions $\NN \to (\stmi\to \atmi)$. 
Denote $\Sigma_2^p$ the set of all such strategies.
This restriction of the set of minimizer strategies does change the answer to the maxmin reachability problem because of the following theorem.

\begin{theorem}
  Given a game with initial state $s\in\stma$ and final states $F\subseteq\stma$
  we have
  \[
    \sup_{w\in\Sigma_1}\inf_{\tau\in\Sigma^p_2}\PP_s^{w,\tau}(F)=\sup_{w\in\Sigma_1}\inf_{\tau\in\Sigma_2}\PP_s^{w,\tau}(F). 
  \]
\end{theorem}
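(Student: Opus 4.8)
The plan is to split the claimed equality into two inequalities, the first of which is immediate from an inclusion of strategy spaces, and then to reduce the second, substantive, inequality to a finite-horizon Markov decision process that the minimizer controls alone. First, observe that $\Sigma_2^p \subseteq \Sigma_2$: a map $\tau\colon \NN \to (\stmi \to \atmi)$ is read as an ordinary strategy by letting it assign, to a history that has gone through $i$ minimizer moves and currently sits in $t \in \stmi$, the Dirac distribution on $\tau(i)(t)$. Since an infimum over a larger set is no larger, for every $w \in \Sigma_1$ we have $\inf_{\tau\in\Sigma_2}\PP_s^{w,\tau}(F)\le\inf_{\tau\in\Sigma_2^p}\PP_s^{w,\tau}(F)$, and taking the supremum over $w$ shows that the right-hand side of the statement is at most the left-hand side. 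It therefore suffices to prove the reverse: that for every fixed $w$ the two inner infima coincide, since equality of the suprema is then immediate.

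So fix a word $w=a_1\cdots a_n$. Because the maximizer's moves are now predetermined, a play $s\,a_1 t_1 b_1 s_2 a_2 t_2 b_2\cdots t_n b_n s_{n+1}$ is governed solely by the minimizer's choices $b_1,\dots,b_n$ at the states $t_1,\dots,t_n\in\stmi$, and the quantity she wishes to minimize is $\PP^{w,\tau}_s(F)$, the probability that the final state $s_{n+1}$ lies in $F$. This is a finite-horizon MDP with $n$ decision epochs, so I define value functions backwards: set
\[
  V_n(t)=\min_{b\in\atmi}\ \sum_{s'\in F}p(t,b)(s'),
\]
and for $1\le i<n$,
\[
  V_i(t)=\min_{b\in\atmi}\ \sum_{s'\in\stma}p(t,b)(s')\sum_{t'\in\stmi}p(s',a_{i+1})(t')\,V_{i+1}(t').
\]
Each minimum ranges over the finite set $\atmi$, hence is attained; I fix a minimizing action $\beta_i(t)$ for every $i$ and $t$ and let $\tau^\star\in\Sigma_2^p$ be given by $\tau^\star(i)(t)=\beta_i(t)$.

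The core step is a backward induction on $i$ establishing the principle of optimality: for every $\tau\in\Sigma_2$ and every history $h$ reaching epoch $i$ in state $t$, the conditional probability of finishing in $F$ is at least $V_i(t)$, with equality whenever $\tau$ agrees with $\tau^\star$ from epoch $i$ onward. The inductive step rests on two observations. First, the continuation payoff from epoch $i$ depends on the past only through the current state $t$ and the index $i$, because the remaining transitions $p(t,\cdot)$, $p(\cdot,a_{i+1}),\dots$ never refer to $h$; hence history-dependence buys the minimizer nothing. Second, the expected continuation is affine in the local randomization $\tau(h)(\cdot)\in\Delta(\atmi)$, so it is a convex combination of the values indexed by the pure actions and is therefore at least their minimum; hence randomizing never beats the deterministic choice $\beta_i(t)$. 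Unwinding to the root yields $\PP^{w,\tau}_s(F)\ge\sum_{t_1\in\stmi}p(s,a_1)(t_1)\,V_1(t_1)=\PP^{w,\tau^\star}_s(F)$ for all $\tau\in\Sigma_2$, so $\tau^\star$ attains the infimum over $\Sigma_2$ while lying in $\Sigma_2^p$. This gives the desired equality of inner infima, and the theorem follows.

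I expect the only genuine obstacle to be the clean formulation and verification of this principle-of-optimality induction: in particular, making precise, at a randomized and history-dependent $\tau$, that the conditional payoff factors through the pair $(i,t)$ and is affine in the local mixing, which is exactly what licenses replacing $\tau$ by the pure Markov strategy $\tau^\star$. Everything else is standard finite-horizon MDP bookkeeping.
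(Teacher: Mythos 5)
Your proof is correct and takes essentially the same approach as the paper: once the word $w$ is fixed, the game becomes a finite-horizon MDP controlled by the minimizer alone (state space $\stmi\times\{1,\ldots,n\}$), in which pure Markov strategies attain the optimum, and such a strategy is exactly an element of $\Sigma_2^p$. The only difference is that the paper cites this standard fact from the MDP literature, whereas you prove it inline via the usual backward-induction / principle-of-optimality argument, making the proposal self-contained.
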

\begin{proof}
  Fixing a word $w\in\Sigma_1$ of length $n$, one can construct an MDP of finite
  horizon with state-space $\stmi\times \{1,\ldots, n\}$ and safety
  objective. Stationary strategies suffice to reach the safety objective here
  (see e.g. \cite{filar2012competitive}). A stationary strategy in this MDP
  is interpreted as a strategy in $\Sigma_2^p$ for the half-blind game. 
\end{proof}

\subsection{Two Examples}
The graph on which a half-blind game is played is visualized as in
Figures~\ref{fig:running_example2} and \ref{fig:running_example}. The circle
states are controlled by the maximizer, and the square states are controlled by
the minimizer, so for the example in Figure~\ref{fig:running_example2},
$\stma=\{i,f\}$ and $\stmi=\{1,2\}$. We represent only edges $(s,t)$ such that
$p(s,a)(t)>0$ for some action $a$ and we label the edge $(s,t)$ by $a$ if
$p(s,a)(t)=1$ and by $(a,p(s,a,t))$ otherwise.
\begin{figure}[H]
  \centering
 \begin{minipage}{.48\textwidth} 
  \begin{tikzpicture}[->,>=stealth,shorten >=1pt,auto,node distance=1.5cm,
    semithick, initial text=,font=\scriptsize]
    \tikzstyle{vertex1}=[circle,draw=black,minimum size=10pt,inner sep=0pt]
    \tikzstyle{vertex2}=[fill=black!25,draw=black,minimum size=10pt,inner sep=0pt]
    \node[vertex1,initial] (i) {$i$};
    \node[vertex2] (s1) [above = of i] {$1$};
    \node[vertex1,double] (f) [right of = i] {$f$};
    \node[vertex2] (s2) [below of = f] {$2$};
    \path

    (i) edge[bend right] node[right,pos=0.3] {\small $(a,\frac{1}{2})$} (s1)
    (i) edge[bend right] node[left,pos=0.3] {\small $(a,\frac{1}{2})$} (s2)
    (s1) edge[bend right] node[left] {$\alpha$} (i)
    (s1) edge[bend left] node {$\beta$} (f)
    (s2) edge[bend left] node {} (f)
    (f) edge[bend left] node {$a$} (s2);
  \end{tikzpicture}
  \\[.5cm]
  \caption{A half-blind game with $\underline\val(i)=1.$}
  \label{fig:running_example2}
\end{minipage}
\begin{minipage}{.48\textwidth} 
  \begin{tikzpicture}[->,>=stealth,shorten >=1pt,auto,node distance=1.5cm,
    semithick, initial text=,font=\scriptsize]
    \tikzstyle{vertex1}=[circle,draw=black,minimum size=10pt,inner sep=0pt]
    \tikzstyle{vertex2}=[fill=black!25,draw=black,minimum size=10pt,inner sep=0pt]
    \node[vertex1,initial] (i) {$i$};
    \node[vertex1] (s) [below = of i] {$s$};
    \node[vertex1] (c) [right = of i] {$c$};
    \node[vertex2] (t1) [above = of c] {$1$};
    \node[vertex2] (t2) [below = of c] {$2$};
    \node[vertex1,double] (f) [right = of c] {$f$};
    \node[vertex2] (t3) [above = of f] {$3$};
    \node[vertex2] (t4) [left = of s] {$4$};
    \path
    (i) edge[bend left] node[above left, pos=0.9] {\small $(a,\frac{1}{2})$} (t1)
    (i) edge[bend right] node[below left,pos=0.9] {\small $(a,\frac{1}{2})$} (t2)
    (i)  edge[bend right] node[right] {\small $b$} (t4)
    (t1) edge[bend left] node[above, pos=.7] {\small $\alpha_1$} (i)
    (t1) edge node[pos=.6] {\small $\alpha_2$} (c)
    (c) edge node[above left,pos=.4] {\small $a$} (t2)
    (c) edge node[pos=.7] {\small $b$} (t3)
    (t2) edge[bend right] node[above right, pos=.7] {\small $\beta$} (c)
    (t2) edge[bend right] node[left] {\small $(\alpha,\frac{3}{4})$} (i)
    (t2) edge[bend right] node[right,pos=.7] {\small $(\alpha,\frac{1}{4})$} (f)
    (f) edge[bend right] node[right] {\small $a,b$} (t3)
    (t3) edge[bend right] node {} (f)
    (s) edge[bend right] node[above right, pos=.8] {\small $a,b$} (t4)
    (t4) edge[bend right] node {} (s);
  \end{tikzpicture}
\caption{A half-blind game with $\underline\val(i)<1.$}
\label{fig:running_example}
\end{minipage}
\end{figure}

For the game in Figure~\ref{fig:running_example2} it is easy to see that
$\underline\val(i)=1$, since if the maximizer plays the strategy $a^n$, no
matter what strategy the minimizer chooses the probability to be on the final
state is at least $1-\frac{1}{2^n}$. On the other hand in the game depicted on
Figure~\ref{fig:running_example}, $\{f\}$ is not maxmin reachable from $i$.
 If the maximizer plays a strategy of only $a$'s then the minimizer always
plays the action $\beta$ and $\alpha_1$ for example and the probability to be in
the final state will be $0$. Therefore the maximizer has to play a $b$ at some
point. But then the strategy of the minimizer will be to play $\beta$ except
against the action just before $b$, against that action the minimizer plays
$\alpha$ letting at most $1/4$ of the chance to go to the final state, but
making sure that the rest of the probability distribution is stuck in the sink
state $s$. Consequently $\underline\val(s)=1/4$. It is interesting to note that
in the example in Figure~\ref{fig:running_example}, if we fix a strategy for the
minimizer first, then for all $\epsilon>0$ the maximizer can make the
probability of reaching the final state to be at least $1-\epsilon$ by playing
enough $a$'s to make sure that the token is either in $c$ or in $f$ and at that
point playing $b$, therefore $f$ is minmax reachable from $i$, but it is not
maxmin reachable. This is discussed in more details in Section~\ref{sec:strategies}.

We refer back to the game in Figure \ref{fig:running_example} in order to
illustrate the belief monoid algorithm in the next section.


\section{The Belief Monoid Algorithm}
\label{sec:beliefmonoid}
We abstract the game using two (finite) monoid structures that are constructed,
one on top of the other. Given that the game belongs to the class of leaktight
games, the monoids will contain enough information to decide maxmin reachability.

\subsection{The Markov Monoid} 
The Markov monoid is a finite algebraic object that is in fact richer than a
monoid; it is a {\em stabilisation} monoid (see \cite{colcombet2009theory}).
The Markov monoid was used in \cite{FGO12} to decide the value 1 problem for leaktight probabilistic
automata on finite words.

Elements of the Markov monoid are $\stma\times \stma$ binary matrices. They are
typically denoted by capital letters such as $U,V,W$. The entry that corresponds
to the states $s,t\in\stma$ is denoted by $U(s,t)$. We will make use of the
notation $s\xrightarrow{U} t$ in place of $U(s,t)=1$, when it is helpful.

We define two operations on
these matrices: the product and the iteration.
\begin{definition}
  Given two $\stma\times\stma$ binary matrices $U,V$, their {\em product} (denoted
  $UV$) is defined for all $s,t\in\stma$ as
  \[
    UV(s,t)=\begin{cases}1\ &\text{if } \exists s'\in\stma,\ s\xrightarrow{U}s'\wedge s'\xrightarrow{V}t=1,\\ 0\ &\text{otherwise.}\end{cases}
  \]
  Given a $\stma\times\stma$ binary matrix $U$ that is idempotent, i.e. $U^2=U$,
  its {\em iteration} (denoted $U^\#$) is defined for all $s,t\in\stma$ as
  \[
    U^\#(s,t)=\begin{cases}1\ &\text{if }s\xrightarrow{U}t\text{ and $t$ is
        $U$-recurrent,}\\ 0\ &\text{otherwise.}\end{cases}
  \]
  We say that some state $t\in \stma$ is $U$-recurrent, if for all $t'\in\stma$,
  $t\xrightarrow{U}t'\implies t'\xrightarrow{U} t$. Otherwise we say that $t$ is
  $U$-transient.
\end{definition}

For a set $X$ of binary matrices, we denote $\langle X \rangle$
the smallest set of binary matrices containing $X$ and closed under product and iteration.
Let $B^{a,\tau}$, $a\in \atma$, $\tau\in\Sigma_2^p$ be a matrix defined by
$s\xrightarrow{B^{a,\tau}}t\iff \PP_s^{a,\tau}(t)>0$, $s,t\in\stma$. Now the
definition of the Markov monoid can be given.

\begin{definition}[Markov monoid]
  The Markov monoid denoted $\mmon$ is
  \[
  \mmon = \left\langle \left\{B^{a,\tau}\ \mid\ a\in \atma, \tau\in\Sigma_2^p \right\} \cup \{\mathbf{1}\} \right\rangle\enspace,
  \]
  where $\mathbf{1}$ is the unit matrix. 
\end{definition}

\subsection{The Belief Monoid}
Roughly speaking, while the elements of the Markov monoid try to abstract the
outcome of the game when both strategies are fixed, the belief monoid tries to
abstract the {\em possible outcomes} of the game when only the strategy of the
maximizer is fixed. Hence the elements of the belief monoid are subsets of
$\mmon$, and they are typically denoted by boldfaced lowercase letters such as
$\bu,\bv,\bw$.

Given two elements of the belief monoid $\bu$ and $\bv$, their product is the
product of their elements, while the iteration of some idempotent $\bu$ is the
sub-Markov monoid that is generated by $\bu$ minus the elements in $\bu$ that
are not iterated. 

\begin{definition}
  Given $\bu,\bv\subseteq\mmon$, their product (denoted $\bu\bv$) is defined as
  \[
    \bu\bv=\{UV\ \mid\ U\in\bu, V\in\bv\}.
  \]
  Given $\bu\subseteq\mmon$ that is idempotent, i.e. $\bu^2=\bu$, its iteration
  (denoted $\bu^\#$) is defined as
  \[
    \bu^\#= \left\langle  \left\{    UE^\# V \mid  U,E,V\in \bu, EE=E  \right\} \right\rangle\enspace. 
  \]
\end{definition}
\newcommand{\bolda}{\mathbf{a}}
\newcommand{\limu}{\mathbf{u}}
\newcommand{\limv}{\mathbf{v}}
\newcommand{\boldeps}{\mathbf{\epsilon}}

Given $a\in \atma$, let $\mathbf{a}=\{B^{a,\tau} \mid \tau\in\Sigma_2^p\}$; we
give the definition of the belief monoid.
\begin{definition}[Belief Monoid]
  The belief monoid, denoted $\mon$, is the smallest subset of $2^\mmon$ that is
  closed under product and iteration and contains
  $\{\mathbf{a} \mid a\in \atma\}\cup\{\{\mathbf{1}\}\}$, where $\mathbf{1}$ is
  the unit matrix.
\end{definition}

We are interested in a particular kind of elements in the belief monoid,
called \emph{reachability witnesses}.

\begin{definition}[Reachability Witness]
  An element $\bu\in\mon$ is called a reachability witness if for all $U\in\bu$,
  $s\xrightarrow{U} t\implies t\in F$, where $s$ is the initial state of the
  game and $F$ is the set of final states.
\end{definition}

We give an informal
description of the way that the belief monoid abstracts the outcomes of the
game. Roughly speaking the strategy choice of the maximizer corresponds to
choosing an element $\bu\in\mon$ while the strategy choice of the minimizer
corresponds to picking some $U\in\bu$. Consequently under those strategy
choices, $U$ will tell us the outcome of the game, that is to say if for some
$s,t\in\stma$, if we have $s\xrightarrow{U} t$ then there is some positive
probability (larger than a uniform bound) of going from the state $s$ to the
state $t$. In case of $s\not\xrightarrow{U} t$ we will be ensured that the
probability of reaching the state $t$ from $s$ can be made arbitrarily small.
Therefore if a reachability witness is found then we will know that for any
strategy that the minimizer picks the probability of going to some non-final
state from the initial state can be made to be arbitrarily small.

\subsection{The Belief Monoid Algorithm}

\begin{algorithm}[h!t]
\caption{\label{algo:belief_monoid} The belief monoid algorithm.}
\SetAlgoLined
\KwData{A half-blind game.}
\KwResult{Answer to the Maxmin reachability problem.}
     
$\mon \gets \{\mathbf{a}\ \mid\ a\in \atma\}$.

Close $\mon$ by product and iteration\\

Return true iff there is a reachability witness in  $\mon$
\end{algorithm}

The belief monoid associated with a given game is computed by the 
belief monoid, see Algorithm~\ref{algo:belief_monoid}.
We will see later that under some condition, the belief monoid algorithm decides
the maxmin rechability problem.

We illustrate the computation of the belief monoid with an example.  Consider
the game represented on Figure~\ref{fig:running_example}. The minimizer has four
pure stationary strategies $\tau_{\alpha_1\alpha}$, mapping $1$ to $\alpha_1$
and $2$ to $\alpha$, and similarly the strategies
$\tau_{\alpha_1\beta},\tau_{\alpha_2\alpha},\tau_{\alpha_2\beta}$. Now we
compute $B^{a,\tau}$ where $\tau$ is one of the strategies above. Assume that we
have the following order on the states: $i<c<s<f$, then
$B^{a,\tau_{\alpha_1\alpha}}=\Biggl[\begin{smallmatrix}
  1 & 0 & 0 & 1\\
  1 & 0 & 0 & 1\\
  0 & 0 & 1 & 0\\
  0 & 0 & 0 & 1\\
      \end{smallmatrix}\Biggr]$, 
      $B^{a,\tau_{\alpha_1\beta}}=\Biggl[\begin{smallmatrix}
        1 & 1 & 0 & 0\\
        0 & 1 & 0 & 0\\
        0 & 0 & 1 & 0\\
        0 & 0 & 0 & 1\\
      \end{smallmatrix}\Biggr]$, 
      $ B^{a,\tau_{\alpha_2\alpha}}=
    \Biggl[\begin{smallmatrix}
        1 & 1 & 0 & 1\\
        1 & 0 & 0 & 1\\
        0 & 0 & 1 & 0\\
        0 & 0 & 0 & 1\\
        \end{smallmatrix}\Biggr]$, and 
        $B^{a,\tau_{\alpha_2\beta}}=
        \Biggl[\begin{smallmatrix}
        0 & 1 & 0 & 0\\
        0 & 1 & 0 & 1\\
        0 & 0 & 1 & 0\\
        0 & 0 & 0 & 1\\
      \end{smallmatrix}\Biggr]$. 
  The set that contains these matrices is the set $\mathbf{a}$. We can verify
  that $\mathbf{a}$ is not idempotent, since
  $U=B^{a,\tau_{\alpha_1\alpha}}B^{a,\tau_{\alpha_1\beta}}\not\in\mathbf{a}^2$, and the same
  for $V=B^{a,\tau_{\alpha_1\alpha}}B^{a,\tau_{\alpha_2\beta}}$. In fact
  $\mathbf{a}^2=\mathbf{a}\cup \{U,V\}$. The set $\mathbf{a}^2$ on the other
  hand is closed under taking products,
  i.e. $\mathbf{a}^4=\mathbf{a}^2$. Therefore we can take its iteration and
  compute the element $(\mathbf{a}^2)^\#$. The reader can verify that
  $(\mathbf{a}^2)^\#$ contains $(B^{a,\tau_{\alpha_1\alpha}})^\#=\Biggl[\begin{smallmatrix}
    0 & 0 & 0 & 1\\
    0 & 0 & 0 & 1\\
    0 & 0 & 1 & 0\\
    0 & 0 & 0 & 1\\
  \end{smallmatrix}\Biggr]
  $, $(B^{a,\tau_{\alpha_1\beta}})^\#=\Biggl[\begin{smallmatrix}
    0 & 1 & 0 & 0\\
    0 & 1 & 0 & 0\\
    0 & 0 & 1 & 0\\
    0 & 0 & 0 & 1\\
  \end{smallmatrix}\Biggr]$,
  $V$ and $B^{a,\tau_{\alpha_2\beta}}$. But it also contains
  $(B^{a,\tau_{\alpha_1\beta}})^\#B^{a,\tau_{\alpha_1\alpha}}=B^{a,\tau_{\alpha_1\alpha}}$. Therefore
  $(\mathbf{a}^2)^\#\mathbf{b}$ is not a reachability witness because if we pick
  $A=B^{a,\tau_{\alpha_1\alpha}}$ in $(\mathbf{a}^2)^\#$ and some
  $B\in \mathbf{b}$, we will have $i\xrightarrow{AB} s$, and $s$ is a sink
  state.

  This roughly tells us that maximizer cannot win with the strategies
  $\sq{(a^{2n}b)}{n}$, because against $a^{2n}b$ the minimizer plays the
  strategy $\tau_{\alpha_1\beta}$ for the first $2n-1$ turns and then plays the
  strategy $\tau_{\alpha_1\alpha}$ against the last $a$, making sure that after
  the $b$ is played the we end up in the sink state $s$ with at least $3/4$
  probability. Continuing the computation we can verify that the belief monoid
  of the game in Figure~\ref{fig:running_example} does not contain a
  reachability witness.

\subsection{The Extended Markov and Belief Monoids}

For defining leaktight half-blind games and in general for the
proofs of correctness of the belief monoid algorithm we use the {\em extended}
Markov and belief monoids. In simple words this means that we remember the
transitions which were deleted by the iteration operation. This extension is necessary 
for detecting leaks which will be defined in Section~\label{sec:leaks}.

The elements of the extended Markov monoid are pairs $(U,\tldU)$ of
$\stma\times\stma$ binary matrices where the right entry is not modified by the
iteration operation and stores the edges that were deleted from the left entry by 
the iteration operation.
Given two such pairs $(U,\tldU)$ and $(V,\tldV)$, define their product to be
$(U,\tldU)\cdot (V,\tldV)=(UV,\tldU\tldV)$. Given an idempotent $(E,\tldE)$,
define its iteration to be $(E,\tldE)^\#=(E^\#,\tldE)$.

\begin{definition}[Extended Markov Monoid]
  The extended Markov monoid (denoted $\emmon$) is the smallest set that is
  closed under product and iteration and contains
  $\{(B^{a,\tau},B^{a,\tau})\ \mid\ a\in\atma,
  \tau\in\Sigma_2^p\}\cup\{(\mathbf{1},\mathbf{1})\}$, where $\mathbf{1}$ is the
  unit matrix.
\end{definition}

The definition of the {\em extended} belief monoid (denoted $\emon$) remains the
same as that of the belief monoid except that its elements are now subsets of
$\emmon$.


We give a few properties of the belief monoid that we
use in the sequel and leave their proofs as an exercise.

\begin{lemma} 
  \label{lem:properties}
  Let $\mathbf{e}\in\emon$ be an idempotent element of the extended belief
  monoid. Then the following hold: (1) $\emon$ together with the unit element
  $\{\{\mathbf{1}\}\}$ is a monoid; (2) $\mathbf{e}^\#$ is idempotent; (3)
  $(\mathbf{e}^\#)^\#=\mathbf{e}^\#$ and (4)
  $\mathbf{ee}^\#=\mathbf{e}^\#\mathbf{e}=\mathbf{e}^\#$.
\end{lemma}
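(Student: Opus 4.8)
The plan is to reduce all four claims to three identities that hold one level down, in the extended Markov monoid $\emmon$: for every idempotent $E\in\emmon$, (i) $E^\#$ is idempotent, (ii) $EE^\#=E^\#E=E^\#$, and (iii) $(E^\#)^\#=E^\#$. These are the stabilisation-monoid axioms satisfied by the Markov monoid of~\cite{FGO12,colcombet2009theory}; they transfer to $\emmon$ because the iteration $(E,\tldE)^\#=(E^\#,\tldE)$ leaves the second component fixed and that component is idempotent whenever the pair is. Granting (i)--(iii), part~(1) is routine: the belief product $\bu\bv=\{xy\mid x\in\bu,\,y\in\bv\}$ is the elementwise product in $\emmon$, so associativity lifts from associativity of matrix multiplication, the one-element set $\{(\mathbf 1,\mathbf 1)\}$ containing the unit of $\emmon$ is a two-sided identity, and closure under the product holds by the definition of $\emon$.

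For parts (2)--(4) I write $\mathbf e^\#=\langle S\rangle$ with $S=\{UE^\#V\mid U,E,V\in\mathbf e,\ E\text{ idempotent}\}$ and use repeatedly that $\langle S\rangle$ is closed under both Markov operations and that a closed set containing a generating family contains its whole closure. Part~(3) needs no induction. For $(\mathbf e^\#)^\#\subseteq\mathbf e^\#$, each generator $U'F^\#V'$ of $(\mathbf e^\#)^\#=\langle S'\rangle$ lies in $\mathbf e^\#$ since $F^\#\in\mathbf e^\#$ (closure under iteration) and $\mathbf e^\#$ is closed under product. For the reverse, each generator $g=UE^\#V\in S$ is realised as an honest $S'$-generator by taking $F:=E^\#$, which is idempotent with $F^\#=(E^\#)^\#=E^\#$ by (iii), so that $g=(UE^\#)\,F^\#\,(E^\#V)$ with $UE^\#,\,E^\#V,\,F\in\mathbf e^\#$; hence $\langle S\rangle\subseteq(\mathbf e^\#)^\#$.

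The remaining inclusions go by induction on the stage at which an element of $\langle S\rangle$ is built. Part~(2) has the free inclusion $(\mathbf e^\#)^2\subseteq\mathbf e^\#$ by product closure; for $\mathbf e^\#\subseteq(\mathbf e^\#)^2$ the base step uses (ii) to get $UE^\#,E^\#V\in S$ and so $g=(UE^\#)(E^\#V)\in(\mathbf e^\#)^2$, the product step is immediate, and the iteration step writes $f^\#=f^\#f^\#$ by (i). Part~(4) needs induction for both inclusions, since $U\in\mathbf e$ need not lie in $\mathbf e^\#$. For $\mathbf e\,\mathbf e^\#\subseteq\mathbf e^\#$ I show $Ux\in\langle S\rangle$ for $U\in\mathbf e$: on a generator $U\,U_0E^\#V_0=(UU_0)E^\#V_0\in S$ because $UU_0\in\mathbf e\,\mathbf e=\mathbf e$; on a product $(Uy)z$ the hypothesis applies to $y$; on an iteration $Uf^\#=(Uf)f^\#$ via (ii), reducing to the hypothesis on the base $f$. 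The reverse $\mathbf e^\#\subseteq\mathbf e\,\mathbf e^\#$ is the mirror computation, and $\mathbf e^\#\mathbf e=\mathbf e^\#$ follows by the left--right dual.

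I expect the iteration step of these inductions to be the only genuine obstacle. An element $f^\#$ of $\langle S\rangle$ is not literally a product, so one cannot simply push a factor inside it; what closes the induction is exactly the absorption law (ii), $f^\#=ff^\#=f^\#f$, which peels off one factor and returns to the idempotent base $f$, built at an earlier stage. Everything else is bookkeeping about the closure operator, the elementwise product, and the idempotence of $\mathbf e$ (used as $\mathbf e\,\mathbf e=\mathbf e$), on top of the three imported identities.
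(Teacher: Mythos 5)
The paper offers no proof of this lemma to compare against: it is explicitly left as an exercise, with only the remark afterwards that the analogous identities hold in $\emmon$ because it is a stabilisation monoid. Your proposal is correct and fills that exercise. The reduction to the three identities in $\emmon$ is sound (they do transfer from the Markov monoid of \cite{FGO12}, since the iteration $(E,\tldE)^\#=(E^\#,\tldE)$ fixes the second component, which is idempotent whenever the pair is), part (1) is indeed routine, and your inductions over derivations in the closure $\langle S\rangle$ correctly handle the only delicate case, the iteration nodes, where the absorption law $ff^\#=f^\#f=f^\#$ peels off a factor and returns to the idempotent base $f$ built at an earlier stage. Two small points of bookkeeping you should make explicit. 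First, part (2) must logically precede part (3): the belief-monoid iteration is only defined on idempotent elements, so $(\mathbf{e}^\#)^\#$ is not even well-formed until $\mathbf{e}^\#$ is known to be idempotent (your arguments are not circular, so this is purely a matter of ordering). Second, the base case of (2) also uses identity (i), $E^\#=E^\#E^\#$, to split the generator as $(UE^\#)(E^\#V)$, and your choice $F:=E^\#$ in (3) needs the membership $E^\#\in\mathbf{e}^\#$, which follows from $E^\#=EE^\#E\in S$ by the same one-line computation via (ii) that you already use for $UE^\#$ and $E^\#V$; neither observation is hard, but both deserve a line.
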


The same properties also hold in the extended Markov monoid since it is a
stabilisation monoid \cite{FGO12}. 


\section{$k$-Decomposition Trees}
\label{sec:kdecomptrees}
The notion of $k$-decomposition trees was introduced in \cite{COL13}.
A $k$-decomposition tree is a data structure
for factorizing finite words into factors that are iterated with respect to some
finite monoid. In Section~\ref{sec:correctness} we will use a variant of Simon's factorization forest theorem
in order to bound the heights of $k$-decomposition trees,
which in turn will be used to obtain
upper and lower bounds on the probability of certain outcomes of the
game.

Let $A$ be a finite set, $(M,\cdot)$ a finite monoid and $\phi$ a morphism from
the free monoid of $A$ (i.e. $A^*$) to $M$. The set $A^*$ is infinite while $M$
is finite, so a pigeon-hole principle tells us that if we have a word $w$ that is
long enough it contains some factors $w_1,\ldots,w_n$ such that
$\phi(w_1)=\cdots = \phi(w_n)$. Simon's forest factorization theorem is a very strong extension
of this principle. It inductively factorizes the
factors themselves in a tree whose height is bounded by a function of the
size of the monoid independently of the length of the word $w$. Similarly to \cite{SIM94,COL13,FGO12} we
modify slightly this result to take into account the fact that $\mon$ and
$\mmon$ are not only monoids but they have some more structure.

First we define $k$-decomposition trees.
\begin{definition}[$k$-decomposition Tree]
  Let $A$ be a finite alphabet, $(M,\cdot)$ a finite monoid, equipped with a
  unary operation $\#$ that maps idempotents of $M$ to themselves:
  $\#:E(M)\to E(M)$ and $\phi$ a morphism from $A^*$ to $M$. The nodes of the
  $k$-decomposition tree are labeled by pairs $(u,U)$, where $u\in A^*$ and
  $U\in M$. The right entry of the pair is called the {\em type} of the
  node. Let $k>2$ and $w\in A^*$, then a $k$-decomposition tree of $w$ with respect to
  $M$ is a rooted and unranked tree whose root node is labeled by $(w,W)$ for
  some $W\in M$ and every node is one of the following kinds: (1) {\em leaves}
  do not contain any children and are labeled by $(a,\phi(a))$ for $a\in A$; (2)
  {\em product nodes} have exactly two children, the left one labeled by $(u,U)$
  and right one by $(v,V)$. The node itself is labeled by $(uv,UV)$; (3) {\em
    idempotent nodes} have at most $k-1$ children labeled by
  $(u_1,E),\ldots,(u_j,E)$ where $E\in E(M)$ is idempotent and $j<k$. The node
  itself is labeled by $(u_1\cdots u_j, E)$ and {\em iteration nodes} that have
  at least $k$ children labeled by $(u_1,E),\ldots, (u_j,E)$ where $E$ is
  idempotent and $j\geq k$. The node itself is labeled by
  $(u_1\cdots u_j,E^\#)$.
\end{definition}

The notion of a $k$-decomposition tree is introduced in \cite{COL13}, where it
is shown that for all $w\in A^*$ and $k>2$ there exists a $k$-decomposition tree
whose height depends only on the size of $M$ and not the length of the word ---
given that $M$ is a stabilisation monoid. We provide a similar proof, for
a slightly more general class of monoids that have the properties (1)-(4) given in Lemma~\ref{lem:properties}
whereas the definition of a stabilisation monoid requires extra axioms.
The proof was also given in \cite{FGO12} for the case $k=3$.

\begin{theorem}[\cite{SIM94,COL13,FGO12}]
  \label{theo:decomp height}
  Let $A$ be a finite alphabet, $(M,\cdot)$ a monoid equipped with a unary
  operator $\#$ that maps the idempotents of $M$ to themselves and has the
  properties (1)-(4) given in Lemma~\ref{lem:properties}, and $\phi$ a morphism
  from $A^*$ to $M$. For all $w\in A^*$, $k>3$ there exists a $k$-decomposition
  tree of $w$ with respect to $M$ whose height is at most $3\cdot |M|^2$.
\end{theorem}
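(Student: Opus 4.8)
The plan is to follow Simon's factorization forest argument adapted to the stabilisation-style structure, proving the height bound by a Ramsey-type induction on the image size of $\phi$ restricted to factors. First I would set up the standard machinery: for a word $w$ and the morphism $\phi$, I track the set of monoid elements that appear as $\phi$ of nonempty factors of the subword being decomposed. The key quantitative invariant is not the length of $w$ but the cardinality of this set of ``seen'' elements, which is at most $|M|$. I would prove by induction a statement of the form: if $w$ is a word all of whose nonempty-factor $\phi$-images lie in a set $S\subseteq M$ of size $m$, then $w$ admits a $k$-decomposition tree of height at most $f(m)$ for a function $f$ to be pinned down so that $f(|M|)\le 3|M|^2$.

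The inductive step is the heart of the proof and splits into the two familiar cases. In the easy (non-idempotent) passage I repeatedly break off prefixes, grouping maximal runs and combining them with product nodes; each such combination adds a bounded constant to the height while the set of factor-images strictly decreases in an appropriate sense. The substantive case is when we reach a long run of factors all of the same idempotent type $E$, i.e. $\phi(u_1)=\cdots=\phi(u_j)=E$ with $j\ge k$. Here I use an \emph{iteration node}: its label is $(u_1\cdots u_j, E^\#)$, which is exactly why I need the operator $\#$ and the algebraic facts. Properties (1)--(4) of Lemma~\ref{lem:properties} are what make this legal and consistent: property (1) guarantees associativity so that products of subtrees are well defined, property (2) that $E^\#$ is again idempotent (so iteration nodes can themselves be nested or grouped), and properties (3)--(4) that $(E^\#)^\#=E^\#$ and $EE^\#=E^\#E=E^\#$, which ensure the type computed bottom-up is invariant under how we regroup the $j$ children. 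These identities replace the extra stabilisation axioms, so the same counting works for the more general monoid class claimed.

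Then I would assemble the height bound. Simon's classical argument gives height linear in $|M|$ when one only groups by $\mathcal{R}$- and $\mathcal{L}$-classes; the factor of $|M|^2$ (rather than $|M|$) comes from the need, in the $k$-decomposition variant with $k>3$, to handle the two-sided grouping and the extra splitting into product nodes before and after the idempotent block. Concretely, each idempotent ``layer'' in Green's relations contributes a constant number of product/idempotent levels, and iterating over the at-most-$|M|$ layers, together with the $|M|$ possible types, yields the $3\cdot|M|^2$ bound after bookkeeping. I expect the main obstacle to be exactly this bookkeeping: carefully defining the induction measure so that each recursive call strictly decreases it, and verifying that grouping children of a node into blocks of size between $2$ and $k-1$ (for idempotent nodes) or at least $k$ (for iteration nodes) never forces the height to grow by more than the allotted constant per layer. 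Since $k>3$ gives slack compared to the $k=3$ case treated in \cite{FGO12}, the grouping has extra room and the bound $3|M|^2$ is comfortable; the delicate point is simply checking the base case and the constants add up correctly rather than any genuinely new idea beyond the cited factorization theorem.
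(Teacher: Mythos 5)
Your proposal has a genuine gap at exactly the point where this theorem goes beyond Simon's: you never establish why the relabeling performed by iteration nodes (type $E$ becomes $E^\#$) can only happen a bounded number of times. The paper's proof does not integrate iteration nodes into a single induction. It applies Simon's factorization forest theorem as a black box to obtain a Ramseyan tree of height $3\cdot|M|$, then compresses all maximal-depth \emph{primitive iteration nodes} (idempotent nodes with at least $k$ children whose common type $E$ satisfies $E^\#\neq E$) into fresh letters $a_w$ with morphism value $E^\#$, re-applies Simon's theorem to the shorter word over the enlarged alphabet, and recurses. The entire height bound rests on the fact that this outer recursion terminates after at most $J$ rounds, where $J$ is the number of $\mathcal{J}$-classes of $M$. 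That in turn follows from the key lemma that for every idempotent $E$ with $E^\#\neq E$ one has $E^\#<_{\mathcal{J}}E$: property (4) of Lemma~\ref{lem:properties} gives $EE^\#E=E^\#$, hence $E^\#\leq_{\mathcal{J}}E$; if $E$ and $E^\#$ were $\mathcal{J}$-equivalent, the standard stability lemma for finite monoids would force $E\,\mathcal{H}\,E^\#$, and since both are idempotent (property (2)), uniqueness of idempotents in an $\mathcal{H}$-class would give $E=E^\#$, a contradiction. Each round costs height $3\cdot|M|$, so the total is $3\cdot J\cdot|M|\leq 3\cdot|M|^2$. This strict $\mathcal{J}$-descent is the one genuinely new ingredient beyond Simon's theorem, and it is absent from your argument; without it your bookkeeping has no reason to close, since nothing prevents iteration from producing new types that trigger further long idempotent runs, round after round.

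Two secondary problems. First, your induction measure --- the set of $\phi$-images of nonempty factors --- does not survive iteration nodes: $E^\#$ need not be the image of any factor of $w$, so introducing an iteration node can enlarge the ``seen'' set rather than shrink it, and the induction as stated does not go through. Second, your account of where the factor $|M|^2$ comes from (``two-sided grouping and the extra splitting into product nodes before and after the idempotent block'') is not the actual source: that kind of splitting is already absorbed into Simon's $3\cdot|M|$ bound; the extra factor of $|M|$ is precisely $J$, the number of $\mathcal{J}$-classes, i.e.\ the number of rounds of the outer recursion. Relatedly, you invoke properties (1)--(4) of Lemma~\ref{lem:properties} only to argue that labels are computed consistently, whereas their essential role in the proof is to drive the $\mathcal{J}$-descent argument just described.
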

We give a proof in the section that follows.

We will use $k$-decomposition trees in both the proof of soundness of the
belief monoid algorithm in Section~\ref{sec:soundness}, and its completeness in
Section~\ref{sec:completeness}. For soundness we construct $k$-decomposition
trees for words over the alphabet whose letters are pairs, where the left
component is a letter in $\atma$ and the right component is a stationary
strategy for the minimizer, with respect to the extended Markov monoid
$\emmon$. On the other hand for completeness we use $k$-decomposition trees over
the alphabet $\atma$ with respect to the monoid $\emon$. The $k$-decomposition
trees are used to prove lower and upper bounds on the probabilities of certain
outcomes.
\subsection{The Height of $k$-decomposition Trees}
\newcommand{\reld}{\mathcal{D}}
\newcommand{\relj}{\mathcal{J}}
\newcommand{\relh}{\mathcal{H}}
\newcommand{\rell}{\mathcal{L}}
\newcommand{\relr}{\mathcal{R}}

This section is devoted to proving Theorem~\ref{theo:decomp height}.

We start with Simon's factorization forest theorem. A {\em Ramseyan 
decomposition tree} is the same as a $k$-decomposition tree except that it does
not have iteration nodes, and there is no restriction on the number of children
of idempotent nodes. 

Let $A$ be a finite alphabet, $(M,\cdot)$ a finite monoid, and $\phi$
a morphism from $A^*$ to $M$. Then Simon's factorization forest theorem says:
\begin{theorem}[\cite{SIM94}]
  \label{theo:simon}
  For all $w\in A^*$ there exists a Ramseyan decomposition tree of $w$ with
  respect to $M$ whose height is at most $3\cdot |M|$.
\end{theorem}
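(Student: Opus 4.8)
The plan is to prove the bound by well-founded induction on $|M|$, reducing every word to a word over a \emph{proper divisor} of $M$ and re-inflating the resulting forest while adding only a bounded number of levels per eliminated element, so that the heights telescope to $3|M|$. Before starting I would recall the two facts about finite monoids that drive the argument: Green's Lemma --- if $s\,\mathcal{R}\,t$ then the right translations induce mutually inverse bijections between the $\mathcal{L}$-classes of $s$ and $t$, so inside a fixed $\mathcal{R}$-class right multiplication permutes the $\mathcal{H}$-classes --- and \emph{stability} --- in a finite monoid $s\le_{\mathcal{R}}t$ together with $s\,\mathcal{J}\,t$ forces $s\,\mathcal{R}\,t$, and dually for $\mathcal{L}$ --- together with the fact that the $\mathcal{H}$-class $H_e$ of an idempotent $e$ is a group with neutral element $e$.

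The first structural remark is that reading $w=a_1\cdots a_n$ from left to right makes the prefix images $\mathcal{J}$-descend: with $p_i=a_1\cdots a_i$ we have $\phi(p_{i+1})=\phi(p_i)\phi(a_{i+1})\le_{\mathcal{J}}\phi(p_i)$, so the chain $\phi(p_0),\dots,\phi(p_n)$ crosses at most one $\mathcal{J}$-class boundary per drop. I would split $w$ at these boundaries into maximal \emph{$\mathcal{J}$-uniform} pieces, on each of which every prefix image stays in a single $\mathcal{J}$-class; by stability this strengthens to all prefix images being $\mathcal{R}$-equivalent on the piece. Concatenating the at most $|M|$ pieces with a right comb of product nodes costs only a bounded number of extra levels that the final accounting absorbs, so the whole problem reduces to bounding the forest height of one $\mathcal{R}$-stable word.

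For an $\mathcal{R}$-stable word, Green's Lemma says each letter acts on the fixed $\mathcal{R}$-class by permuting its $\mathcal{H}$-classes; tracking the sequence of $\mathcal{H}$-classes of the successive prefixes and cutting at the returns to the $\mathcal{H}$-class of some idempotent $e$ isolates factors whose image lies in the group $H_e$. The key move is then to re-encode these factors as letters over a proper divisor $M'$ of $M$ (a local divisor at $e$, equivalently the group $H_e$ together with its egg-box context), which satisfies $|M'|<|M|$ whenever $e$ is not the unit, the top $\mathcal{J}$-class of units forming the base case; I would apply the induction hypothesis to get a forest of height $\le 3|M'|$ over $M'$ and pull it back to $M$, wrapping each re-encoded idempotent layer inside one genuine Ramseyan idempotent node of type $e$. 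Careful accounting should show that each of the three Green directions --- the $\mathcal{R}$-descent, the $\mathcal{L}$/egg-box row structure, and the idempotent grouping --- contributes at most one level per eliminated element, yielding the bound $3|M|$.

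The step I expect to be the main obstacle is precisely this re-encoding over a proper divisor together with pinning the constant to $3$. Simon's original argument gives $9|M|$; bringing it down to $3|M|$ requires that the passage through $H_e$ genuinely costs a strictly smaller monoid, so that the induction measure $|M|$ decreases and the budget is consumed element-by-element rather than $\mathcal{J}$-class-by-$\mathcal{J}$-class --- the latter would fail, since a single $\mathcal{J}$-class can contain an arbitrarily large group whose own factorization cannot be done in constantly many levels. Verifying that the local-divisor re-encoding faithfully reflects the products of the original factors, and that the nodes it produces are legal idempotent nodes of the correct type, is the delicate bookkeeping on which the whole bound rests.
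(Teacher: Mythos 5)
First, a point of comparison: the paper does not prove Theorem~\ref{theo:simon} at all --- it is imported as a black box from the literature and used only as an ingredient in the proof of Theorem~\ref{theo:decomp height} (the $3\cdot|M|$ constant is in fact due to later refinements of Simon's argument by Kufleitner and Colcombet; Simon's original bound was $9\cdot|M|$, as you correctly note). So your attempt must stand on its own, and as written it has a genuine gap at exactly the step you flag as delicate. The first problem is that your reduction to ``one $\mathcal{R}$-stable word'' controls the wrong objects: the forest you must build for a piece $u$ of $w$ is labeled by images of \emph{infixes} of $u$, while your $\mathcal{J}$-uniformity and stability argument only constrains the images of the \emph{global prefixes} $\phi(p_i)$. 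An infix of a prefix-$\mathcal{R}$-stable word can evaluate strictly $\mathcal{J}$-above the class of the prefixes (consider a letter mapped to the unit of $M$), so ``cutting at the returns to the $\mathcal{H}$-class of some idempotent $e$'' does not isolate factors with image in $H_e$: from $\phi(p_i)\,m=\phi(p_j)$ with $\phi(p_i),\phi(p_j)\in H_e$ you only get that $m$ stabilizes $H_e$ (it acts through the Schützenberger group), not that $m\in H_e$. The standard repair is to work with \emph{smooth} words, in which every infix evaluates inside the fixed $\mathcal{J}$-class, to handle excursions above that class in the outer induction, and to invoke the location theorem (for $s\,\mathcal{J}\,t$, the product $st$ stays in the class iff $L_s\cap R_t$ contains an idempotent) to place the factors between equivalent cut points inside a single group $\mathcal{H}$-class.

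The second problem is that even once the factors lie in the group $H_e$, you cannot ``wrap them inside one genuine Ramseyan idempotent node of type $e$'': by definition an idempotent node requires every child to have type \emph{exactly} the idempotent $e$, and membership in $H_e$ is far weaker. This is where the real work sits: inside the group one cuts at positions where the prefix \emph{value} repeats, so that each factor between consecutive cuts evaluates precisely to the neutral element $e$, and one inducts on the set of prefix values used; this is what makes the budget element-by-element rather than $\mathcal{J}$-class-by-$\mathcal{J}$-class (your remark that the latter must fail because a class can contain a large group is correct), and it is what yields the factor $3$ per element. Your local-divisor alternative is a known viable route, and the fact that $|M_c|<|M|$ for $c$ a non-unit is right, but your sketch neither verifies that the re-encoded node types pull back to legal idempotent nodes of type $e$ nor treats the group-of-units base case, so the central step of the proof is asserted rather than proved.
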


Let $\#$ be a mapping from the idempotent elements of $M$ to themselves such
that the properties (1) through (4) in Lemma~\ref{lem:properties} hold. We will
prove Theorem~\ref{theo:decomp height}. Let $w\in A^*$ and $k>2$. We will prove
that there exists a $k$-decomposition tree of height at most
$3\cdot J \cdot |M|$, where $J$ is the number of
$\relj$-classes\footnote{$\relj$-classes are an important notion in
  the study of finite semi-groups and monoids. We give precise definitions
  below.} which is smaller than $|M|$.

According to Simon's factorization theorem there exists a Ramseyan decomposition
tree $T$ of $w$ of height at most $3\cdot |M|$.

Let $\phi_0=\phi$ and $A_0=A$.  

Call any idempotent node with children $(u_1,E),(u_2,E),\ldots ,(u_j,E)$, a {\em primitive iteration node} if
$E^\#\neq E$ and $j\geq k$.  If $T$ does not have any primitive iteration node, then $T_0=T$ itself is a
$k$-decomposition tree, and we are done. Otherwise {\em for all} primitive iteration nodes that are maximal in depth ---
i.e. there are no other primitive iteration nodes below --- labeled $(w,E)$ with children labeled
$(w_1,E),\ldots ,(w_j,E)$ where $w=w_1\cdots w_j$ and $j\geq k$, add a new letter of the alphabet $A_1=A_0\cup \{a_w\}$,
and change the morphism $\phi_1(a_w)=E^\#$ and $\phi_1(v)=\phi_0(v)$ for all other $v\in A_0^*$. The element $E^\#$ is
in the monoid $M$ since $E$ is idempotent. Also transform the word $u$ by replacing the factor $w$ by the letter $a_w$
and call this word $u_1$.

Now from Theorem \ref{theo:simon} applied to $M$ with alphabet $A_1$, morphism
$\phi_1$ and word $u_1$ there exists a Ramseyan decomposition tree $T_1$ of
height at most $3\cdot |M|$ where now the factor $w$ in $T_0$ is replaced by the
leaf $(a_w,E^\#)$. If $T_1$ does not contain any primitive iteration node then
we are done, we can {\em unwrap} the leaf $(a_w,E^\#)$ by replacing it with the
subtree of $T_0$ rooted in the primitive iteration node $(w,E)$, except that it
keeps the label $(w,E^\#)$. But if $T_1$ contains some primitive iteration node
then we recurse the process described above which returns an new alphabet $A_2$,
morphism $\phi_2$ and Ramseyan decomposition tree $T_2$.

Since we are removing more and more factors of the word $u$ and adding them as
new letters, repeating the procedure described above, must produce some $T_k$
that does not contain any primitive iteration nodes. We claim that
\begin{claim}
\label{claim:1}
$k\leq J$ where $J$ is the number of $\relj$-classes of $M$.
\end{claim}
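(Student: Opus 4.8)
The plan is to exploit the single algebraic feature of the operation $\#$ that matters here: on idempotents it strictly lowers the $\relj$-class. Concretely, I would first prove the core fact that if $E$ is idempotent and $E^\#\neq E$, then $E^\#<_{\relj}E$, where $\leq_{\relj}$ is the usual $\relj$-preorder ($x\leq_{\relj}y$ iff $MxM\subseteq MyM$) and $<_{\relj}$ its strict part. One inclusion is immediate from property~(4) of Lemma~\ref{lem:properties}: since $EE^\#=E^\#E=E^\#$ we have $E^\#\in MEM$, hence $E^\#\leq_{\relj}E$. For strictness I would invoke the standard fact about Green's relations in a finite monoid: if $e,f$ are idempotents with $ef=fe=f$ (so $f$ lies below $e$ in the natural order on idempotents) and $e\,\relj\,f$, then $e=f$; indeed $f\leq_{\relr}e$ and $f\leq_{\rell}e$ together with $e\,\relj\,f$ force $e\,\relh\,f$, and the $\relh$-class of an idempotent is a group, so it contains a unique idempotent. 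Applying this with $e=E$, $f=E^\#$ rules out $E^\#\,\relj\,E$ unless $E^\#=E$, which combined with $E^\#\leq_{\relj}E$ gives the strict inequality.

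Next I would record the elementary monotonicity of $\leq_{\relj}$ under products, namely $xy\leq_{\relj}x$ and $xy\leq_{\relj}y$ for all $x,y$, since $MxyM\subseteq MxM\cap MyM$. The useful consequence concerns the letters introduced by the procedure. Whenever a letter $a_w$ with $\phi(a_w)=E^\#$ is created from a primitive iteration node labelled $(w,E)$, its type $E=\phi(w)$ is a product of the types of the letters occurring in $w$, so $E\leq_{\relj}\phi(b)$ for every letter $b$ of $w$, and the core fact upgrades this to $E^\#<_{\relj}\phi(b)$. In words: every introduced letter is strictly $\relj$-below every letter appearing in the factor it abbreviates.

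From here the bound $k\leq J$ follows by producing a strictly $\relj$-decreasing chain of length $k$. I would build, for a run of $k$ rounds, a sequence of introduced letters $c_1,\dots,c_k$ where $c_j$ is created at round $j$ and $c_j$ occurs in the factor abbreviated by $c_{j+1}$. Granting such a sequence, the previous paragraph gives $\phi(c_1)>_{\relj}\phi(c_2)>_{\relj}\cdots>_{\relj}\phi(c_k)$, a chain meeting $k$ distinct $\relj$-classes, whence $J\geq k$. The existence of this linking sequence is exactly where the bottom-up selection of nodes is used: each round collapses the \emph{deepest} (maximal-depth) primitive iteration nodes into fresh leaves, so a node processed at round $j+1$ sits strictly above some node collapsed at round $j$, i.e. its subtree contains a letter $c_j$; choosing $c_{j+1}=a_w$ for that node and tracing downwards yields the chain.

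The step I expect to be the main obstacle is precisely this linking: making rigorous that a primitive iteration node processed at round $j+1$ genuinely contains in its factor a letter introduced at round $j$, given that $T_j$ is produced afresh by Simon's theorem rather than by a purely local edit of $T_{j-1}$. The clean way around it is to observe that collapsing a deepest primitive iteration subtree to a leaf never increases the height of a Ramseyan tree, so the recomputation can be taken to respect the peeling order, and every primitive iteration surviving to round $j+1$ really had a round-$j$ letter created in its subtree. Once this containment is secured, the $\relj$-descent argument closes the claim, and since $J<|M|$ the resulting bound yields a $k$-decomposition tree of height at most $3\cdot J\cdot|M|\leq 3\cdot|M|^2$.
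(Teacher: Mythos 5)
Your algebraic core is exactly the paper's. The paper proves the strict descent $E^\#<_{\mathcal{J}}E$ (for idempotent $E$ with $E^\#\neq E$) precisely as you propose: property (4) of Lemma~\ref{lem:properties} gives $E^\#\leq_{\mathcal{J}}E$, then Lemma~\ref{lem:forjclass} upgrades the order relations plus $\mathcal{J}$-equivalence to $\mathcal{H}$-equivalence, and Lemma~\ref{lem:uniqueidem} (your ``the $\mathcal{H}$-class of an idempotent is a group'' is the same fact) forces $E=E^\#$, a contradiction; the product monotonicity $xy\leq_{\mathcal{J}}x$, $xy\leq_{\mathcal{J}}y$ is the paper's opening observation. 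Where you diverge is the counting step: the paper concludes in one sentence that since each round replaces a label $(w,E)$ by $(w,E^\#)$, there can be at most $J$ rounds, whereas you try to make this rigorous by exhibiting a chain of introduced letters $c_1,\dots,c_k$, one per round, each occurring in the factor abbreviated by the next, so that $\phi(c_1)>_{\mathcal{J}}\cdots>_{\mathcal{J}}\phi(c_k)$.

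You correctly located the crux, but your repair of the linking step does not work, so the gap remains. Since $T_j$ is produced by a fresh application of Theorem~\ref{theo:simon} to $u_j$, a primitive iteration node of $T_j$ may span a factor consisting entirely of original letters, whose type need not lie $\mathcal{J}$-below anything introduced at round $j$. This is not a hypothetical worry: choosing the Ramseyan trees adversarially (e.g.\ for a word $x^N$ with $\phi(x)=E$, $E^\#\neq E$, one can at every round exhibit a valid height-$3|M|$ tree containing a single primitive iteration node over just $k$ letters, the remaining run being absorbed by idempotent nodes of arity smaller than $k$), the number of rounds exceeds $J$; so the linking property is a statement about how the trees are chosen, and cannot follow from facts about $M$ alone. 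Your proposed fix --- that ``the recomputation can be taken to respect the peeling order'' because collapsing a subtree to a leaf does not increase height --- is unjustified: replacing a subtree of type $E$ by a leaf of type $E^\#$ changes the type of every ancestor (all products and idempotencies above were computed with $E$, not $E^\#$), so the locally edited tree is no longer a Ramseyan decomposition tree of the new word; this is exactly why the procedure re-invokes Simon's theorem rather than editing in place. To be fair, the paper's own proof is silent on this very point --- its one-sentence conclusion implicitly assumes that successive rounds iterate $\mathcal{J}$-smaller and smaller idempotents, which is what needs proof --- so your write-up matches the paper everywhere the paper actually gives an argument, and is more explicit about what is missing; but as it stands the chain construction, and hence the bound $k\leq J$, is not established.
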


So the number of times that we recurse the procedure above to transform a
Ramseyan decomposition tree to a $k$-decomposition tree whose height does not
depend on the length of the word \footnote{Notice that for this to be true at
  each step we have transform \em{all} the primitive iteration nodes of maximal
  depth and not one by one.}  but rather on the structure of $M$ itself. In fact
with Claim \ref{claim:1} the $k$-decomposition tree will have height at most
$3\cdot J\cdot |M|\leq 3\cdot |M|^2$.

To prove Claim \ref{claim:1} we need some results in the theory of
finite semigroups, in particular the Green's relations.

Let $U\in M$ an element of the monoid and define $UM=\{UU'\ \mid\ U'\in M\}$ and
$MUM=\{VUV'\ \mid\ V,V'\in M\}$.  Green's relations are four relations of
equivalence on the elements of $M$, denoted $\rell,\relr,\relj,\relh$ and
$\reld$ defined as follows. For a more detailed account of the Green's relations
and main theorems on finite semigroups see e.g. \cite{CLIF64,PIN86} etc.

\begin{definition}[Green's relations]
Let $U,V\in M$,
\begin{itemize}
\item $U \rell V \iff M U = M V,$
\item $U \relr V \iff UM = V M,$
\item $U \relj V \iff MUM =MVM,$
\item $U \relh V \iff U\rell V \text{ and }U\relr V,$ 
\item $U \reld V \iff \exists W \in M,U\relr W \text{ and }W\rell
  V\iff \exists W\in M,U\rell W \text{ and }W\relr V.$
\end{itemize}
\end{definition}
Where the last equivalence is because the relations $\relr$ and
$\rell$ commute.  Using these relations we can form partial orders
$\leq_\rell,\leq_\relr,\leq_\relj$, so that $U\leq_J V$ if and only if
$MUM\subseteq MVM$ and so on.

Observe that $UE^\#V\leq_\relj E^\#$, $UE^\#\leq_\relj E^\#$,
$E^\#V\leq_\relj E^\#$ for any two elements $U,V\in M$, so taking the product of
$E^\#$ with any other element, will produce another element of the monoid that
is smaller with respect to the relation $\leq_\relj$. Now we will show that for
any idempotent $E\in M$ if $E^\#\neq E$ then $E^\#<_\relj E$. This is Lemma 3 in
\cite{SIM94}.  Indeed the procedure above, when transforming primitive iteration
nodes, it replaces the label from $(w,E)$ to $(w,E^\#)$, hence the number of
times that this can be done is bounded by the number of $\relj$-classes hence
the Claim \ref{claim:1}.

Before we continue with the proof we need two lemmata from the theory of finite
monoids and semigroups.

\begin{lemma}
  \label{lem:uniqueidem}
  No $\relh$-class contains more than one idempotent element.
\end{lemma}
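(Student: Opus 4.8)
The plan is to take two idempotents $E$ and $F$ lying in the same $\relh$-class and show directly that $E=F$. The key idea is to evaluate the single product $FE$ in two different ways: the $\relr$-component of $\relh$ will force $FE=E$, while the $\rell$-component will force $FE=F$, and comparing the two conclusions yields $E=F$. Throughout I would use repeatedly that $M$, being a monoid, contains an identity $1$, so that every element $U$ satisfies $U=1\cdot U\in MU$ and $U=U\cdot 1\in UM$.

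First I would exploit $E\relr F$, that is $EM=FM$. Since $E=E\cdot 1\in EM=FM$, there is some $x\in M$ with $E=Fx$. Multiplying on the left by $F$ and using idempotence $F^2=F$ gives $FE=F(Fx)=F^2x=Fx=E$, so $FE=E$. Next I would exploit $E\rell F$, that is $ME=MF$. Since $F=1\cdot F\in MF=ME$, there is some $y\in M$ with $F=yE$. Multiplying on the right by $E$ and using $E^2=E$ gives $FE=(yE)E=yE^2=yE=F$, so $FE=F$.

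Combining the two computations, $E=FE=F$, which is exactly the claim. The argument is short and self-contained, so there is no real obstacle; the only point that requires care is the bookkeeping of sides. One must pair the $\relr$-relation (equal right ideals $EM=FM$) with the membership $E\in EM$, the expression $E=Fx$, and left multiplication by $F$; and dually pair the $\rell$-relation (equal left ideals $ME=MF$) with $F\in MF$, the expression $F=yE$, and right multiplication by $E$. Crossing these wires would break the cancellation via idempotence. The essential ingredient making both memberships available is precisely that $M$ has an identity, so no hypothesis beyond $\relh$-equivalence and idempotence is needed.
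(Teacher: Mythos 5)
Your proof is correct, but it is worth knowing that the paper does not actually prove this lemma at all: it defers both this statement and the companion lemma on $\mathcal{J}$-classes to standard semigroup-theory textbooks (Clifford--Preston, Pin). What you have written is the classical direct argument those references contain, and it is sound in the paper's setting: since $M$ is a monoid, the memberships $E\in EM$ and $F\in MF$ are immediate, so from $EM=FM$ you may write $E=Fx$ and left-multiply by the idempotent $F$ to get $FE=F^2x=Fx=E$, while from $ME=MF$ you may write $F=yE$ and right-multiply by the idempotent $E$ to get $FE=yE^2=yE=F$; comparing gives $E=F$. Your side bookkeeping is exactly right --- the $\mathcal{R}$-relation must be paired with left multiplication by $F$, and the $\mathcal{L}$-relation with right multiplication by $E$ --- and this is indeed the only delicate point. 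One minor remark: even in a semigroup without identity the same argument works after adjoining a unit (Green's relations are usually defined via $M^1$), so the identity is a convenience rather than an essential hypothesis. The net effect of your write-up is to make self-contained a step that the paper handles only by citation.
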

\begin{lemma}
  \label{lem:forjclass}
  Let $U,V\in M$,
  \begin{itemize}
  \item If $U\leq_\rell V$ and $U \relj V$ then $U\rell V$.
  \item If $U\leq_\relr V$ and $U \relj V$ then $U \relr V$.
  \end{itemize}
\end{lemma}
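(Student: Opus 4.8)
The plan is to prove the first bullet in detail; the second then follows by the left–right symmetry of Green's relations, replacing every left ideal $MX$ by the right ideal $XM$ and interchanging the roles of $\rell$ and $\relr$ throughout. For the first bullet, the inclusion $MU\subseteq MV$ already holds by the hypothesis $U\leq_\rell V$, so it suffices to establish the reverse inclusion $MV\subseteq MU$. Since $MU$ is a left ideal, this in turn reduces to producing a single membership $V\in MU$: once $V=mU$ for some $m\in M$, we get $MV=(Mm)U\subseteq MU$.

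To produce such a witness I would unpack the two hypotheses as equations. From $U\leq_\rell V$, i.e. $MU\subseteq MV$, we get $U\in MV$, so $U=aV$ for some $a\in M$ (using that $M$ is a monoid, hence $U\in MU$). From $U\relj V$, i.e. $MUM=MVM$, we get $V\in MUM$, so $V=bUc$ for some $b,c\in M$. Substituting the first equation into the second yields $V=b(aV)c=(ba)Vc$. Writing $d=ba$ and $q=c$, this is a self-referential identity $V=dVq$, which I would iterate to obtain $V=d^nVq^n$ for every $n\geq 1$.

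The crucial step, and the point where the finiteness of $M$ is used, is to eliminate the trailing factor $q^n$. Since $M$ is finite, some power $d^n$ with $n\geq 1$ is idempotent; fix such an $n$ and set $e=d^n$, so that $e^2=e$ and $V=eVq^n$. Left-multiplying this identity by $e$ gives $eV=e(eVq^n)=e^2Vq^n=eVq^n=V$, so $e$ fixes $V$ on the left. Hence $V=eV=d^nV=(ba)^nV=(ba)^{n-1}b\,(aV)=(ba)^{n-1}b\,U$, which exhibits $V$ as a left multiple of $U$, i.e. $V\in MU$. This yields $MV\subseteq MU$, and together with $MU\subseteq MV$ we conclude $MU=MV$, that is $U\rell V$.

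I expect the main obstacle to be exactly this elimination of $q^n$: a priori the identity $V=d^nVq^n$ only places $V$ in the two-sided ideal $MUM$, which we already knew from $U\relj V$ and so gives nothing new. The idempotent-power trick is what upgrades this to membership in the left ideal $MU$, by showing that the stabilised left cofactor $e=d^n$ acts as a left identity on $V$. (This is precisely the standard stability property of finite semigroups, $yx\relj x\Rightarrow yx\rell x$, here instantiated with $x=V$ and $y$ the left cofactor $ba$; I would either cite it or inline the short argument above.)
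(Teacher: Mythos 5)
Your proof is correct. The paper does not actually prove this lemma — it defers to standard semigroup-theory textbooks (Clifford--Preston, Pin) — and your argument (extract $U=aV$ from $MU\subseteq MV$ and $V=bUc$ from $MUM=MVM$, iterate to $V=(ba)^nVc^n$, then use an idempotent power of $ba$ to conclude $V=(ba)^{n-1}bU\in MU$) is precisely the standard stability proof found in those references, with the finiteness of $M$ invoked exactly where it is indispensable.
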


Proofs of Lemma \ref{lem:uniqueidem} and Lemma \ref{lem:forjclass}
can be found on any textbook on semigroup theory
e.g. \cite{CLIF64},\cite{PIN86}.

Now we are ready to prove that when iterating we descend the $\relj$-classes.

\begin{lemma}
  Let $E\in M$ an idempotent element such that $E^\#\neq E$. Then
  $E^\#<_\relj E$.
\end{lemma}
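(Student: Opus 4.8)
The plan is to prove the two halves of the strict inequality separately: first that $E^\#\leq_\relj E$, and then that $E^\#$ and $E$ do not lie in the same $\relj$-class. Together these give $E^\#<_\relj E$.

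For the first half I would invoke the absorption property~(4) of Lemma~\ref{lem:properties}, which in the monoid $M$ reads $EE^\#=E^\#E=E^\#$ for every idempotent $E$. The identity $E^\#=EE^\#$ exhibits $E^\#$ as a right multiple of $E$, so $E^\#M\subseteq EM$ and hence $E^\#\leq_\relr E$; symmetrically, $E^\#=E^\#E$ gives $ME^\#\subseteq ME$ and thus $E^\#\leq_\rell E$. Either of these inequalities immediately yields $E^\#\leq_\relj E$, since $UM\subseteq VM$ (respectively $MU\subseteq MV$) forces $MUM\subseteq MVM$.

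For the second half I would argue by contradiction, assuming $E^\#\relj E$. Combining this assumption with the two inequalities $E^\#\leq_\relr E$ and $E^\#\leq_\rell E$ already established, and applying Lemma~\ref{lem:forjclass} in both of its forms, I obtain $E^\#\relr E$ and $E^\#\rell E$, and therefore $E^\#\relh E$. Now $E$ is idempotent by hypothesis, and $E^\#$ is idempotent because $\#$ maps $E(M)$ into $E(M)$ (equivalently, by property~(2) of Lemma~\ref{lem:properties}). Two idempotents in a common $\relh$-class must coincide by Lemma~\ref{lem:uniqueidem}, so $E^\#=E$, contradicting the hypothesis $E^\#\neq E$. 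Hence $E^\#\not\relj E$, which together with $E^\#\leq_\relj E$ gives $E^\#<_\relj E$.

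The steps are all short once the absorption law is in hand, and there is no nontrivial computation; the work is in assembling the right combination of Green's-relation facts. The two points that require care are the passage from $\leq_\relr,\leq_\rell$ to $\relr,\rell$ inside a single $\relj$-class, where Lemma~\ref{lem:forjclass} is exactly the tool needed, and the final appeal to uniqueness of idempotents within an $\relh$-class. I expect the only genuine subtlety to be ensuring that $E^\#$ is truly idempotent in the abstract monoid $M$, so I would make explicit that properties~(2)--(3) of Lemma~\ref{lem:properties} (or, equivalently, the defining hypothesis that $\#$ is valued in $E(M)$) are assumed to hold for $M$ and not merely for the belief monoid.
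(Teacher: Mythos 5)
Your proof is correct and takes essentially the same route as the paper's: the absorption law $EE^\#=E^\#E=E^\#$ (property (4) of Lemma~\ref{lem:properties}) gives $E^\#\leq_\relj E$, and under the assumption $E^\#\relj E$, Lemma~\ref{lem:forjclass} upgrades the $\rell$- and $\relr$-inequalities to an $\relh$-equivalence, so that Lemma~\ref{lem:uniqueidem} forces $E^\#=E$, contradicting the hypothesis. If anything, your write-up is cleaner, since you state the inequalities $E^\#\leq_\relr E$ and $E^\#\leq_\rell E$ with the correct orientation and derive each explicitly, whereas the paper's phrasing of this step ($E\leq_\rell E^\#=EE^\#$) is garbled and appeals to duality for the $\relr$ half.
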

\begin{proof}
  Since $M$ fulfills the properties in Lemma~\ref{lem:properties}, in particular
  property (4), $EE^\#E=E^\#$ hence it follows that $E^\#\leq_\relj E$. We
  assume $E\relj E^\#$ and get a contradiction.  Regard that
  $E\leq_\rell E^\#=EE^\#$, therefore --- since $M$ is finite --- from Lemma
  \ref{lem:forjclass}, $E\rell E^\#$. The argument that $E\relr E^\#$ is
  dual. Consequently $E\relh E^\#$. Since both $E$ and $E^\#$ are idempotents in
  the same $\relh$-class, Lemma \ref{lem:uniqueidem} implies that $E=E^\#$ which
  is a contradiction.
\end{proof}

This concludes the proof of Theorem~\ref{theo:decomp height} and gives us a bound on the height of $k$-decomposition
trees that depends only on the size of the monoid $M$.


\section{Leaks}
\label{sec:leak}
Leaks were first introduced in \cite{FGO12} to define a decidable class of instances for the value $1$ problem
for probabilistic automata on finite words. The decidable class of 
\emph{leaktight automata} is general enough to encompass all known decidable classes for the value $1$ problem~\cite{GKO15}
and is optimal in some sense~\cite{leakoptimal}.
We extend the notion of leak from probabilistic automata to half-blind games
and prove that when a game does not contain any leak
then the belief monoid
algorithm decides the maxmin reachability problem.

We illustrate leaks in the simplified case of probabilistic finite automata.
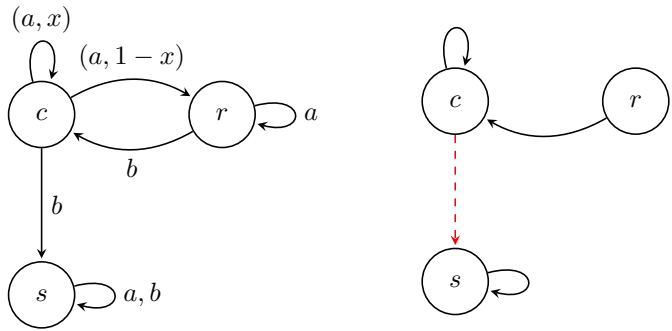
\begin{figure}[h!]
 \begin{center}
   \begin{minipage}{.45\textwidth}
   \begin{tikzpicture}[->,>=stealth,shorten >=1pt,auto,node distance=1.5cm,
     semithick, initial text=]
     \tikzstyle{vertex}=[circle,draw=black,minimum size=25pt,inner sep=0pt]
     \node[vertex] (c) {$c$};
     \node[vertex](s)[below=of c] {$s$};
     \node[vertex](r)[right = of c] {$r$};
     \path
     (c)  edge node {$b$} (s)
     (c) edge[loop above] node {$(a,x)$} ()
     (c) edge[bend left] node {$(a,1-x)$} (r)
     (r) edge[bend left] node {$b$} (c)
     (r) edge[loop right] node {$a$} ()
     (s) edge[loop right] node {$a,b$} ()
     ;
   \end{tikzpicture}
 \end{minipage}
     \begin{minipage}{.45\textwidth}
   \begin{tikzpicture}[->,>=stealth,shorten >=1pt,auto,node distance=1.5cm,
     semithick, initial text=]
     \tikzstyle{vertex}=[circle,draw=black,minimum size=25pt,inner sep=0pt]
     \node[vertex] (c) {$c$};
     \node[vertex](s)[below=of c] {$s$};
     \node[vertex](r)[right = of c] {$r$};
     \path
     (c) edge[loop above] node {} ()
     (c) edge[color=red,dashed] node {} (s)
     (r) edge[bend left] node {} (c)
     (s) edge[loop right] node {} ()
     ;
   \end{tikzpicture}
 \end{minipage}
\end{center}
\caption{A probabilistic finite automaton exhibiting a leak.}
\label{fig:leak}
\end{figure}

Probabilistic automata (PA) can be seen as the degenerate case of
half-blind games where the minimizer has no choice in any of the
states that she controls. Consider the PA (on the left) in
Figure~\ref{fig:leak}. When playing words from the sequence $\sq{a^{f(n)}}{n}$,
the probability of staying in state $c$ (if we start from state $c$) is
$x^{f(n)}$. Given that $0<x<1$ and that $f$ is an increasing function, we see
that this probability can be made arbitrarily small by choosing $n$ large
enough. Similarly playing words from the sequence $\sq{a^{f(n)}b}{n}$, starting
from the state $c$ the probability to go to the sink state $s$ is
$x^{f(n)}$. The question is what can we say about the outcome if we play words
from the sequence $\sq{(a^{f(n)}b)^{g(n)}}{n}$ for some increasing function
$g(n)$. For larger and larger $n$, is it the case that starting from the state
$c$ the probability of going to the sink state $s$ is bounded away from $1$? The
answer depends on the value of $x$ and the functions $f,g$. This behavior is
illustrated in Figure~\ref{fig:leak} on the right side. Each time $a^{f(n)}b$ is
played, the state $c$ {\em leaks} some probability to the sink state $s$,
denoted with the red dashed arrow. Having two or more leaks at the same time
complicates the matters further, and this is the difficulty making the
limit-sure decision problems undecidable in this setting.

Intuitively a leak happens when there is some communication between two recurrence classes with transitions that have a
small probability of occurring. Whether this small probability builds up to render one of the recurrence classes
transient is a computationally hard question to answer --- and in fact impossible in general. Other examples of leaks
can be found in~\cite{GKO15} and the link between leaks and convergence rates are discussed further
in~\cite{leakoptimal}.

We give a precise definition.


\begin{definition}[Leaks]
  An element of the extended Markov monoid $(U,\tldU)\in\emmon$ is a leak if it
  is idempotent and there exist $r,r'\in\stma$, such that: (1) $r,r'$ are
  $U$-recurrent, (2) $r\not\xrightarrow{U} r'$ and (3)
  $r\xrightarrow{\tldU} r'$. 

  An element of the extended belief monoid $\bu\in\emon$ is a leak if it
  contains $(U,\tldU)\in\bu$ such that $(U,\tldU)$ is a leak.

  A game is leaktight if its extended belief monoid does not contain any leaks.
\end{definition}

Note also that the question of whether a game is leaktight is decidable, since
this information can be found in the belief monoid itself.

\section{Correctness of the Belief Monoid Algorithm\label{sec:correctness}}

This section contains the technical bulk of the paper since it is dedicated to proving that when the game is leaktight
the belief monoid algorithm is both sound (a reachability witness is found implies $\underline\val(s)=1$) and complete
(no reachability witness is found implies $\underline\val(s)<1$).


\begin{theorem}\label{theo:main}
The belief monoid algorithm
solves the maxmin reachability problem
for half-blind leaktight games.
\end{theorem}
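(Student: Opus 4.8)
The plan is to prove Theorem~\ref{theo:main} by establishing the two directions of correctness separately: \emph{soundness} (the existence of a reachability witness $\bu\in\mon$ implies $\underline\val(s)=1$) and \emph{completeness} (the absence of any reachability witness implies $\underline\val(s)<1$). Both directions rely on the key bridge between the algebraic abstraction and the actual probabilities of the game, which is supplied by Theorem~\ref{theo:decomp height}: every word over the relevant alphabet admits a $k$-decomposition tree of height bounded by $3\cdot |M|^2$, independently of the word's length. The overall strategy is to use these bounded-height trees to set up inductions that transfer qualitative statements about the monoid elements ($s\xrightarrow{U}t$ or $s\not\xrightarrow{U}t$) into quantitative bounds on $\PP_s^{w,\tau}(t)$, with the leaktight hypothesis being exactly what makes the iteration step of the induction faithful.

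For \textbf{soundness}, I would fix a reachability witness $\bu\in\mon$ and exploit its construction as a sequence of products and iterations starting from the generators $\{\mathbf{a} \mid a\in\atma\}$. The idea is to read off from this construction a sequence of maximizer strategies $\sq{w_n}{n}$ (words in $\atma^*$) such that, against \emph{any} minimizer strategy $\tau\in\Sigma_2^p$, the induced matrix $B^{w_n,\tau}$ converges in behaviour to some element $U\in\bu$. Here I would use $k$-decomposition trees over the alphabet of pairs (letter in $\atma$, stationary minimizer strategy) with respect to the extended Markov monoid $\emmon$, and prove by induction on the tree that whenever $s\not\xrightarrow{U}t$ the probability $\PP_s^{w_n,\tau}(t)$ can be driven below $\epsilon$ by iterating the relevant idempotent factors enough times; the bounded tree height ensures the accumulated error stays controlled. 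Since $\bu$ is a reachability witness, every reachable $t$ lies in $F$, so $\PP_s^{w_n,\tau}(F)\geq 1-\epsilon$ for all $\tau$, giving $\underline\val(s)=1$.

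For \textbf{completeness}, I would argue contrapositively: assuming $\underline\val(s)=1$, I would produce a reachability witness in $\mon$. Given a sequence of near-optimal maximizer words $\sq{w_n}{n}$ witnessing value $1$, I would take $k$-decomposition trees of the $w_n$ over $\atma$ with respect to the \emph{extended belief} monoid $\emon$, whose bounded height lets me extract (by the pigeonhole principle on finitely many monoid elements) a single belief-monoid element $\bu$ that abstracts the limiting behaviour of the strategies. The inductive claim along the tree is the quantitative converse to soundness: if $s\xrightarrow{U}t$ for some $U\in\bu$, then $\PP_s^{w_n,\tau}(t)$ stays bounded away from $0$ by a uniform constant for the corresponding $\tau$. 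Combined with the value-$1$ hypothesis forcing all positive limiting probability onto $F$, this shows $\bu$ must be a reachability witness.

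The \textbf{main obstacle} is the faithfulness of the iteration operation $\#$ at the level of probabilities, and this is precisely where the leaktight hypothesis is indispensable. When an idempotent node in a $k$-decomposition tree is iterated, the abstraction declares certain transient states unreachable ($U^\#(s,t)=0$), but the true probability of such a transition is only small if no \emph{leak} silently pumps probability mass between recurrence classes across many iterations; the stored deleted edges $\tldU$ in the extended monoid are exactly the bookkeeping needed to detect this. I expect the crux of both the soundness lower bounds and the completeness upper bounds to be a careful induction showing that, under the leaktight assumption, the error introduced at each iteration node is summable (geometrically decaying in the number of children $j\geq k$), so that the total deviation over the whole tree remains below any prescribed $\epsilon$; without leaktightness this induction breaks exactly as illustrated by the leak example of Figure~\ref{fig:leak}.
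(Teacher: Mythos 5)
Your proposal is correct and follows essentially the same route as the paper: a split into soundness and completeness, with soundness proved via $k$-decomposition trees over the pair alphabet (letter, stationary minimizer strategy) with respect to the extended Markov monoid, and completeness via bounded-height decomposition trees over $\atma$ with respect to the extended belief monoid yielding a uniform lower bound (the paper's $\mu$-faithful abstraction), the leaktight hypothesis entering exactly at the iteration nodes in both inductions. The only cosmetic difference is that you phrase completeness contrapositively ($\underline\val(s)=1$ implies a witness exists, extracted by pigeonhole from near-optimal words), whereas the paper proves directly that every word admits a $\mu$-faithful abstraction with $\mu$ independent of the word's length and deduces $\underline\val(s)\leq 1-\mu$ when no witness exists; these are logically and technically equivalent.
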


Theorem~\ref{theo:main} is a direct consequence of Theorem~\ref{theo:soundness}
and Theorem~\ref{theo:completeness} which are given in the next two sections.

\subsection{Soundness}
\label{sec:soundness}
In this section we give the main ideas to prove soundness of the belied monoid algorithm.
\begin{theorem}[Soundness]
\label{theo:soundness}
Assume that the game is leaktight and that its extended belief monoid contains a
reachability witness. Then the set of final
states is maxmin reachable from the initial state.
\end{theorem}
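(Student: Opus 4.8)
The plan is to turn the reachability witness into a family of maximizer words $(w_n)_{n\in\NN}$ against which every minimizer strategy reaches $F$ with probability tending to $1$. Throughout I may restrict to pure time-dependent minimizer strategies $\tau\in\Sigma_2^p$, by the earlier theorem equating the two maxmin values. To obtain the words, I read off from the reachability witness $\bu\in\emon$ its defining expression as a term built from the generators $\{\mathbf a\mid a\in\atma\}$ by products and iterations, and unfold it: products become concatenations and each iteration is realized by repeating the corresponding subword a number $m(n)\to\infty$ of times. This yields $w_n\in\atma^*$ together with an expression skeleton that will guide the factorization of $w_n$ once a minimizer response is fixed.

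Next, fix $\tau\in\Sigma_2^p$ and form the word $\hat w_n$ over the pair alphabet $\atma\times(\stmi\to\atmi)$ whose $i$-th letter pairs the $i$-th action of $w_n$ with the time-$i$ slice of $\tau$; the morphism into $\emmon$ sends each pair to the corresponding generator $(B^{a,\tau},B^{a,\tau})$. I would then produce a $k$-decomposition tree of $\hat w_n$ with respect to $\emmon$ whose height is bounded independently of $n$ by Theorem~\ref{theo:decomp height}, refining the expression skeleton so that every iteration node genuinely groups copies of one common idempotent. The first thing to prove is that the root type $(W_n,\tldW_n)$ of this tree lies in $\bu$. This goes by induction on the expression of $\bu$: generators and products are immediate from the elementwise product of $\emon$, while at an iteration the iterated idempotents $E^\#$ produced by the tree's iteration nodes are exactly the ones absorbed by the belief-monoid iteration $\bu^\#=\langle\{UE^\#V\mid U,E,V\in\bu,\ EE=E\}\rangle$.

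The crux is a quantitative lemma, proved by induction on the height of the $k$-decomposition tree, relating the abstract type to actual probabilities: for a node with subword $u$ and type $(U,\tldU)$, if $s\xrightarrow{U}t$ then $\PP_s^{u,\tau}(t)$ exceeds a positive constant depending only on the height and the game, and if $s\not\xrightarrow{U}t$ then $\PP_s^{u,\tau}(t)$ is small and tends to $0$ as the multiplicities $m(n)$ at iteration nodes grow. The two bounds must be carried together, since the upper bound at an iteration node relies on the lower bound to locate the mass on the recurrent states. Leaf and product nodes are routine; the iteration node, where the type passes from an idempotent $E$ to $E^\#$, is the main obstacle of the whole argument. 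There I must show that repeating an $E$-block many times bleeds the mass off the $E$-transient states while keeping it bounded below on the $E$-recurrent ones, so that the limit is faithfully described by $E^\#$. This is precisely where leaktightness is indispensable: the absence of leaks forbids a small inter-class probability from accumulating across iterations and turning a recurrent class transient, which is the one mechanism that could falsify the bounds.

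Finally I assemble the pieces. Since $(W_n,\tldW_n)\in\bu$ and $\bu$ is a reachability witness, $s\xrightarrow{W_n}t$ forces $t\in F$; hence every non-final state $t$ satisfies $s\not\xrightarrow{W_n}t$, so the upper bound gives $\PP_s^{w_n,\tau}(t)\to 0$. Summing over the finitely many states of $\stma\setminus F$ yields $\PP_s^{w_n,\tau}(\stma\setminus F)\to 0$, and this is uniform in $\tau$ because the bound depends only on the (bounded) tree height and the multiplicities $m(n)$, neither of which involves $\tau$. Consequently $\inf_{\tau\in\Sigma_2^p}\PP_s^{w_n,\tau}(F)\to 1$, so $\underline\val(s)=1$ and $F$ is maxmin reachable from $s$.
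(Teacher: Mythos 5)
Your proposal is correct and follows essentially the same route as the paper: unfolding the witness's expression into maximizer words, pairing them with the minimizer's responses and mapping into $\emmon$, building bounded-height $k$-decomposition trees whose root type lands in the witness element, and proving the coupled lower/upper probability bounds by induction on the tree with leaktightness invoked precisely at iteration nodes (the paper's Lemmas~\ref{lem:lb} and~\ref{lem:ub}). The only divergence is presentational: the paper routes the conclusion through the ``reification'' and subsequence machinery of Lemma~\ref{lem:assocsub}, whereas you argue the uniform-in-$\tau$ bound directly, which indeed suffices.
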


Theorem~\ref{theo:soundness} is justifying the {\em yes} instances of the belief
monoid algorithm, i.e. if the algorithm replies yes, then indeed
$\underline\val(s)=1$. It is interesting to note that the equivalent soundness
theorem for probabilistic automata in~\cite{FGO12} does not make use of the leaktight hypothesis. 
Theorem~\ref{theo:soundness} follows as a corollary of:
\begin{lemma}
  \label{lem:assocsub}
  Given a game whose extended belief monoid is leaktight, with every element $\bu\in\mon$
 of its belief monoid  we can associate a sequence $\sq{u_n}{n}$,
  $u_n\in \Sigma_1$ such that for all $\sq{\tau_n}{n}$, $\tau_n\in\Sigma_2^p$
  there exists $U\in\bu$ and a subsequence
  $\sq{(u_n',\tau_n')}{n}\subset \sq{(u_n,\tau_n)}{n}$ for which
  \[
    U(s,t)=0\implies \lim_n\PP_s^{u_n',\tau_n'}(t)=0,
  \]
  for all $s,t\in\stma$.
\end{lemma}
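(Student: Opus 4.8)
The plan is to prove the statement by induction on the construction of $\bu$ as an element of $\mon$, using $k$-decomposition trees with respect to the extended Markov monoid $\emmon$ as the device that turns the bounded algebraic structure into uniform probabilistic estimates. First I would fix $k>3$ and define the sequence $\sq{u_n}{n}$ recursively: for a generator $\mathbf a$ set $u_n=a$; for a product $\bu=\bv\bw$ set $u_n=v_nw_n$ with $\sq{v_n}{n},\sq{w_n}{n}$ the sequences supplied by the induction hypothesis for $\bv,\bw$; and for an iteration $\bu=\mathbf e^\#$ set $u_n=(e_n)^{\ell(n)}$ with $\ell(n)\to\infty$, where $\sq{e_n}{n}$ realizes $\mathbf e$. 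Given any $\sq{\tau_n}{n}$ with $\tau_n\in\Sigma_2^p$, I pair each position of $u_n$ with the minimizer choice $d\in(\stmi\to\atmi)$ dictated there by $\tau_n$ (legitimate since $\Sigma_2^p$-strategies depend only on the current minimizer state and the elapsed time), obtaining a word $\omega_n$ over the finite alphabet $\atma\times(\stmi\to\atmi)$. Let $\phi$ be the morphism into $\emmon$ sending $(a,d)\mapsto(B^{a,\tau_d},B^{a,\tau_d})$.

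By Theorem~\ref{theo:decomp height}, each $\omega_n$ admits a $k$-decomposition tree $T_n$ with respect to $\emmon$ of height at most $3\cdot|\emmon|^2$. Because $\emmon$ is finite and the branching of product and idempotent nodes is bounded (only iteration nodes may have unboundedly many children), there are only finitely many possible \emph{abstract typed shapes} of such trees; by pigeonhole I pass to a subsequence $\sq{(u_n',\tau_n')}{n}$ along which this abstract shape, and in particular the root type $(U,\tldU)=\phi(\omega_n')$, is constant. The required element is the projection $U$. That $U\in\bu$ follows by a parallel induction on the value $\phi(\omega_n')$: products of $\mathbf e$-elements stay in $\mathbf e$ since $\mathbf e^2=\mathbf e$, so every idempotent type of an iteration node lies in $\mathbf e$ and contributes $E^\#\in\mathbf e^\#$, while the remaining product nodes keep the value inside $\bu$.

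It remains to prove the probabilistic estimate, which I would obtain from a two-sided \emph{faithfulness} claim proved by induction on the height of the nodes of $T_n'$: for a node of type $(E,\tldE)$ spanning the maximizer-subword $x$ (with its paired minimizer behavior) there is a uniform constant $c>0$ with $E(s,t)=1\Rightarrow\PP_s^{x,\cdot}(t)\ge c$, and $E(s,t)=0\Rightarrow\PP_s^{x,\cdot}(t)\to 0$ as $n\to\infty$. Leaves are exact, since $B^{a,\tau_d}(s,t)=1\iff\PP_s^{a,\tau_d}(t)>0$. Product and idempotent nodes combine at most $k$ children, so each propagates both bounds with only a bounded, $n$-independent multiplicative loss over $\stma$; as the height of $T_n'$ is uniformly bounded by $3\cdot|\emmon|^2$, the lower constants stay uniformly positive and the vanishing errors stay vanishing.

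The single hard case, and the only place the leaktight hypothesis enters, is the iteration node: children of type $E$ (idempotent), node type $E^\#$, and $j_n\to\infty$ children. Here I would analyze the near-Markovian dynamics of the $j_n$ successive blocks, whose one-step supports coincide with the relation $E$ up to errors that vanish by the inductive upper bound and are bounded below on $E$-edges by the inductive lower bound. For $E^\#(s,t)=0$ there are two situations. If $t$ is $E$-transient, a lower-bounded fraction of its mass escapes each block to a state from which it cannot return within the $E$-support, so over $j_n\to\infty$ blocks the mass at $t$ decays geometrically to $0$. If instead $t$ is $E$-recurrent but $s\not\xrightarrow{E}t$, the danger is that the per-block off-support errors, though individually $\to 0$, accumulate over the $j_n$ blocks and leak a non-vanishing amount of mass from $s$'s recurrence class into $t$'s; this is precisely a leak, and any such accumulation would be witnessed by $E$-recurrent states $r,r'$ with $r\not\xrightarrow{E}r'$ yet $r\xrightarrow{\tldE}r'$, contradicting leaktightness. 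Thus leaktightness forces the total cross-class leakage to $0$, while the recurrent reachable states retain uniformly bounded mass by a standard hitting/return argument, completing the induction. Reading the upper-bound half at the root, with $U$ the root type and any $s,t\in\stma$, yields $U(s,t)=0\Rightarrow\lim_n\PP_s^{u_n',\tau_n'}(t)=0$, which is the assertion; the iteration step together with this leaktight argument is the main obstacle.
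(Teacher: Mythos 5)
Your proposal follows the same general skeleton as the paper (recursively built word sequences, pairing letters with stationary minimizer decisions to get words over a finite alphabet, a morphism into $\emmon$, bounded-height $k$-decomposition trees, and leaktightness invoked at iteration nodes), but it has two genuine gaps, both at points where the paper does something essentially different. The first gap is the claim that the root type $U$ of your pigeonholed tree lies in $\bu$. An \emph{arbitrary} $k$-decomposition tree of $(u_n,\tau_n)$ factorizes the word in a way that need not respect the recursive block structure of $u_n$, so its internal types need not stay inside the sub-beliefs used to build $\bu$. For instance, with $\bu=(\mathbf{a}^2)^\#$ and $u_n=(aa)^n$, Simon's theorem may hand you a tree that iterates a factor straddling the block boundaries, producing a root type of the form $B\,E^\#B'$ where $B,B'$ are single-letter types in $\mathbf{a}$; such an element need not belong to $(\mathbf{a}^2)^\#$, whose generators require all three components to come from $\mathbf{a}^2$. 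Your ``parallel induction'' presupposes exactly what fails, namely that every node type lies in the relevant belief. The paper avoids this by \emph{constructing} the trees by induction on the structure of $\bu$: it first builds trees for the blocks (whose root types lie in the sub-belief by induction), then enlarges the alphabet with one letter per block, mapped to that block's root type, and builds a tree over block-letters, so that every type appearing above the blocks is manifestly a $\#$-expression over elements of the sub-belief. Moreover, to force the root type into $\mathbf{e}^\#$ rather than $\mathbf{e}$, the paper needs the block-level tree to contain at least one iteration node, which it guarantees by requiring $n>N(k)\geq k^{3\cdot|\emmon|^2}$; with your fixed construction nothing prevents a tree with only product and idempotent nodes at the relevant level.

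The second gap is the fixed $k$. Your inductive ``faithfulness'' claim asserts that at \emph{every} node of type $E$ with $E(s,t)=0$ the probability vanishes as $n\to\infty$. This is false at iteration nodes whose number of children stays bounded (e.g. exactly $k$): after a bounded number of idempotent blocks, the mass sitting at an $E$-transient state is only bounded by a constant of the form $h\cdot(1-L^{|\stma|})^{\lfloor k/|\stma|\rfloor}$, which is bounded away from $0$ and does not decrease with $n$. Your escape hatch --- that iteration nodes have $j_n\to\infty$ children --- cannot hold for all of them: since your subsequence keeps a constant abstract shape while the word length grows, the shape must abstract away children counts, and only some iteration nodes are forced to grow. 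The paper's mechanism is different and is the heart of its proof: the upper bound it proves is \emph{uniform in the word} but shrinks as $k$ grows, and the notion of \emph{reification} guarantees that along a subsequence one finds, for infinitely many values of $k$, bounded-height $k$-decomposition trees with the same root type $U\in X$, $X\cap\bu\neq\emptyset$. Letting $k\to\infty$ (a diagonal extraction over $n$ and $k$) makes the bound $h\cdot(1-L^{|\stma|})^{\lfloor k/|\stma|\rfloor}$ vanish, which is precisely why the paper's thresholds are functions $N(k)$ rather than a single cutoff, and why decay is obtained in $k$, not in $n$. As written, your argument can only yield a constant upper bound at the root, not the limit $\lim_n\PP_s^{u_n',\tau_n'}(t)=0$.
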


We can prove Theorem~\ref{theo:soundness} as follows. We are given a game that
is leaktight and has a reachability witness $\bu\in\mon$, to whom we can
associate a sequence of words $\sq{u_n}{n}$ according to
Lemma~\ref{lem:assocsub}. If on the contrary there exists $\epsilon>0$ such that
$\underline\val(s)\leq 1-\epsilon$ then there exists a sequence of strategies
$\sq{\tau_n}{n}$ such that for all $n\in\NN$,
$\PP_s^{u_n,\tau_n}(F)\leq 1-\epsilon'$, for some $\epsilon'>0$. This
contradicts Lemma~\ref{lem:assocsub} because for the reachability witness we
have by definition that for all $U\in\bu$, $U(s,t)=1$ implies $t\in F$.

We give a short sketch of the main ideas utilized into proving Lemma~\ref{lem:assocsub} before continuing with its proof
in the section that follows.

To $\mathbf{a}\in\mon$, $a\in\atma$ we associate the constant sequence of words
$\sq{a}{n}$. To the product of two elements in $\mon$ we associate the
concatenation of their respective sequences, and to $\bu^\#\in\mon$ the sequence
$\sq{u_n^n}{n}$ is associated, given that $\sq{u_n}{n}$ is coupled with
$\bu$. Then we consider words whose letters are pairs $(a,\tau)$, where
$a\in\atma$ and $\tau$ is a strategy that maps $\stmi$ to $\atmi$, i.e. a pure
and stationary strategy, and give a morphism from these words to the extended
Markov monoid $\emmon$. This allows us to construct $k$-decomposition trees of
such words with respect to $\emmon$. Then the $k$-decomposition trees are used
to prove lower and upper bounds on the outcomes of the game under the strategy
choices given by the word of pairs. The main idea is that we can construct for
longer and longer words, $k$-decomposition trees for larger and larger $k$,
thereby making sure that the iteration nodes have a large enough number of
children which enables us to show that the probability of being in transient
states is bounded above by a quantity that vanishes in the limit.
\subsubsection{Proof of Lemma~\ref{lem:assocsub}}
\newcommand{\typ}{\mathfrak{T}}
\newcommand{\fra}{\mathfrak{A}}
\newcommand{\pn}{\sq{p_n}{n}}
\newcommand{\pnp}{\sq{p_n'}{n}}
\newcommand{\squ}{\sq{u_n}{n}}
\newcommand{\sqv}{\sq{v_n}{n}}
\newcommand{\taun}{\sq{\tau_n}{n}}
\newcommand{\taupn}{\sq{\tau'_n}{n}}

Denote by $\Sigma_2'$ the set of pure and stationary strategies for the
minimizer, i.e. functions from $\stmi$ to $\atmi$. Let
\[
  \fra=\{(a,\alpha)\ \mid\ a\in\atma, \alpha\in\Sigma_2'\}.
\]

Note that $\fra$ is a finite set. Define $\phi$ the morphism from $\fra$ to
$\emmon$, that maps $(a,\alpha)$ to $(B^{a,\alpha},B^{a,\alpha})$. Since $a$ is
a single letter, taking stationary strategies is the same as taking strategies
from the set $\Sigma_2^p$ since what the strategy plays after the first turn
does not matter. Given a word $u\in \Sigma_1$ and a strategy
$\tau\in \Sigma_2^p$, the pair $(u,\tau)$ can be seen as a word over the
alphabet $\fra$. I.e. if $u=a_1\cdots a_n$ and $\tau(n)=\alpha_n$, $n\in\NN$, we
see $(u,\tau)$ as $(a_1,\alpha_1)(a_2,\alpha_2)\cdots (a_n,\alpha_n)$.

Given $p\in A^*$, $k>2$ and $h\in\NN$, let $T_k^h(p)$ be the set of all
$k$-decomposition trees of $w$ with respect to $\emmon$ whose height is at most
$h$. Denote by $\typ_k^h(w)\subseteq \emmon$ the set of types with which the
root nodes of the trees in $T_k^h(w)$ are labeled. Note that $T_k^h(w)$ and
consequently $\typ_k^h(w)$ can be empty, if $h$ is too small.

We define the notion of {\em reification}, which intuitively makes precise what
it means for a sequence of strategy choices (i.e. a sequence of words over the
alphabet $\fra$) to realize the abstraction that is provided by a subset of
$\emmon$. 
\begin{definition}[Reification]
  Let $\pn$ be a sequence of words over the alphabet $\fra$, $h\in\NN$ and
  $X\subseteq\emmon$. We say that $\pn$ {\em reifies} $X$ with height $h$ if
  there exists a subsequence $\pnp\subset \pn$ and $k\in\NN$ such that
  \[
    \typ_k^h(p'_n)=X, \text{ for }n\in\NN,
  \]
  moreover for infinitely many $i>k$, $X$ appears infinitely often in the
  sequence $\sq{\typ_i^h(p'_n)}{n}$.
\end{definition}

Reification is important because given that a sequence $\pn$ reifies some
$X\in\emmon$ with height $h$, we can prove lower and upper bounds on the
outcomes of the game under $\pn$ that {\em agree} with some element in $X$.

First we show that any sequence of words over the alphabet $\fra$ reifies some
$X\subseteq\emmon$.

\begin{lemma}
  Let $\pn$ be a sequence of words over the alphabet $\fra$. There exists
  $h\in\NN$ and $X\subseteq\emmon$ such that $\pn$ reifies $X$ with height $h$.
\end{lemma}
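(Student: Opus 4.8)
The plan is to fix the height once and for all at $h = 3\cdot|\emmon|^2$ and then to extract the required subsequence together with the recurrent type $X$ by a nested pigeonhole argument followed by a diagonalization. The starting point is Theorem~\ref{theo:decomp height} applied to the finite monoid $\emmon$ with the morphism $\phi$: for every word $p$ over $\fra$ and every $k\geq 4$ there is a $k$-decomposition tree of $p$ with respect to $\emmon$ of height at most $h$, so that $\typ_k^h(p)$ is a nonempty subset of $\emmon$. Since $\emmon$ is finite, every $\typ_k^h(p)$ lives in the finite set $2^\emmon\setminus\{\emptyset\}$, which is exactly what makes the pigeonhole steps legitimate.

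First I would build, by induction on $k\geq 4$, a nested chain of infinite index sets $N_4\supseteq N_5\supseteq\cdots$ and values $Y_k\in 2^\emmon$ such that $\typ_k^h(p_n)=Y_k$ for every $n\in N_k$. For the base case, the sequence $\sq{\typ_4^h(p_n)}{n}$ takes finitely many values, so some value $Y_4$ is attained on an infinite set $N_4$. For the inductive step, given the infinite set $N_k$, the sequence $\sq{\typ_{k+1}^h(p_n)}{n\in N_k}$ again takes finitely many values, so some $Y_{k+1}$ is attained on an infinite subset $N_{k+1}\subseteq N_k$. This produces the chain and pins the value of $\typ_k^h$ to $Y_k$ along $N_k$.

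Next I would diagonalize: choose indices $n_4<n_5<\cdots$ with $n_j\in N_j$ for each $j$, which is possible because every $N_j$ is infinite. As the sets are nested, $n_j\in N_k$ whenever $j\geq k$, so $\typ_k^h(p_{n_j})=Y_k$ for all $j\geq k$; that is, along the diagonal subsequence $\sq{p_{n_j}}{j}$ the quantity $\typ_k^h$ is eventually constant, equal to $Y_k$, for every fixed $k$. Now $\sq{Y_k}{k\geq 4}$ lies in the finite set $2^\emmon$, so there is a value $X$ with $K:=\{k : Y_k=X\}$ infinite; I would set $k_0=\min K$ and take as the reifying subsequence the tail $r_m:=p_{n_{k_0+m}}$. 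By construction $\typ_{k_0}^h(r_m)=Y_{k_0}=X$ for every $m$, which gives the first clause of reification with $k=k_0$; and for each of the infinitely many $i\in K$ with $i>k_0$ we have $\typ_i^h(r_m)=Y_i=X$ for all $m\geq i-k_0$, so $X$ appears cofinitely, hence infinitely, often in $\sq{\typ_i^h(r_m)}{m}$, which is the second clause.

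I expect the only real obstacle to be organizing the two quantifier-alternating clauses of reification so that a single subsequence witnesses both at once: the definition demands that $\typ_k^h$ be \emph{constant} equal to $X$ at one level $k$, while simultaneously $X$ \emph{recurs} infinitely often at infinitely many higher levels. The nested refinement keeps control of every level $k$ at the same time, the diagonal sequence collapses all these controls onto one subsequence, and choosing $X$ as a value recurrent among the $Y_k$ with $k_0$ the least level realizing it is what ties the two clauses together. Invoking Theorem~\ref{theo:decomp height} to guarantee nonemptiness of every $\typ_k^h(p_n)$ is the one external ingredient needed to make the pigeonhole arguments valid.
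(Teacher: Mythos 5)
Your proposal is correct: every step checks out, including the two clauses of reification for the diagonal tail $r_m=p_{n_{k_0+m}}$ (the first clause at level $k_0=\min K$ since $N_{k_0+m}\subseteq N_{k_0}$, the second because each $i\in K$ with $i>k_0$ makes $X$ appear cofinitely often along the tail). Your route, however, is organized quite differently from the paper's. The paper runs an \emph{escalating} pigeonhole: it stabilizes the type at the lowest level on a subsequence to get $X_1$; if $X_1$ recurs at infinitely many higher levels it stops, and otherwise it moves to a higher level $k_1$, refines the subsequence, extracts $X_2$, and repeats, concluding that "the process needs only a finite number of repetitions" because $2^\emmon$ is finite. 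That termination step is left implicit and actually needs the observation that a value which appears only finitely often at all levels beyond $k_j$ along a subsequence keeps that property on every further subsequence, so the successive $X_j$ are pairwise distinct. Your proof sidesteps this entirely: by stabilizing \emph{all} levels simultaneously (the nested chain $N_4\supseteq N_5\supseteq\cdots$ with values $Y_k$), diagonalizing, and then applying the pigeonhole once more to the sequence $\sq{Y_k}{k}$ itself, you obtain the recurrent value $X$ in one shot, with no iterative process to terminate. The cost is the extra bookkeeping of the diagonal extraction; the benefit is that the finiteness argument appears exactly once, in a form that is fully rigorous. Both proofs rest on the same two external ingredients — Theorem~\ref{theo:decomp height} with the uniform height $h=3\cdot|\emmon|^2$ to guarantee $\typ_k^h(p_n)\neq\emptyset$, and the finiteness of $2^\emmon$ — so the difference is genuinely one of proof architecture rather than of underlying principle, but yours is the tighter of the two.
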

\begin{proof}
  Setting $h=3\cdot |\emmon|^2$, for all $k>2$ and $n\in\NN$, we have
  $\typ_k^h(\pn)\neq\emptyset$. This follows from Theorem~\ref{theo:decomp
    height}.

  Since $\emmon$ is finite there exists $X_1\subseteq\emmon$ and a subsequence
  $\pnp\subset \pn$ such that $\typ_3^h(p'_n)=X_1$ for all $n\in\NN$. If
  moreover there are infinitely many $i>3$ such that $X_1$ appears infinitely
  often in the sequence $\sq{\typ_i^h(p'_n)}{n}$, then the lemma
  concludes. Otherwise there exists some $k_1\in\NN$ such that for all $i>k_1$,
  $X_1$ appears only finitely often in the sequence
  $\sq{\typ_i^h(p'_n)}{n}$. Now choose some other subsequence
  $\sq{p_n''}{n}\subset\pnp$ and $X_2\subseteq\emmon$ such that
  $\typ_{k_1}^h(p''_n)=X_2$ for all $n\in\NN$. Since $\emmon$ is finite, the
  process above needs only a finite number of repetitions in order to find some
  $X\subseteq\emmon$ such that $\pn$ reifies $X$ with height $h$.
\end{proof}

Intuitively the lemma above means that under all sequences of strategy choices
for the players it is possible to find a subsequence under which the outcome of
the game is explained by an element of the extended Markov monoid. But this does
not say anything about the belief monoid. We would like for all $\bu\in\emon$ to
have a sequence of words $\squ$ over the alphabet $\atma$ such that for
any sequence of strategies $\taun$, $\pn=\sq{(u_n,\tau_n)}{n}$ reifies
some $X\subseteq \emmon$ and moreover $X$ and $\bu$ have at least one element in
common. This is the purpose of the next lemma.

\begin{lemma}
  Let $\bu\in\emon$, then there exists a sequence of words $\squ$ over
  the alphabet $\atma$, $h\in\NN$ and a function $N : \NN\to\NN$, such that for
  all sequence of strategies $\taun$ in $\Sigma_2^p$, $k>2$ and
  $n>N(k)$,
  \[
    \typ_k^h\big((u_n,\tau_n)\big)\cap \bu\neq\emptyset.
  \]
\end{lemma}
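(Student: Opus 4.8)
The plan is to prove the statement by structural induction on the derivation of $\bu\in\emon$, using exactly the association of sequences sketched above: the generator $\mathbf{a}$ is given the constant sequence $\sq{a}{n}$ and the unit $\{\{\mathbf{1}\}\}$ the empty word; a product $\bv\bw$ is given the concatenation $\sq{v_nw_n}{n}$; and an iteration $\bv^\#$ is given $\sq{v_n^n}{n}$, where $\sq{v_n}{n}$ is the sequence attached to $\bv$. Along with each sequence the induction produces a height $h$ and a threshold function $N$.

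The base and product cases are routine. For $\bu=\mathbf{a}$ take $u_n=a$: for any $\tau_n$ the word $(a,\tau_n)$ over $\fra$ is a single letter whose only $k$-decomposition tree is the leaf of type $(B^{a,\alpha},B^{a,\alpha})\in\mathbf{a}$, so $h=1$ and $N$ is arbitrary. For $\bu=\bv\bw$ the induction hypothesis gives sequences $v_n,w_n$, heights $h_v,h_w$ and thresholds $N_v,N_w$; set $u_n=v_nw_n$. Given $\tau_n$, split it into the strategy $\tau_n^v$ acting on the $v_n$-block and the shifted strategy $\tau_n^w$ acting on the $w_n$-block, so that $(u_n,\tau_n)=(v_n,\tau_n^v)(w_n,\tau_n^w)$ over $\fra$, both pieces lying in $\Sigma_2^p$. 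For $n>N(k):=\max(N_v(k),N_w(k))$ the hypothesis yields trees of height at most $h_v,h_w$ with root types $V\in\bv$, $W\in\bw$; placing a product node on top gives a $k$-decomposition tree of $(u_n,\tau_n)$ of height at most $1+\max(h_v,h_w)=:h$ with root type $VW\in\bv\bw=\bu$.

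The crux is the iteration case $\bu=\bv^\#$, with $\bv$ idempotent and $v_n,h_v,N_v$ given by induction; set $u_n=v_n^n$. Fix $\taun$ and split $\tau_n$ into blocks $\tau_n^{(1)},\dots,\tau_n^{(n)}$ so that $(u_n,\tau_n)=\prod_{j=1}^n(v_n,\tau_n^{(j)})$ over $\fra$. For $n>N_v(k)$ the hypothesis supplies, for each $j$, a $k$-decomposition tree $T_j$ of $(v_n,\tau_n^{(j)})$ of height at most $h_v$ with root type $V_j\in\bv$. Treating $V_1\cdots V_n$ as a word over the finite alphabet $\emmon$ with the evaluation morphism, Theorem~\ref{theo:decomp height} provides a $k$-decomposition tree $S$ with respect to $\emmon$ of height at most $H:=3|\emmon|^2$. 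If $S$ had no iteration node then every internal node would have at most $k-1$ children (product nodes have $2\le k-1$, idempotent nodes have $<k$), so $S$ would have at most $(k-1)^H$ leaves; hence, setting $N(k):=\max(N_v(k),(k-1)^H)$, for $n>N(k)$ the tree $S$, which has exactly $n$ leaves, must contain an iteration node. Substituting each leaf $V_j$ of $S$ by the subtree $T_j$ yields a $k$-decomposition tree of $(u_n,\tau_n)$ of height at most $H+h_v=:h$ whose root type equals that of $S$. To place this type in $\bv^\#$, I would prove by induction over the nodes of $S$ that a node whose subtree has no iteration node has type in $\bv$, while a node whose subtree contains one has type in $\bv^\#$: leaves have type in $\bv$; for product nodes use $\bv\bv=\bv$ (idempotency of $\bv$) and parts (2) and (4) of Lemma~\ref{lem:properties}, giving $\bv\bv^\#=\bv^\#\bv=\bv^\#$ and $\bv^\#\bv^\#=\bv^\#$; for idempotent nodes the common type $E$ is inherited; for an iteration node with children of common idempotent type $E$, if $E\in\bv$ then property (4) in $\emmon$ gives $E^\#=EE^\#E$, a generator of $\bv^\#$, while if $E\in\bv^\#$ then $E^\#\in\bv^\#$ by closure under iteration. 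Since $S$ contains an iteration node, its root type lies in $\bv^\#=\bu$, furnishing a witness in $\typ_k^h\big((u_n,\tau_n)\big)\cap\bu$.

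The main obstacle is this iteration case, and within it the two-sided argument just described: one must force an iteration node to appear (a counting argument exploiting the growth of the exponent $n$ in $u_n=v_n^n$ against the length-independent height bound $H$ of Theorem~\ref{theo:decomp height}) and simultaneously show that the mere presence of such a node drives the root type into $\bv^\#$ (a structural induction resting on the monoid identities of Lemma~\ref{lem:properties} at both the $\emon$ and the $\emmon$ level). The delicate bookkeeping is that $h$ must remain independent of $k$ and $n$ while $N$ is permitted to depend on $k$; this is exactly why the height contributions $H$ and $h_v$ are additive constants, whereas the leaf-count threshold $(k-1)^H$ is absorbed into $N(k)$.
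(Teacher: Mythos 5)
Your proof is correct and follows essentially the same route as the paper: the same association of sequences (constant for generators, concatenation for products, $\sq{u_n^n}{n}$ for iteration), the same inductive construction of trees via Theorem~\ref{theo:decomp height} applied to the word of root types, and the same counting argument (threshold absorbed into $N(k)$) forcing an iteration node. Your explicit structural induction showing that the presence of an iteration node places the root type in $\bv^\#$ is a welcome elaboration of a step the paper states only tersely ("the type can be written as a $\#$-expression of positive $\#$-height, hence lies in $\bu^\#$"), but it is a filled-in detail, not a different approach.
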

\begin{proof}
  We proceed by induction on the elements of $\emon$.
  \begin{itemize}
  \item \textbf{Base case.} For elements $\mathbf{a}\in\emon$, where
    $a\in\atma$, set the sequence of words to be the constant sequence
    $\sq{a}{n}$, set $h=1$ and $N$ to the constant function $N(k):=0$ for all
    $k\in\NN$. Then for all $\tau\in\Sigma_2^p$ and $k>2$, the unique
    $k$-decomposition tree of $(a,\tau)$ is the single leaf node whose type is
    in $\mathbf{a}$ by definition of the morphism $\phi$ and the definition of
    $\mathbf{a}$ itself.
  \item \textbf{Product.} Assume that the lemma is true for the two elements
    $\bu,\bv\in\emon$, for $\squ, h_u, N_u$ and $\sqv, h_v, N_v$
    respectively. We will show that it also holds for the element
    $\bu\bv\in\emon$, with the sequence of words $\sq{u_nv_n}{n}$,
    $h=\max\{h_u,h_v\}+1$, and $N(k)=\max\{N_u(k), N_v(k)\}$.

    Let $\taun$ be a sequence of strategies, $k>2$ and $n>N(k)$. Define $\taupn$
    to be the sequence of strategies that are shifted by the lengths of $u_n$,
    i.e. $\tau'_n(i)=\tau_n(i+|u_n|)$ for $n,i\in\NN$. Then by the induction
    hypothesis since $n>N_u(k)$ there exists a $k$-decomposition tree of length
    at most $h_u$ for $(u_n,\tau_n)$ whose root node is labeled by some
    $(U,\tldU)\in\bu$. Similarly there exists a $k$-decomposition tree of length
    at most $h_v$ for $(v_n,\tau'_n)$ whose root node is labeled by some
    $(V,\tldV)\in\bv$. Consequently we can construct a $k$-decomposition tree of
    length at most $\max\{h_u,h_v\}+1$ of $(u_nv_n,\tau)$ whose root node is
    labeled by $(U,\tldU)\cdot (U,\tldV)=(UV,\tldU\tldV)\in \bu\bv$, by making
    the root node a product node and add the two subtrees as children.
  \item \textbf{Iteration.} Assume that the lemma is true for some idempotent
    $\bu\in\emon$. Then there exists a sequence $\squ$, $h_u\in\NN$ and a
    function $N_u$ for which the lemma holds. We will prove that it also holds
    for $\bu^\#\in\emon$, the sequence $\sq{u_n^n}{n}$,
    $h=h_u+3\cdot |\emmon|^2$ and the function $N$ defined by
    $N(k):=N_u(k)+k^{3\cdot|\emmon|^2}$. Let $\taun$ be a sequence of
    strategies, $k>2$ and $n>N(k)$. Since $n>N_u(k)$ by the induction hypothesis
    we know that for all strategies $\tau$,
    $\typ_k^{h_u}\big((u_n,\tau)\big)\cap \bu\neq\emptyset$.

    For $0\leq i < n$ let $\tau_n^i$ be the shifted strategy by $u_n^i$,
    i.e. $\tau_n^i(j)=\tau_n(j+|u_n^i|)$, $j\in\NN$.

    For $0\leq i < n$, pick some
    $(U_i,\tldU_i)\in \typ_k^{h_u}\big((u_n,\tau_n^i)\big)\cap \bu$ and denote
    by $T_i$ the associated $k$-decomposition tree. We modify the alphabet
    $\fra$ and add $(u_n,\tau_n^i)$, $0\leq i< n$ as letters. At the same time
    modify the morphism $\phi$ by mapping $(u_n,\tau_n^i)$ to
    $(U_i,\tldU_i)$.Then applying Theorem~\ref{theo:decomp height} to the word
    $(u_n^n,\tau_n)$ we know that there exists a $k$-decomposition tree of
    height at most $3\cdot |\emmon|^2$, where the leaves are labeled by
    $\big((u_n,\tau_n^i\big),(U_i,\tldU_i))$, $0\leq i< n$. Plugging the trees
    $T_i$ instead of the leaves we construct a $k$-decomposition tree for
    $(u_n^n,\tau_n)$ of height at most $h=h_u+3\cdot |\emmon|^2$. Moreover since
    $n>N(k)\geq k^{3\cdot |\emmon|^2}$ there must exist at least one iteration
    node in this tree therefore the type of the root node can be written as a
    $\#$-expression whose $\#$-height is larger than 1. Consequently the type is
    in $\bu^\#$.
  \end{itemize}
\end{proof}

Observe that the height $h$ gets larger when we iterate, and the same for the
function $N$, but since we do this only a finite number of times (it is
induction on the finite monoid) we can give a uniform height $h$ and function
$N$ such that the lemma above holds for all elements of the extended belief
monoid with that height $h$ and function $N$.

Combining the two lemmata above we have:
\begin{lemma}
  \label{lem:reif}
  For all $\bu\in\emon$ we have a sequence of words $\squ$ over the alphabet
  $\atma$ such that for all $\taun$ there exists $X\subseteq\emmon$ such that
  $\pn=\sq{(u_n,\tau_n)}{n}$ reifies $X$ with height $h$. Moreover
  $X\cap \bu\neq\emptyset$.
\end{lemma}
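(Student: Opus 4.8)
The plan is to combine the two preceding lemmata mechanically, but the subtlety lies in reconciling the two different guarantees they provide. The second lemma gives us, for each $\bu\in\emon$, a sequence of words $\squ$ over the alphabet $\atma$, together with a height $h$ and a function $N$, such that for every sequence of minimizer strategies $\taun$, every $k>2$, and every $n>N(k)$, the set of achievable root types satisfies $\typ_k^h\big((u_n,\tau_n)\big)\cap\bu\neq\emptyset$. The first lemma, applied to the sequence $\pn=\sq{(u_n,\tau_n)}{n}$ over the alphabet $\fra$, gives us some $X\subseteq\emmon$ and some height $h'$ such that $\pn$ reifies $X$ with height $h'$. First I would normalize both statements to use the same uniform height. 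Since the first lemma sets $h'=3\cdot|\emmon|^2$ and a larger height can only enlarge the type sets $\typ_k^{h}(\cdot)$, I would take $h=\max\{h_\bu,3\cdot|\emmon|^2\}$ throughout so that both lemmata apply simultaneously with this common $h$.

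The main work is in showing that the witness $X$ produced by the first lemma can be taken to intersect $\bu$. The idea is to track which type $X$ is selected by the reification: by definition of reification, there is a subsequence $\pnp\subset\pn$ and some $k_0\in\NN$ such that $\typ_{k_0}^h(p'_n)=X$ for all $n$, and moreover $X$ appears infinitely often in $\sq{\typ_i^h(p'_n)}{n}$ for infinitely many $i>k_0$. Now I would fix any such $k>\max\{k_0,2\}$ and choose $n$ large enough that $n>N(k)$, which is possible since the condition $\typ_k^h(p'_n)=X$ (or $X\in\typ_k^h$ for the chosen indices) holds along the subsequence. For this pair $(k,n)$ the second lemma forces $\typ_k^h\big((u_n,\tau_n)\big)\cap\bu\neq\emptyset$. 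The delicate point is relating $\typ_k^h(p'_n)$, which is a single set $X$, to $\typ_k^h\big((u_n,\tau_n)\big)$ for the corresponding index in the original sequence: since $p'_n$ is exactly some $(u_m,\tau_m)$ from the original sequence, these type sets coincide for that index. Hence for that index we simultaneously have $\typ_k^h\big((u_m,\tau_m)\big)=X$ and $\typ_k^h\big((u_m,\tau_m)\big)\cap\bu\neq\emptyset$, which yields $X\cap\bu\neq\emptyset$ as required.

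The hardest part, and the step I would scrutinize most carefully, is the bookkeeping that makes the index $m$ realize both conditions at once. The reification guarantee picks out a subsequence and a threshold $k_0$, but the second lemma's guarantee is quantified over all sufficiently large $n$ depending on $k$. I must ensure I can find a single index $m$ lying in the reifying subsequence, with $k$ large enough that $\typ_k^h(p'_n)$ still equals $X$, and with $m>N(k)$; this requires using the clause in the definition of reification that $X$ recurs for infinitely many $i>k_0$ and infinitely often along the sequence, so that both the large-$k$ requirement and the large-$n$ requirement can be met on the same index. Once that common index is secured the conclusion is immediate, so I would state the lemma's height as this uniform $h$ and leave the routine subsequence extraction to the reader, emphasizing only the compatibility of the two quantifier structures.
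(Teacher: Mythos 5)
Your proposal is correct and matches the paper's approach: the paper proves Lemma~\ref{lem:reif} simply by stating that it follows from ``combining the two lemmata above,'' and your argument is exactly that combination, with the height normalization and the index bookkeeping (choosing $i>\max\{k_0,2\}$ from the recurrence clause of reification, then an index $m>N(i)$ along the reifying subsequence where $\typ_i^h$ equals $X$) spelled out explicitly. These details, which the paper leaves implicit, are handled correctly, including the monotonicity of $\typ_k^h$ in $h$ that reconciles the two lemmata's heights.
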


The {\em raison d'être} of the $k$-decomposition trees, and their bounded height
is because it allows us to give lower and upper bounds on certain outcomes of
the game as in the following lemma. This is where the leaktight hypothesis is
necessary. We start with the lower bound. The proof follows that of \cite{FGO12}.

\begin{lemma}
  \label{lem:lb}
  There exists a function $L:(\NN,\NN)\to\mathbb{R}$ mapping to the non-zero
  positive reals such that for all words $p=(w,\tau)$ over the alphabet $\fra$,
  $k>2$ and $T$ a $k$-decomposition tree of $p$ of height at most $h$ with the
  root node labeled by $(W,\tldW)\in\emmon$, given that $\emmon$ is leaktight
  then for all $s,t\in\stma$,
  \begin{align}
    \label{eq:lb1}
    W(s,t)=1&\implies \PP_s^{w,\tau}(t)\geq L(h,k), \text{ and }\\
    \label{eq:lb2}
    \tldW(s,t)=1&\iff \PP_s^{w,\tau}(t)>0.
  \end{align}
\end{lemma}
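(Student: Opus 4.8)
The plan is to prove \eqref{eq:lb1} and \eqref{eq:lb2} simultaneously by induction on the height $h$ of the tree $T$, tracking how the positive constant degrades at each level in order to exhibit a concrete $L(h,k)$. Throughout I use that the measure $\PP_s^{w,\tau}$ factorises along any concatenation $w=w_1w_2$ of the word, with the minimizer strategy shifted by $|w_1|$; this lets me route probability through a chosen intermediate state and compose the bounds coming from the two subtrees.

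The support equivalence \eqref{eq:lb2} is the easy part and needs no leaktightness. For a leaf $(a,\alpha)$ it is exactly the definition of $B^{a,\alpha}$. For a product node $\tldW=\tldU\tldV$ a positive-probability path from $s$ to $t$ exists iff it factors through some $s'$ with $\tldU(s,s')=1$ and $\tldV(s',t)=1$, which is the Boolean product. For idempotent and iteration nodes the point is that the right component is \emph{never} altered by iteration, since $(E,\tldE)^\#=(E^\#,\tldE)$, and $\tldE$ is idempotent (being the right component of an idempotent pair); hence the $j$-fold product has right component $\tldE^j=\tldE$, matching the actual support of the composed factors.

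For the lower bound \eqref{eq:lb1} I set $L(1,k)$ to be the least positive one-step probability $\PP_s^{a,\alpha}(t)$ over the finitely many letters of $\fra$. A product node routes through the midpoint $s'$ witnessing $UV(s,t)=1$, giving $\PP_s^{w,\tau}(t)\geq L(h-1,k)^2$; an idempotent node with $j<k$ children iterates this at most $k-1$ times, giving $\geq L(h-1,k)^{k-1}$ after first ensuring $L\leq 1$. These cases are routine and only multiply the constant by a power controlled by $k$.

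The main obstacle is the iteration node, labelled $(E^\#,\tldE)$ with $j\geq k$ children each of type $(E,\tldE)$, and this is the only place leaktightness enters. Writing $P_1,\dots,P_j$ for the factor transition matrices (each stochastic, supported exactly on $\tldE$ by the inductive \eqref{eq:lb2}, with $P_i(s',t')\geq \ell:=L(h-1,k)$ whenever $E(s',t')=1$), I must lower bound $(P_1\cdots P_j)(s,t)$ when $E^\#(s,t)=1$, i.e. $E(s,t)=1$ and $t$ is $E$-recurrent, \emph{uniformly in $j$}, since $j$ is unbounded. The crucial consequence of leaktightness is that on pairs of $E$-recurrent states $\tldE$ and $E$ coincide: a pair $\tldE(r,r')=1$ with $r,r'$ both recurrent and $E(r,r')=0$ would be exactly a leak. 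Combined with the fact that every state has an outgoing $\tldE$-edge (the $P_i$ are stochastic), this forces every reachable recurrent state, in particular $t$, to satisfy $E(t,t)=1$ via the recurrence property $E(t,t')=1\implies E(t',t)=1$ together with transitivity, and it makes the recurrence class $R$ of $t$ closed under $E$-transitions with all internal transitions of probability $\geq\ell$. The remaining work, following \cite{FGO12}, is to show that the mass cannot be irreversibly drained from $R$ through the hidden edges leading to transient states: leaktightness guarantees there is no slow leak between recurrence classes, so the probability of lying in $R$ after the first few factors stays bounded below by a constant depending only on $\ell$ and $|\stma|$, not on $j$, and one final factor carries this mass into $t$ with probability $\geq\ell$. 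This yields a bound for the iteration node independent of $j$, and composing the degradations over the at most $3\cdot|\emmon|^2$ levels of the tree produces the global positive function $L(h,k)$.
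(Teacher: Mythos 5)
Your overall strategy---structural induction on the $k$-decomposition tree, with leaves, product nodes and idempotent nodes handled by routine composition of bounds, and leaktightness invoked only at iteration nodes---is exactly the paper's, and your treatment of \eqref{eq:lb2} and of the three easy node kinds matches the paper's proof. The problem is at the iteration node, which is the only place where the lemma has real content. You correctly record that leak-freeness forces $E$ and $\tldE$ to agree on pairs of $E$-recurrent states, but you then assert that ``the mass cannot be irreversibly drained from $R$ through the hidden edges leading to transient states'' because ``leaktightness guarantees there is no slow leak between recurrence classes.'' This is precisely the point that needs proof, and the justification you offer does not reach it: the definition of a leak constrains only pairs of \emph{recurrent} states, whereas the draining you must exclude passes through \emph{transient} states, about which the definition says nothing directly. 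As written, your argument is circular at its crux, deferring the essential step to \cite{FGO12}.

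The missing step (which is what the paper proves, tersely here and in full detail in the completeness lemma) is: every state $q$ with $\tldE(t,q)=1$ satisfies $E(q,t)=1$. For recurrent $q$ this is leak-freeness applied to the pair $(t,q)$ together with recurrence of $t$. For transient $q$, pick (using finiteness and the fact that every row of $E$ has an outgoing edge) a recurrent state $r$ with $E(q,r)=1$; then $\tldE(t,r)=1$ by idempotency of $\tldE$, leak-freeness applied to the recurrent pair $(t,r)$ gives $E(t,r)=1$, recurrence of $t$ gives $E(r,t)=1$, and idempotency of $E$ gives $E(q,t)=1$. Once this is established, your mass-tracking through $R$ becomes unnecessary: writing the probability as
\[
  \PP_s^{w,\tau}(t)\;\geq\; \PP_s^{u_1}(t)\sum_{q\in\stma}\PP_t^{u_2\cdots u_{j-1}}(q)\,\PP_q^{u_j}(t),
\]
the induction hypothesis applied to the last child gives $\PP_q^{u_j}(t)\geq \ell$ for \emph{every} state $q$ carrying mass, so the iteration node gets the bound $\ell^2$, uniformly in the number $j$ of children and with no constant depending on $|\stma|$; composing over the tree yields the paper's explicit $L(h,k)=\nu^{k^h}$.
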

\begin{proof}
  We proceed by induction on the structure of the $k$-decomposition tree $T$.
  \begin{itemize}
  \item \textbf{Leaves.} Leaves are labeled by some
    $\big((a,\alpha),(B^{a,\alpha},B^{a,\alpha})\big)$ for $a\in\atma$,
    $\alpha\in\Sigma_2'$. Then \eqref{eq:lb2} is true by definition of
    $B^{a,\alpha}$, and \eqref{eq:lb1} holds for the lower bound $\nu$, where
    $\nu$ is the smallest non-zero transition probability appearing in the
    transition table of the game.
  \item \textbf{Product nodes.} Assume that the lemma holds for the children
    that are labeled by $\big(u,(U,\tldU)\big)$ and $\big(v,(V,\tldV)\big)$ with
    with the lower bound $L$. It follows easily that the lemma holds for the
    parent node that is labeled by $\big(uv,(UV,\tldU\tldV)\big)$ with the lower
    bound $L^2$. The words $u$ and $v$ above are words over the alphabet $\fra$.
  \item \textbf{Idempotent nodes.} Similarly to above if the lemma holds for the
    $n$ children ($n<k$) that are labeled by
    $\big(u_1,(U,\tldU)\big),\big(u_2,(U,\tldU)\big),\ldots,\big(u_n,(U,\tldU)\big)$
    for some idempotent $(U,\tldU)\in\emmon$ with lower bound $L$, it follows
    easily that it also holds for the parent node that is labeled by
    $\big(u_1u_2\cdots u_n,(U,\tldU)\big)$ for the lower bound $L^k$.
  \item \textbf{Iteration nodes.} Assume that the lemma holds for the $n$
    children $(n\geq k)$ that are labeled by
    $\big(u_1,(U,\tldU)\big),\big(u_2,(U,\tldU)\big),\ldots,\big(u_n,(U,\tldU)\big)$
    for some idempotent $(U,\tldU)\in\emmon$ with lower bound $L$. Proving
    \eqref{eq:lb2} is trivial, so we show only \eqref{eq:lb1}. Let $s,t\in\stma$
    such that $U^\#(s,t)=1$, by definition $t$ is $U$-recurrent. Then
    \[
      \PP_s^{u_1\cdots u_n}(t)\geq \PP_s^{u_1}(t)\sum_{q\in\stma}
      \PP_t^{u_2\cdots u_{n-1}}(q)\PP_q^{u_n}(t).
    \]
    Given that $(U,\tldU)$ is leaktight it follows that for all $q\in\stma$,
    $\PP_t^{u_2\cdots u_{n-1}}(q)>0$ implies $\PP_q^{u_n}(t)\geq L$. Indeed, let
    $q\in \stma$ such that $\PP_t^{u_2\cdots u_{n-1}}(q)>0$, then by the
    induction hypothesis we have $\tldU^{n-2}(t,q)=1$ and since $\tldU$ is
    idempotent $\tldU(t,q)=1$. The state $t$ is $U$-recurrent, and $(U,\tldU)$
    is not a leak, therefore it follows from the definition of a leak that
    $U(q,t)=1$. Using the induction hypothesis on the right most child, we have
    $\PP_q^{u_n}(t)\geq L$.  By the induction hypothesis for the left most child
    we have $\PP_s^{u_1}(t)\geq L$. From here we conclude that
    \[
      \PP_s^{u_1\cdots u_n}(t)\geq L^2.
    \]
  \end{itemize}

  By induction we see that the lemma holds for the function $L(h,k)=\nu^{k^h}$.
\end{proof}
We now give a proof for the upper bound. Define $L=L(3\cdot|\emmon|^2,3)$ to be
the lower bound given from Lemma~\ref{lem:lb}.

\begin{lemma}
  \label{lem:ub}
  Let $h\in\NN$, define and $K_h\in\NN$ such that
  $h\cdot (1-L)^{K_h} < L$ and $K_h>|\stma|$.

  For all words $p=(w,\tau)$ over the alphabet $\fra$, $h\in\NN$, $k>K_h$ and
  $T$ a $k$-decomposition tree of $p$ of height at most $h$ with the root node
  labeled by $(W,\tldW)\in\emmon$, given that $\emmon$ is leaktight then for all
  $s,t\in\stma$
  \[
    W(s,t)=0\implies \PP_s^{w,\tau}(t)\leq h\cdot (1-L^{|\stma|})^{\lfloor k/|\stma|\rfloor}.
  \]
\end{lemma}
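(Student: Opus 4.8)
The plan is to prove this upper bound by induction on the structure of the $k$-decomposition tree $T$, mirroring the case analysis used for the lower bound in Lemma~\ref{lem:lb}, but now tracking the probability mass that \emph{escapes} from a forbidden target. Fix $s,t\in\stma$ with $W(s,t)=0$; I want to show that the probability of landing on $t$ when starting at $s$ and playing $(w,\tau)$ is small. The leaves contribute zero error since a single letter gives $\PP_s^{a,\alpha}(t)=0$ exactly when $B^{a,\alpha}(s,t)=0$. For product and idempotent nodes (with fewer than $k$ children) the height of the tree increases by at most one, so I expect the error to accumulate additively in the height $h$ --- this is why the factor $h$ appears in front. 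The genuinely new and difficult case is the \textbf{iteration node}, where I must argue that after iterating an idempotent $(U,\tldU)$ enough times, all mass collected on $U$-transient states (or on states $t$ with $U^\#(s,t)=0$) decays geometrically in the number of children.

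For the iteration node the key idea is to track the total mass sitting on ``bad'' states --- those $q$ with $U^\#(s,q)=0$, equivalently the $U$-transient states reachable from $s$ together with recurrent states in the wrong recurrence class. The plan is to show that each application of the idempotent factor $u_i$ drains a fixed fraction of this bad mass into states from which it can no longer return. Concretely, using the lower bound $L=L(3\cdot|\emmon|^2,3)$ from Lemma~\ref{lem:lb}, I would argue that from any $U$-transient state $q$ there is a state $q'$ with $U(q,q')=1$ and $U(q',q)=0$ (by definition of transience), so that playing one copy of $u_i$ sends at least $L^{|\stma|}$ of the mass from $q$ irreversibly away along a path of length at most $|\stma|$; the leaktight hypothesis guarantees that this escaping mass does not leak back via $\tldU$-edges into a recurrence class, which is precisely what could otherwise destroy the geometric decay. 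Grouping the $n\geq k$ children into blocks of size $|\stma|$ gives roughly $\lfloor k/|\stma|\rfloor$ independent opportunities to drain, so the surviving bad mass is at most $(1-L^{|\stma|})^{\lfloor k/|\stma|\rfloor}$.

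With the iteration case in hand, I would assemble the induction as follows. Let $\delta(h,k)$ denote the claimed bound $h\cdot(1-L^{|\stma|})^{\lfloor k/|\stma|\rfloor}$. For a product node with children of heights $h_1,h_2\le h-1$, I route the mass through the intermediate states: $\PP_s^{uv,\tau}(t)=\sum_{q}\PP_s^{u}(q)\PP_q^{v}(t)$, splitting the sum according to whether $U(s,q)=1$ and $V(q,t)=1$; since $W(s,t)=UV(s,t)=0$, for every $q$ either $U(s,q)=0$ or $V(q,t)=0$, so each summand is controlled by the inductive bound for one of the two children, and the errors add to at most $\delta(h-1,k)+\delta(h-1,k)$, which the prefactor $h$ absorbs. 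The idempotent node (fewer than $k$ children, hence height contributing only additively) is handled the same way. The choice $K_h>|\stma|$ with $h(1-L)^{K_h}<L$ is exactly what makes the condition $k>K_h$ ensure $\lfloor k/|\stma|\rfloor$ is large enough for the geometric factor to dominate the additive height accumulation.

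The main obstacle I anticipate is the iteration node, specifically making the ``irreversible draining'' argument fully rigorous: I must carefully define the bad-mass potential, show it contracts by a multiplicative factor per block of children, and invoke the leaktight property in exactly the right place to rule out mass returning to a recurrence class through a $\tldU$-transition that is absent in $U$. This is the step where the leaktight hypothesis is indispensable --- without it, the small leaked probabilities could accumulate over the $n$ iterations and prevent the geometric decay, exactly as illustrated by the leak example of Figure~\ref{fig:leak}. The bookkeeping relating the block size $|\stma|$, the path-length lower bound $L^{|\stma|}$, and the number of surviving blocks $\lfloor k/|\stma|\rfloor$ will require care, but the conceptual heart is the leaktight-enabled contraction.
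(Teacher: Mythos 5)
Your overall skeleton does match the paper's proof: structural induction on the tree, trivial leaves, non-growing bounds at product and idempotent nodes, and all the real work at iteration nodes, where transient mass is drained into recurrence classes in blocks of $|\stma|$ children. But the crucial quantitative step of your iteration case has a genuine gap. You assert that from a $U$-transient state $q$, one copy of $u_i$ moves at least $L^{|\stma|}$ of the mass away, with $L=L(3\cdot|\emmon|^2,3)$. The lower bound of Lemma~\ref{lem:lb} attaches to the root type of a \emph{specific} decomposition tree of $u_i$: the child subtree sitting inside $T$ is a $k$-decomposition tree of height at most $h-1$, so it only yields the bound $L(h-1,k)=\nu^{k^{h-1}}$, which shrinks as $k$ grows and makes $(1-L(h-1,k)^{|\stma|})^{\lfloor k/|\stma|\rfloor}$ tend to $1$ --- useless. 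To get the $k$-independent constant $L$ you must build a \emph{fresh} $3$-decomposition tree $T_i$ for each child word (via Theorem~\ref{theo:decomp height}); but its root type $W_i$ need not equal $U$, so the $U$-edge out of $q$ supplied by transience does not automatically carry probability at least $L$. The paper bridges this by proving $W(s',t')=0\implies W_i(s',t')=0$, from the contradiction between the inductive upper bound $F\leq h(1-L)^k$ and the lower bound $L$ attached to $T_i$ --- and this is precisely what the hypothesis $k>K_h$ with $h\cdot(1-L)^{K_h}<L$ is for. You instead read $K_h$ as ensuring that ``the geometric factor dominates the additive height accumulation,'' so this type-bridging step is entirely absent from your plan.

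Relatedly, leaktightness enters at a different place than where you put it. You invoke it to prevent escaped mass from ``leaking back''; but no-return from the recurrence classes $S_r$ to the transient target $t$ is just the inductive upper bound (recurrent $r$ and transient $t$ force $W(r,t)=0$). What leaktightness actually buys is the \emph{existence} of escape paths in the bridged types: since $\tldW(t',r)=1$ for all transient $t'$ reachable from $t$ and $r\in S_r$, the absence of a $W_i\cdots W_{i'}$-path from $t$ into $S_r$ within $|\stma|$ consecutive children would allow one to construct a leak in $\emmon$, contradicting the hypothesis; only then does Lemma~\ref{lem:lb} applied to the trees $T_i$ give $\PP_t^{p_i\cdots p_{i'}}(S_r)\geq L^{|\stma|}$. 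Finally, a smaller bookkeeping point: at product nodes your accounting $\delta(h-1,k)+\delta(h-1,k)\leq\delta(h,k)$ is false for $h>2$; the paper instead argues, via prefixes of paths, that the bound does not grow at all at product and idempotent nodes, so that only iteration nodes contribute, once per level --- which is where the prefactor $h$ really comes from.
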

\begin{proof}
  We proceed by induction on the structure of the $k$-decomposition tree $T$,
  while maintaining upper bounds $F$ that are always smaller than
  $h\cdot (1-L^{|\stma|})^{\lfloor k/|\stma|\rfloor}$.
  \begin{itemize}
  \item \textbf{Leaves.} The leaves are labeled by
    $\big((a,\alpha),(B^{a,\alpha},B^{a,\alpha})\big)$ for $a\in\atma$ and
    $\alpha\in\Sigma_2'$, by definition we have an upper bound of $0$.
  \item \textbf{Product nodes.}  Assume that we have $(u,\big(U,\tldU)\big)$,
    $(v,\big(V,\tldV)\big)$ and the parent node labeled by
    $(uv,\big(UV,\tldU\tldV)\big)$, where $u,v$ are words over the alphabet
    $\fra$. Let $F\geq 0$ be the upper bound of the children,
    i.e. $U(s,t)=0\implies \PP_s^{u}(t)\leq F$. Let $s,t\in\stma$ be such that
    $UV(s,t)=0$. Then the probability of all paths of length two, $s,s',t$ such
    that $\PP_s^u(s')>0$ and $\PP_{s'}^v(t)>0$ is bounded above by $F$,
    therefore $\PP_s^{uv}(t)\leq F$.
  \item \textbf{Idempotent nodes.} Assume that we have the children
    $p_1,\ldots, p_j,j<k$ each decorated by the same idempotent $(W,\tldW)$, and
    let $s,t\in\stma$ such that $W(s,t)=0$. The words $p_i$ are over the
    alphabet $\fra$. By the induction hypothesis the upper bound $F$ holds for
    all the children.

    Denote by $\rho$ the set of all paths $s_0s_1\cdots s_j$ such that $s_0=s$,
    $s_j=t$ and $\tldW(s_i,s_{i+1})=1$for all $0\leq i\leq j-1$. Since
    $W(s,t)=0$ for all $\pi=s_0\cdots s_j\in\rho$ there exists
    $0\leq C(\pi)\leq j-1$ such that $W(s_{C(\pi)},s_{C(\pi)+1})=0$ and for all
    $0\leq i\leq C(\pi)-1$, $W(s_i,s_{i+1})=1$. Define $\rho'$ to be the set of
    such prefixes, i.e.
    \[
      \rho'=\{s_0\cdots s_{C(\pi)}\ |\ \pi=s_0\cdots s_j\in\rho\}.
    \]
    The set $\rho'$ is nonempty because there exists some $r\in\stma$ such that
    $W(s,r)=1$ (this follows from the definition of the half-blind
    game, in every state we have some actions).

    Then we have
    \begin{align*}
      \PP_s^{p_1\cdots p_j}(t)&=\sum_{s_0\cdots
                                s_j\in\rho}\PP^{p_1}_{s_0}(s_1)\cdots \PP_{s_{j-1}}^{p_j}(s_j)\\
                              &\leq \sum_{\pi=s_0\cdots s_{C(\pi)}\in\rho'}\PP_{s_0}^{p_1}\cdots \PP_{s_{C(\pi)-1}}^{p_{C(\pi)}}(s_{C(\pi)})\cdot F\\
                              &\leq F,
    \end{align*}
    where the first inequality is because of the induction hypothesis and
    $W(s_{C(\pi)},s_{C(\pi)+1})=0$, whereas the second inequality is because for
    every path $\pi\in\rho$ there is exactly one path $\pi'\in\rho'$ such that
    $\pi'$ is a prefix of $\pi$.
  \item \textbf{Iteration nodes.} Assume that we have the children
    $p_1,\ldots,p_j$, $j\geq k$ each decorated by the same idempotent
    $(W,\tldW)\in\emmon$ and for whom the upper bound $F$ holds. Let
    $s,t\in\stma$ be such that $W^\#(s,t)=0$. In case $W(s,t)=0$ a proof like
    the one above for idempotent nodes gives $F$ as the upper bound. Therefore
    we assume that $W(s,t)=1$. Then by definition $t$ is $W$-transient and it
    communicates with some recurrence classes whose union we denote by
    $S_r\subseteq \stma$.  We will prove that for all $0\leq i<j'\leq j$ such
    that $j'-i\geq |\stma|$ there exists $i\leq i'\leq j'$ such that
    $i'-i\leq |\stma|$ and 
    \begin{equation}
      \label{eq:implb}
      \PP_t^{p_i\cdots p_{i'}}(S_r)\geq L^{|\stma|}.
    \end{equation}
    
    Let $i \in \{0, \ldots j\}$, then there exists a $3$-decomposition tree
    $T_i$ for the word $p_i$, whose root node is labeled by the element
    $(W_i,\tldW)\in\emmon$. It is possible that $W\neq W_i$, but for all
    $s',t'\in \stma$, $W(s',t')=0$ implies that $W_i(s',t')=0$. This is because
    by the induction hypothesis, if $W(s',t')=0$, we know that
    $\PP_{s'}^{u_i}(t')\leq F$ whereas according to Lemma~\ref{lem:lb} for
    $T_i$, if $W_i(s',t')=1$ we have $\PP_{s'}^{u_i}(t')\geq L$, from
    $F\leq h\cdot (1-L)^k$ and our choice of $k$, superior to $K_h$, this is a
    contradiction, hence $W_i(s',t')=0$. 

    Let $S_t$ be the set of states that are $W$-reachable from $t$ (for all
    $t'\in S_t$, $W(t,t')=1$) but not in $S_r$. These states are all
    $W$-transient and moreover for all $i\in \{0,\ldots, j\}$, there exists a
    $W_i$ path from $t$ and any state in $S_t$ to some element in $S_r$. This is
    because for all $t'\in S_t\cup \{t\}, r\in S_r$, $\tldW(t',r)=1$, and there
    is no $W_i$ path from $S_r$ to $S_t\cup\{t\}$, if there was no $W_i$ path
    from $S_t\cup\{t\}$ we could construct a leak, which contradicts the
    hypothesis that $\emmon$ is leaktight. Similarly, for $0\leq i<j'\leq j$
    such that $i-j'\geq |\stma|$, if $W_i\cdots W_{j'}(t,S_r)=0$ we can
    construct a leak by repeating a factor of $W_i\cdots W_{j'}$, hence we can
    assume that there exists $i\leq i'\leq j'$, such that $i'-i\leq |\stma|$ and
    $W_i\cdots W_{i'}(t,S_r)=1$. Then it follows from Lemma~\ref{lem:lb} that
    $\PP_t^{p_i\cdots p_{i'}}(S_r)\geq L^{|\stma|}$
    which concludes \eqref{eq:implb}.\\

    Let $\rho$ be the set of all paths $s_0\cdots s_j$ such that $s_0=s$,
    $s_j=t$ and $\tldW(s_i,s_{i+1})=1$ for all $0\leq i\leq j-1$. We partition
    $\rho$ into the set $\rho_1$ of all the paths that pass through $S_r$ and
    $\rho_2$ the set of all paths that do not. Since $t$ is $W$-transient, for
    all $r\in S_r$, $W(r,t)=0$, consequently we can use the argument above for
    the idempotent nodes to give $F$ as an upper bound for the probability of
    the event that constitutes the union of all the sets in $\rho_1$. As for
    $\rho_2$, because of transience of $t$ and \eqref{eq:implb} the probability
    of the union of all the paths in $\rho_2$ can be bounded above by
    $(1-L^{\stma})^{\lfloor j/|\stma| \rfloor}$.
  \end{itemize}

  We have shown that the upper bound grows only in the case of iteration nodes
  and it always is smaller than
  $h\cdot (1-L^{|\stma|})^{\lfloor k/|\stma|\rfloor}$, since in ascending the
  tree, at each level we add at most a term of
  $(1-L^{|\stma|})^{\lfloor k/|\stma|\rfloor}$.
\end{proof}

Now Lemma~\ref{lem:assocsub} follows as a corollary of Lemma~\ref{lem:ub} and
Lemma~\ref{lem:reif}. The main point is that the larger the $k$ the smaller the
lower bound we can prove.


\subsection{Completeness}
\label{sec:completeness}
Before introducing the main theorem of this section let us give a definition.

\begin{definition}[$\mu$-faithful abstraction]
  \label{def:faithful abstraction}
  Let $u\in\Sigma_1$ be a word, and $\mu>0$ a strictly positive real number. We
  say that $\bu\in\emon$ is a $\mu$-faithful abstraction of the word $u$ if for
  all $(U,\tldU)\in \bu$ there exists $\tau\in\Sigma_2^p$ such that for all
  $s,t\in\stma$,
  \begin{align}
    \label{eq:completeness1}
    \tldU(s,t)=1 &\iff \PP_s^{u,\tau}(t)>0\\
    \label{eq:completeness2}
    U(s,t)=1 &\implies \PP_s^{u,\tau}(t)\geq \mu.
  \end{align}
\end{definition}

This section is devoted to giving the main ideas behind the proof and proving the
following theorem.

\begin{theorem}
  \label{theo:completeness}
  Assume that the game is leaktight. Then there exists $\mu>0$ such that for all
  words $u\in\Sigma_1$ there is some element $\bu\in\emon$ that is a
  $\mu$-faithful abstraction of $u$.
\end{theorem}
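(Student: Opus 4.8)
The plan is to reduce the theorem to a per-node induction over a single $k$-decomposition tree of $u$. Since $\emon$ is a finite monoid equipped with an iteration $\#$ satisfying properties (1)--(4) of Lemma~\ref{lem:properties}, Theorem~\ref{theo:decomp height} applies to it with alphabet $\atma$ and morphism $\phi(a)=\mathbf{a}$. I would first fix $k$ to be a large constant depending only on $|\emmon|$ and $|\emon|$ (how large is dictated by the iteration case below), and obtain for the given word $u$ a $k$-decomposition tree $T$ with respect to $\emon$ whose height is at most $3\cdot|\emon|^2$, independently of $|u|$. Let $\bu$ be the type labelling the root. The target is the following strengthening, proved by induction on $T$: every node labelled $(v,\bv)$ has the property that $\bv$ is a $\mu$-faithful abstraction of $v$, for one fixed $\mu>0$. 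Because both the height and $k$ are bounded by constants depending only on the monoids, the multiplicative losses incurred at each level accumulate to a bound $\mu$ of the form $\nu^{c}$ with $c=c(|\emmon|,|\emon|,k)$ and $\nu$ the least nonzero transition probability; crucially $\mu$ depends only on these monoid constants and hence is one and the same $\mu$ for all words $u$, which is exactly the uniformity the theorem demands.

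\textbf{The easy cases.} Leaves are labelled $(a,\mathbf{a})$ and each $(B^{a,\tau},B^{a,\tau})\in\mathbf{a}$ is realised by $\tau$ itself: \eqref{eq:completeness1} holds by the definition of $B^{a,\tau}$ and \eqref{eq:completeness2} holds with $\mu=\nu$. At a product node labelled $(vw,\bv\bw)$, I would realise $(UV,\tldU\tldV)\in\bv\bw$ by playing on $v$ a strategy realising $(U,\tldU)\in\bv$ and then, shifted by $|v|$, a strategy realising $(V,\tldV)\in\bw$; the identity $\PP_s^{vw,\tau}(t)=\sum_{s'}\PP_s^{v}(s')\PP_{s'}^{w}(t)$ shows that the supports compose, giving \eqref{eq:completeness1}, and that the two lower bounds multiply, giving \eqref{eq:completeness2}. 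An idempotent node is handled in the same way: since $\mathbf{e}$ is idempotent, any target in $\mathbf{e}$ factors as a product of the $j<k$ children's types, all in $\mathbf{e}$, so the product argument applies and only raises the bound to a fixed power at most $k-1$.

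\textbf{The iteration node (the crux).} Here the children all carry the idempotent type $\mathbf{e}$, the node carries $\mathbf{e}^\#$, and I must realise an arbitrary $(W,\tldW)\in\mathbf{e}^\#$ over $v=u_1\cdots u_j$ with $j\geq k$. By definition $\mathbf{e}^\#=\langle\{(U,\tldU)(E,\tldE)^\#(V,\tldV)\}\rangle$, so $(W,\tldW)$ admits a $\#$-expression $\expr$ over such sandwich generators; its $\#$-height is bounded by the number of $\relj$-classes of $\emmon$ (each $\#$ strictly descends a $\relj$-class, as shown in Section~\ref{sec:kdecomptrees}), hence by $|\emmon|$. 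I would realise $\expr$ recursively, distributing the blocks $u_1,\ldots,u_j$ among its subterms: products are handled as above, while the inner iteration $(E,\tldE)^\#=(E^\#,\tldE)$ of a sandwich is realised by repeating, over a run of consecutive blocks, the strategy realising $(E,\tldE)$. For such a repetition the support equals $\tldE$ because $\tldE$ is idempotent, giving \eqref{eq:completeness1}; and the lower bound \eqref{eq:completeness2} is obtained exactly as in the iteration case of Lemma~\ref{lem:lb}: if $E^\#(s,t)=1$ then $t$ is $E$-recurrent, every $q$ reachable from $t$ satisfies $\tldE(t,q)=1$, and the leaktight hypothesis (forbidding a recurrent state to leak along $\tldE\setminus E$) forces $E(q,t)=1$, so the return probability to $t$ stays at least $\mu_E$ and $\PP_s^{v}(t)\geq \mu_E^{2}$ \emph{independently of the number of repetitions}. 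Since $\expr$ contains only a bounded number of products and iterations, the accumulated bound is again uniform.

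\textbf{Main obstacle.} The delicate point is this iteration node, and within it two coupled issues. First, the lower bound must not decay as a run of blocks grows: this is precisely what leaktightness buys, through the Lemma~\ref{lem:lb} argument, and is the only place the hypothesis enters. Second, realising a deeply nested $(W,\tldW)\in\mathbf{e}^\#$ consumes blocks multiplicatively in the nesting depth, since iterating a sandwich needs several runs each itself realising a sandwich. As the $\#$-height is bounded by $|\emmon|$, the number of blocks required is bounded by a constant $k_0=k_0(|\emmon|)$, so fixing the tree parameter $k\geq k_0$ at the outset guarantees that every iteration node (having $j\geq k$ children) supplies enough blocks. Verifying that the support equivalence \eqref{eq:completeness1} remains an honest ``iff'' throughout the recursion, and that a single constant $\mu$ dominates all the accumulated losses, is the remaining bookkeeping.
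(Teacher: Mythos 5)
Your proposal is correct and follows essentially the same route as the paper: a $k$-decomposition tree of $u$ with respect to $\emon$ for a monoid-dependent constant $k$, induction up the tree (the paper packages the product and idempotent cases as Lemma~\ref{lem:faithproduct} and the iteration case as Lemma~\ref{lem:iter completness}, then assembles them along a decomposition tree of height at most $3\cdot|\emon|^2$ exactly as you do), with the leaktight hypothesis entering only to keep the bound on a long run of blocks from decaying with the run's length.

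The one point where you diverge is inside the iteration node, and the paper's treatment is worth knowing because it eliminates most of your bookkeeping. Instead of realising a nested $\#$-expression recursively (one run per iteration, depth bounded by $\mathcal{J}$-class descent, block consumption multiplicative in the nesting depth), the paper first \emph{flattens}: since iteration only removes edges and never touches the tilde component, every $(W,\tldW)\in\mathbf{e}^\#$ is edge-dominated by the value of a $\#$-height-one expression $F_1G_1^\#F_2G_2^\#\cdots F_kG_k^\#F_{k+1}$ with all $F_i,G_i\in\mathbf{e}$ and the same tilde; realising the dominating element faithfully realises $(W,\tldW)$, because \eqref{eq:completeness1} concerns only the tilde and \eqref{eq:completeness2} only becomes easier when edges are added. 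With this normal form the block assignment is trivial: one block for $F_1$, one block for each of $F_2,G_2^\#,\ldots,G_k^\#,F_{k+1}$ (edge-domination again: $G_i^\#\leq G_i$, so a single block realising $G_i$ suffices), and \emph{all} $n-2k$ surplus blocks go into the single run for $G_1^\#$, the only place where your Lemma~\ref{lem:lb}-style recurrent/transient argument is invoked; the bound is then $\mu\cdot\mu^2\cdot\mu^{2k-1}=\mu^{2k+2}\geq\mu^{N+1}$. Your nested scheme also works --- every compound subexpression evaluates inside $\mathbf{e}^\#\in\emon$, hence is a non-leak by the leaktight hypothesis (note that you need non-leakness of elements of $\mathbf{e}^\#$ and not just of $\mathbf{e}$, which the flattened proof avoids), so the run argument applies at every level --- but it forces you to handle the distribution of surplus blocks among repetitions of compound factors (a repetition of a compound factor consumes a fixed number of blocks, so surplus must be routed to a run whose repetitions are single blocks) and to justify the $k_0(|\emmon|)$ budget. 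The flattening makes both issues disappear, at the price of the one-line observation that iterating only deletes edges.
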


The notion of $\mu$-faithful abstraction is compatible with product in the following sense.
\begin{lemma}\label{lem:faithproduct}
  Let $\bu,\bv\in\emon$ be $\mu$-faithful abstractions of $u\in\Sigma_1$ and
  $v\in\Sigma_1$ respectively. Then $\bu\bv$ is a $\mu^2$-faithful abstraction
  of $uv\in\Sigma_1$.
\end{lemma}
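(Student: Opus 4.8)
The plan is to verify the two defining conditions of a $\mu^2$-faithful abstraction directly, using the fact that the product on $\emon$ is defined entrywise by matrix product on both coordinates. First I would fix an arbitrary element $(W,\tldW)\in\bu\bv$. By definition of the product in the extended belief monoid there exist $(U,\tldU)\in\bu$ and $(V,\tldV)\in\bv$ with $W=UV$ and $\tldW=\tldU\tldV$. Invoking the $\mu$-faithfulness of $\bu$ for $(U,\tldU)$ gives a strategy $\tau_u\in\Sigma_2^p$ witnessing \eqref{eq:completeness1}--\eqref{eq:completeness2} for $u$, and likewise $\mu$-faithfulness of $\bv$ for $(V,\tldV)$ gives $\tau_v\in\Sigma_2^p$ witnessing them for $v$. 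The witness strategy for $uv$ is then the concatenation $\tau$ that plays $\tau_u$ during the first $|u|$ turns and the time-shifted copy of $\tau_v$ afterwards, i.e.\ $\tau(i)=\tau_u(i)$ for $i<|u|$ and $\tau(i)=\tau_v(i-|u|)$ for $i\geq|u|$; since strategies in $\Sigma_2^p$ depend only on the current state and on elapsed time, this concatenation is again an element of $\Sigma_2^p$.

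The crux of the argument is the Chapman--Kolmogorov-style decomposition
\[
  \PP_s^{uv,\tau}(t)=\sum_{q\in\stma}\PP_s^{u,\tau_u}(q)\cdot \PP_q^{v,\tau_v}(t),
\]
which holds because under $\tau$ the first $|u|$ letters of $uv$ are exactly $u$ played against $\tau_u$, and conditioned on the intermediate state $q$ reached after those turns, the remaining $|v|$ turns form an independent run of $v$ against the shifted $\tau_v$ started from $q$. Granting this identity, I would check \eqref{eq:completeness1} as follows: by the matrix-product definition, $\tldW(s,t)=1$ iff there is some $q$ with $\tldU(s,q)=1$ and $\tldV(q,t)=1$; by faithfulness this is equivalent to the existence of $q$ with $\PP_s^{u,\tau_u}(q)>0$ and $\PP_q^{v,\tau_v}(t)>0$; and since all summands above are nonnegative, this holds exactly when $\PP_s^{uv,\tau}(t)>0$.

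For \eqref{eq:completeness2}, suppose $W(s,t)=1$. Then $UV(s,t)=1$ yields some $q\in\stma$ with $U(s,q)=1$ and $V(q,t)=1$, so $\PP_s^{u,\tau_u}(q)\geq\mu$ and $\PP_q^{v,\tau_v}(t)\geq\mu$ by faithfulness; keeping only the $q$-th term in the decomposition gives $\PP_s^{uv,\tau}(t)\geq\mu^2$, as required. I expect the only delicate point to be the justification of the decomposition formula itself — specifically, confirming that concatenating and time-shifting the minimizer strategies genuinely reproduces the two-phase run and that the memorylessness of $\Sigma_2^p$ strategies makes the intermediate state $q$ a clean conditioning point; the entrywise matrix-product bookkeeping for $W$ and $\tldW$ is then routine.
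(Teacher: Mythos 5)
Your proof is correct, and it is exactly the routine verification the paper has in mind: the paper states Lemma~\ref{lem:faithproduct} without proof, treating it as immediate (in the same spirit as the product-node cases of Lemmas~\ref{lem:lb} and~\ref{lem:ub}). Your decomposition of $(W,\tldW)$ as $(UV,\tldU\tldV)$, the concatenation of the two witness strategies (legitimate precisely because $\Sigma_2^p$ strategies depend only on elapsed time and current state), and the Chapman--Kolmogorov splitting over the intermediate state $q$ together constitute the intended argument, with both directions of \eqref{eq:completeness1} and the bound \eqref{eq:completeness2} handled correctly.
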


A na\"ive use of Lemma~\ref{lem:faithproduct} shows that any word $w$
has a $\mu_w$-faithful abstraction in $\emon$,
where $\mu_w$ converges to $0$ as the length of $w$ increases.
However we need $\mu_w$ to depend only on $\emon$,
independently of $|w|$. For that we make use of $k$-decomposition trees.
More precisely we build
$N$-decomposition trees for words in $\Sigma_1$ where $N=2^{3\cdot |\emmon|}$. We can
construct $N$-decomposition trees for any word $u\in\Sigma_1$ whose height is at
most $3\cdot |\emon|^2$ and since $N$ is fixed we will be able to propagate the
constant $\mu$, it only remains to take care that the constant does not shrink
as a function of the number of children in iteration nodes, hence the following
lemma.

\begin{lemma}
  \label{lem:iter completness} 
  Let $u\in\Sigma_1$ be a word factorized as $u=u_1\cdots u_n$ where
  $n>2^{3\cdot |\emmon|}=N$, and $\bu\in\emon$ an
  idempotent element such that $\bu$ is a $\mu$-faithful
  abstraction of $u_i$, $1\leq i\leq n$, for some $\mu>0$. If
  $\bu$ is not a leak then $\bu^\#$ is a $\mu'$-faithful
  abstraction of $u$, where $\mu'=\mu^{N+1}$.
\end{lemma}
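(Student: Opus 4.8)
The plan is to prove the statement element by element: fixing a target $(W,\tldW)\in\bu^\#$, I will exhibit a single strategy $\tau\in\Sigma_2^p$ on $u=u_1\cdots u_n$ witnessing the two requirements of a $\mu'$-faithful abstraction, namely the support equivalence \eqref{eq:completeness1} and the lower bound \eqref{eq:completeness2} with $\mu'=\mu^{N+1}$. The construction assigns to each block $u_i$ an element of $\bu$, realized through the $\mu$-faithful strategy of $\bu$ on $u_i$ granted by hypothesis. Since the iteration operation on $\emmon$ never alters the right ($\tld$) component, the support condition \eqref{eq:completeness1} will follow purely by composing the per-block support equivalences along the product $\prod_i\tldF_i=\tldW$; this is the routine direction and reuses the bookkeeping of Lemma~\ref{lem:faithproduct}. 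The entire substance lies in the lower bound \eqref{eq:completeness2}, where the obstacle is that a na\"ive multiplication over all $n$ blocks would only yield $\mu^{n}$, a constant that collapses as $n$ grows.

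The core sub-claim I would isolate first is a completeness analogue of the iteration case of Lemma~\ref{lem:lb}: a \emph{bulk segment} of consecutive blocks, all realizing one idempotent $(E,\tldE)\in\bu$, realizes $(E,\tldE)^\#=(E^\#,\tldE)$ with a constant bounded below by $\mu^{2}$ \emph{independently of the segment length}. The proof of this sub-claim is exactly where the leaktight hypothesis enters. For $s,t$ with $E^\#(s,t)=1$ the target $t$ is $E$-recurrent; I would lower-bound $\PP_s^{u_1\cdots u_m}(t)$ by reaching $t$ once (factor $\ge\mu$ from $E(s,t)=1$) and then funnelling mass back to $t$. Every state $q$ carrying mass after the intermediate blocks satisfies $\tldE(t,q)=1$, and since $t$ is $E$-recurrent and $(E,\tldE)$ is not a leak, the definition of a leak forces $E(t,q)=1$, hence $q$ is recurrent and $E(q,t)=1$; the final block then returns each such $q$ to $t$ with probability $\ge\mu$. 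As the intermediate distribution has total mass $1$ supported entirely on such $q$, the return probability is $\ge\mu$, giving $\ge\mu^{2}$ regardless of $m$. Crucially, the lower-bound obligations for $(E^\#,\tldE)$ are a \emph{subset} of those for $(E,\tldE)$ (iteration only deletes edges), so no obligation is created for the transient-passing edges whose probability genuinely decays.

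With the sub-claim in hand, I would realize an arbitrary $(W,\tldW)\in\bu^\#$ by writing it as a product of boundedly many factors, each factor being either an element of $\bu$ (placed on a single ``junction'' block) or an iterated idempotent $(E^\#,\tldE)$ (placed on one bulk segment), and then concatenating these block-assignments across all $n$ blocks; the composition is controlled by Lemma~\ref{lem:faithproduct}. The remaining point — and the second main obstacle — is bounding the number of factors by $N$ so that the exponent stays $\le N+1$. Here I would argue that the $\#$-expression defining $(W,\tldW)$ may be taken of bounded complexity: nested iterations collapse because $(\mathbf{e}^\#)^\#=\mathbf{e}^\#$ by Lemma~\ref{lem:properties}(3), and each genuine $\#$ strictly descends the $\relj$-order (the lemma proved in Section~\ref{sec:kdecomptrees}), so the number of distinct segments is bounded by the number of $\relj$-classes of $\emmon$, which is far below $N=2^{3\cdot|\emmon|}$. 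Since $n>N$ leaves ample room to make every bulk segment nonempty and to insert the junction blocks, the bulk segments each contribute a length-independent $\mu^{2}$ by the sub-claim while each junction contributes one factor $\mu$, and multiplying over the at most $N$ factors yields $\PP_s^{u,\tau}(t)\ge\mu^{N+1}$ whenever $W(s,t)=1$. I expect the delicate part of the write-up to be the second obstacle — making precise that the expression complexity, and hence the count of junction blocks, is uniformly bounded by $N$ — whereas the leaktight funnel of the sub-claim is a direct transcription of the argument already carried out for Lemma~\ref{lem:lb}.
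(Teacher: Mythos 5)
Your overall route is the paper's: fix $(W,\tldW)\in\bu^\#$, write it---up to adding edges to the left component while keeping the right component---as a $\#$-height-one expression $F_1G_1^\#F_2\cdots F_kG_k^\#F_{k+1}$ over $\bu$, realize junctions on single blocks and iterated idempotents on bulk segments, and use non-leakness to get a length-independent $\mu^2$ on the bulk segments. (The paper is slightly more economical: it spends \emph{all} surplus blocks on one bulk segment for $G_1^\#$ and realizes each $G_i^\#$, $i\geq 2$, on a single block using $G_i^\#(s,t)=1\Rightarrow G_i(s,t)=1$; your one-segment-per-iterated-factor variant works too and only changes the exponent bookkeeping.) However, two of your justifications fail as written. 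In the funnel sub-claim you apply non-leakness to the pair $(t,q)$ knowing only that $t$ is recurrent: ``the definition of a leak forces $E(t,q)=1$, hence $q$ is recurrent.'' This is circular: a leak is defined by a violation between two states that are \emph{both} recurrent, so non-leakness says nothing about $(t,q)$ until $q$ is known to be recurrent. The claim you want is true, but it needs the case analysis the paper carries out at exactly this point: if $q$ is $E$-recurrent, non-leakness applied to $\tldE(t,q)=1$ gives $E(t,q)=1$, and recurrence of $t$ gives $E(q,t)=1$; if $q$ is $E$-transient, pick an $E$-recurrent $r$ with $E(q,r)=1$, compose supports to get $\tldE(t,r)=1$, apply non-leakness to the recurrent pair $(t,r)$, deduce $E(r,t)=1$ by recurrence of $t$, and conclude $E(q,t)=1$ by idempotency. (Lemma~\ref{lem:lb} words this step in the same abbreviated way you do, but the present lemma is where the paper spells the detour out.)

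The second gap is the bound on the number of factors, which you yourself flag as the delicate point: the argument you propose for it is wrong. The $\mathcal{J}$-descent lemma ($E^\#<_{\mathcal{J}}E$ when $E^\#\neq E$) bounds the \emph{nesting depth} of $\#$'s; it says nothing about how many iterated factors can sit side by side in a product---$G_1^\#$ and $G_2^\#$ in $F_1G_1^\#F_2G_2^\#F_3$ need not be $\mathcal{J}$-comparable, so ``number of segments $\leq$ number of $\mathcal{J}$-classes'' does not follow. Likewise, Lemma~\ref{lem:properties}(3) collapses $(\mathbf{e}^\#)^\#$ at the level of $\emon$; it does not reduce the $\#$-height of Markov-monoid expressions over $\bu$. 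What is actually needed is: (i) every element of $\bu^\#$ is edge-dominated, with identical right component, by the value of some height-one expression over $\bu$---proved by induction on the generation of $\bu^\#$, replacing each closure-level iteration $X^\#$ by the height-one expression already obtained for $X$ (sound because iteration only deletes edges and leaves the right component untouched); and (ii) the width $2k+1$ of that expression can be taken at most $|\emmon|\leq N$, for instance by pigeonhole on prefix products: if two prefixes of the expression evaluate to the same element of $\emmon$, excise the factors between them. Neither fact is delivered by the $\mathcal{J}$-argument you give; the paper's own appeal to Simon's theorem for this width bound is admittedly loose, but the pigeonhole argument above is all that is required.
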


Theorem~\ref{theo:completeness} is an easy consequence from the lemmata above, which can be shown as follows. We
construct a $N$-decomposition tree for the word $u\in\Sigma_1$, and propagate the lower bound from the leaf nodes, for
which we have the bound $\nu>0$ (where $\nu$ is the smallest transition probability appearing in the game) up to the
root node. If we know that a bound $\mu>0$ holds for the children, for the parents we have the following lower bounds as
a function of the kind of the node: (1) {\em product node}: $\mu^2$; (2) {\em idempotent node} $\mu^N$; (3) {\em
  iteration node} $\mu^{N+1}$. Since the length of the tree is at most $h=3\cdot |\emon|^2$ we have the lower
bound \[\mu=\nu^{h(N+1)}\] that holds for all $u\in\Sigma_1$.

\begin{proof}[Proof of Lemma~\ref{lem:iter completness}]
  Let $(W,\widetilde W)\in\bu^\#$ we want to build a strategy
  $\tau\in\Sigma_2^p$ such that \eqref{eq:completeness1} and
  \eqref{eq:completeness2} in Definition~\ref{def:faithful abstraction} hold,
  for $(W,\tldW)$ the word $u$ and the bound $\mu'$. Let us first assume that
  $(W,\tldW)$ is such that
  \begin{align*}
    &W=F_1G_1^\#\cdots F_kG_k^\#F_{k+1},\\
    &\tldW=\widetilde F_1\widetilde G_1\cdots \widetilde F_k\widetilde G_k\widetilde F_{k+1},
  \end{align*}
  where $(F_i,\widetilde F_i)\in\bu$,$(G_i,\widetilde G_i)\in\bu$ and
  $(G_i,\widetilde G_i)$ are idempotent. 

The set of $\#$-expressions of $\bu\subseteq\mmon$ denoted by $\expr(\bu)$ is a
language defined by the grammar:
$\expr(\bu):=\bu\ \mid\ \expr(\bu)\cdot\expr(\bu)\ \mid\ (\expr(\bu))^\#$, so
the terminal symbols are the elements of $\bu$. There is $\gamma_\bu$, a natural
function mapping $\expr(\bu)$ to $\mmon$, i.e. the function that is the identity
when restricted to the terminal symbols, otherwise
$\gamma_\bu(e\cdot e')=\gamma_\bu(e)\gamma_\bu(e')$, and
$\gamma_\bu(e^\#)=(\gamma_\bu(e))^\#$. Given $e\in\expr(\bu)$ we define its {\em
  $\#$-height} as the number of the deepest nesting of
$\#$. E.g. $\height(U^\#(VW^\#)^\#)=2$.

  We can safely make this assumption because for all $(U,\tldU)\in\bu$ we can
  find a $\#$-expression $e\in\expr(\bu)$ whose $\#$-height is $1$, such that
  $\gamma_\bu(e)=(U',\tldU)$ and for all $s,t\in\stma$
  $U(s,t)=1\implies U'(s,t)=1$. This is an easy exercise: when iterating we are
  removing edges.

  Since $\bu$ is a $\mu$-faithful abstraction of $u_i$,$1\leq i\leq n$, for all
  $(U,\tldU)$ in $\mathbf{u}$ there is a strategy in $\Sigma_2^p$ such that
  \eqref{eq:completeness1} and \eqref{eq:completeness2} hold. Let $\tau_1$ be
  such a strategy for $(F_1,\widetilde F_1)$, $\tau_2$ for
  $(G_1,\widetilde G_1)$ and so on until $\tau_{2k+1}$ for the selection
  $(F_{k+1},\widetilde F_{k+1})$. We define the strategy $\tau$ by assigning
  one of the $\tau_i$ to some part of the word in the following way:
  \begin{itemize}
  \item against $u_1$ play $\tau_1$,
  \item against $u_2$ play $\tau_2$, play $\tau_2$ also against $u_3,u_4,\ldots,u_{n-2k+1}$ each,
  \item against $u_{n-2k+2}$ play $\tau_3$, etc., in general against
    $u_{n-2k+1+i}$ play $\tau_{i+2}$, $1\leq i\leq 2k-1$.
  \end{itemize}
  One can visualize this in the following way.
  \[
\tau:=    {u_1\atop{\tau_1\atop{F_1}}} {\mid\atop{\mid\atop{\mid}}} {u_2,u_3,\ldots,u_{n-2k+1}\atop{\tau_2\atop{G_1}}} {\mid\atop{\mid\atop{\mid}}} {u_{n-2k+2}\atop{\tau_3\atop{F_2}}} {\mid\atop{\mid\atop{\mid}}} \cdots {\mid\atop{\mid\atop{\mid}}} {u_{n-2k+2k-1}\atop{\tau_{2k}\atop{G_k}}} {\mid\atop{\mid\atop{\mid}}} {u_n\atop{\tau_{2k+1}\atop{F_{k+1}}}}.
  \]
  This means that $\tau$ plays according to $\tau_2$ against $u_2$ then it keeps
  playing according to $\tau_2$ against $u_3$ and so on until $u_{n-2k+1}$ is
  read. Note that it is well defined since we have assumed that $n>N$, and
  $N=3\cdot |\emmon|^2\geq 2k+1$ from Simon's forest factorization theorem.\\

  Now we prove \eqref{eq:completeness1} for $\tau$ and $u$.
  
  ($\implies$) Let $s,t\in\stma$ be such that $s\xrightarrow{\tldW}t$.
  Since
  $\tldW=\widetilde F_1\widetilde G_1\cdots \widetilde G_k\widetilde
  F_{k+1}$
  and $\widetilde G_1$ is idempotent there exist $s_1,\ldots,s_{n-1}\in\stma$
  such that
  \begin{equation}
    \label{eq:plusarrows}
    s\xrightarrow{\widetilde F_1}s_1\xrightarrow{\widetilde G_1}\cdots
    \xrightarrow{\widetilde G_1} s_{n-2k}\xrightarrow{\widetilde F_2}
    s_{n-2k+1}\xrightarrow{\widetilde G_2}s_{n-2k+2}\cdots
    \xrightarrow{\widetilde G_{k}}s_{n-1}\xrightarrow{\widetilde F_{k+1}}t.
  \end{equation}
  Let $F(s_1,\ldots,s_{n-1})$ be equal to
  \[
    \PP_s^{u_1,\tau_1}(s_1)\PP_{s_1}^{u_2,\tau_2}(s_2)\cdots
    \PP_{s_{n-2k-1}}^{u_{n-2k+1},\tau_2}(s_{n-2k})\PP_{s_{n-2k}}^{u_{n-2k+2},\tau_3}(s_{n-2k+1})\cdots\PP_{s_{n-1}}^{u_n,\tau_{2k+1}}(t).
  \]
  Then by the choice of $\tau$ we have
  $\PP_s^{u,\tau}(t)\geq F(s_1,\ldots,s_{n-1})$.  Since $\mathbf{u}$
  is a $\mu$-faithful abstraction of $u_i$, \eqref{eq:plusarrows}
  implies that every factor of $F(s_1,\ldots,s_{n-1})$ is positive,
  hence $\PP_s^{u,\tau}(t)>0$.

  ($\impliedby$) Let $s,t\in\stma$ be such that
  $\PP_s^{u,\tau}(t)>0$, then similarly as above there must exist states $s_1,\ldots,s_{n-1}$ such that

\[
\PP_s^{u,\tau}(t)\geq F(s_1,\ldots,s_{n-1})>0.
\]
This implies \eqref{eq:plusarrows} since $\bu$ is a
$\mu$-faithful abstraction of all $u_i$, and in turn,
\eqref{eq:plusarrows} implies that $s\xrightarrow{\tldW} t$,
since
$\tldW=\widetilde  F_1\widetilde  G_1\cdots \widetilde 
G_k\widetilde  F_{k+1}$. \\

Now we prove \eqref{eq:completeness2} for $\tau$ and $u$ and the
  bound $\mu'=\mu^{N+1}$. Let
$s,t\in\stma$ such that $s\xrightarrow{W} t$. Then there exists states $s_1,\ldots,s_{2k}$ such that 
\begin{equation}
\label{eq:onearrows}
s\xrightarrow{F_1} s_1 \xrightarrow{G^\#_1} s_2 \xrightarrow{F_2} \cdots \xrightarrow{G^\#_k} s_{2k} \xrightarrow{F_{k+1}} t.
\end{equation}
First we will show that
\begin{equation}
\label{eq:geqmu2}
\PP_{s_1}^{u_2,\ldots,u_{n-2k+1},\tau'}(s_2)\geq \mu^2,
\end{equation}
where $\tau'$ is the strategy that plays $\tau_2$ against $u_2$, and against
$u_3$ and so on. This is exactly what the strategy $\tau$ does, after $u_1$ is
read. Then we have
\[
  \PP_{s_1}^{u_2\cdots u_{n-2k+1},\tau'}(s_2)\geq \PP_{s_1}^{u_2,\tau_2}(s_2)\sum_{s'\in\stma}\PP_{s_2}^{u_3\cdots u_{n-2k},\tau''}(s')\PP_{s'}^{u_{n-2k+1},\tau_2}(s_2),
\]
where $\tau''$ is the strategy that plays $\tau_2$ against $u_3$ ,and against
$u_4$ and so on. The strategy $\tau''$ is the same as $\tau'$ just shifted by
the first part $u_2$. From \eqref{eq:onearrows} $s_1\xrightarrow{G^\#_1} s_2$
which implies that $s_2$ is $G_1$-recurrent,
$s_1\xrightarrow{\widetilde G_1} s_2$ and $s_1\xrightarrow{G_1} s_2$. By the
choice of $\tau_2$ because $s_1\xrightarrow{G_1} s_2$ we have
\begin{equation}
  \label{eq:leakweak}
  \PP_{s_1}^{u_2\cdots u_{n-2k+1},\tau'}(s_2)\geq \mu\sum_{s'\in\stma}\PP_{s_2}^{u_3\cdots u_{n-2k},\tau''}(s')\PP_{s'}^{u_{n-2k+1},\tau_2}(s_2).
\end{equation}
Let $s'$ be such that $\PP_{s_2}^{u_3\cdots u_{n-2k},\tau''}(s')>0$.
Then from the definition of $\tau''$,
$s_2\xrightarrow{\widetilde G_1^{n-2k-3}}s'$ and since
$\widetilde G_1$ is idempotent $s_2\xrightarrow{\widetilde G_1}s'$. We
will prove that $s'\xrightarrow{G_1} s_2$. There are two cases:
\begin{itemize}
\item $s'$ is $G_1$-recurrent: then both $s'$ and $s_2$ are
  $G_1$-recurrent, and $s_2\xrightarrow{\widetilde G_1}s'$. Since we
  have assumed that $\mathbf{u}$ is not a leak, then
  $s'\xrightarrow{G_1}s_2$.
\item $s'$ is $G_1$-transient: There exists some state $r$ that is
  $G_1$-recurrent, such that $s'\xrightarrow{G_1} r$ and
  $r\not\xrightarrow{G_1}s'$. Now $s'\xrightarrow{G_1} r$ implies that
  $s'\xrightarrow{\widetilde G_1} r$, and from idempotency of
  $\widetilde G_1$, $s_2\xrightarrow{\widetilde G_1} r$. Then from the
  argument for the case above $r\xrightarrow{G_1}s_2$, and finally
  from idempotency of $G_1$, $s'\xrightarrow{G_1}s_2$.
\end{itemize}

We have shown that for all $s'$ such that
$\PP_{s_2}^{u_3\cdots u_{n-2k},\tau''}(s')>0$,
$s'\xrightarrow{G_1}s_2$. As a consequence, from the choice of $\tau_2$ and \eqref{eq:leakweak} we have
\[
  \PP_{s_1}^{u_2\cdots u_{n-2k+1},\tau'}(s_2)\geq \mu^2.
\]

To finish up with the proof of \eqref{eq:completeness2}, for all
$s,s'\in\stma$ and $G_i$, $s\xrightarrow{G^\#_i}s'$ implies that
$s\xrightarrow{G_i}s'$, therefore from \eqref{eq:onearrows} we have
\begin{equation}
\label{eq:onearrows2}
s\xrightarrow{F_1} s_1 \xrightarrow{G^\#_1} s_2 \xrightarrow{F_2}s_3\xrightarrow{G_2} \cdots \xrightarrow{G_k} s_{2k} \xrightarrow{F_{k+1}} t,
\end{equation}
so for all $G_i$, $2\leq i\leq k$, we write $G_i$ instead of $G_i^\#$.
Then by the choice of the strategies $\tau_i$ and the definition of
$\tau$,
\[
  \PP_s^{u,\tau}(t)\geq \PP_s^{u_1,\tau_1}(s_1)\PP_{s_1}^{u_2,\ldots,u_{n-2k+1},\tau'}(s_2)\cdots \PP_{s_{2k}}^{u_n,\tau_{2k+1}}(t)\geq \mu\cdot \mu^2 \cdot \mu^{2k-1}=\mu^{2k+2},
\]
where for the last inequality we have used \eqref{eq:geqmu2} and
\eqref{eq:onearrows2}. Since $2k+1\leq N$, this concludes
the proof of \eqref{eq:completeness2} for $\tau$,$u$ and the bound
$\mu'=\mu^{N+1}$.
\end{proof}


\section{Complexity of Optimal Strategies}
\label{sec:strategies}
The maxmin reachability problem solved by the belief
monoid algorithm concerns games where the maximizer is restricted to pure
strategies, and decides whether
\[
\underline{\val}(s)=\sup_{w\in
  \Sigma_1}\inf_{\tau\in\Sigma_2}\PP_s^{w,\tau}(F)=1
  \]
  where
$\Sigma_1=\atma^*$. If we extend further the set $\Sigma_1$ of strategies of the
maximizer and allow him to have mixed strategies too, then half-blind games have
a value. Let $\Sigma_1^m=\Delta(\atma^*)$ be the set of mixed words.
\begin{theorem}[\cite{GR14}] Half-blind games where maximizer can use mixed strategies have a value:
\[
  \val(s)=\sup_{w\in\Sigma_1^m}\inf_{\tau\in\Sigma_2}\PP_s^{w,\tau}(F)=\inf_{\tau\in\Sigma_2}\sup_{w\in\Sigma_1^m}\PP_s^{w,\tau}(F).
\]
\end{theorem}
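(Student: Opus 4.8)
The plan is to reduce the infinite game to a family of finite matrix games via a horizon truncation, obtain a value for each truncation from a classical minimax theorem, and then pass to the limit, using a compactness argument on the minimizer's side to stitch the finite-horizon responses into a single strategy. Throughout, the trivial inequality $\sup_{w\in\Sigma_1^m}\inf_{\tau\in\Sigma_2}\PP_s^{w,\tau}(F)\le \inf_{\tau\in\Sigma_2}\sup_{w\in\Sigma_1^m}\PP_s^{w,\tau}(F)$ (weak duality) is free, so the whole content lies in the reverse inequality.

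First I would fix a horizon $n$ and consider the game in which the maximizer is restricted to words of length at most $n$. On this horizon the maximizer has finitely many pure strategies, and only the first $n$ moves of a minimizer strategy are relevant, so the minimizer's behavioural strategies range over a finite product of simplices $\prod\Delta(\atmi)$. Since the minimizer has perfect information (hence perfect recall), Kuhn's theorem lets me replace her behavioural strategies by mixed strategies over her finitely many pure responses, after which $\PP_s^{w,\tau}(F)$ becomes bi-affine in the two (finite-dimensional, compact, convex) simplices of mixed strategies. Von Neumann's minimax theorem then gives a value $\val_n$ for the horizon-$n$ game. Because enlarging the horizon only enlarges the maximizer's strategy set while leaving $\inf_\tau\PP_s^{w,\tau}(F)$ unchanged for words already available, the sequence $(\val_n)$ is non-decreasing and bounded by $1$, hence converges to some $\val_\infty\le 1$.

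Next I would identify the left-hand side with $\val_\infty$. Any finitely supported mixed word lies in some horizon-$n$ simplex, and for such a word the infimum over all of $\Sigma_2$ coincides with the (attained) infimum over the horizon-$n$ strategies, so the supremum over finitely supported mixtures equals $\sup_n\val_n=\val_\infty$. A general $w\in\Sigma_1^m$ is handled by truncation: since the payoff is affine and lies in $[0,1]$, moving the mass that $w$ places on words longer than $n$ onto a fixed short word perturbs $\inf_\tau\PP_s^{w,\tau}(F)$ by at most the tail mass, which tends to $0$; this yields $\sup_{w\in\Sigma_1^m}\inf_{\tau}\PP_s^{w,\tau}(F)=\val_\infty$.

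The hard part is the matching upper bound $\inf_{\tau}\sup_{w}\PP_s^{w,\tau}(F)\le\val_\infty$, which requires a \emph{single} minimizer strategy that is good against every word length at once. Here I would exploit that $\Sigma_2=\prod_{h\in\hist}\Delta(\atmi)$ is a countable product of compact simplices, hence compact and metrizable by Tychonoff. For each $n$ I pick a horizon-$n$ optimal minimizer $\tau_n$, extend it arbitrarily beyond the horizon, and extract (by a diagonal argument) a subsequence converging to some $\tau^*\in\Sigma_2$. For any \emph{fixed} word $v$ the map $\tau\mapsto\PP_s^{v,\tau}(F)$ depends on only finitely many coordinates and is therefore continuous; since $\PP_s^{v,\tau_n}(F)\le\val_n\le\val_\infty$ for all $n\ge|v|$, passing to the limit gives $\PP_s^{v,\tau^*}(F)\le\val_\infty$ for every $v$, whence $\sup_{w\in\Sigma_1^m}\PP_s^{w,\tau^*}(F)=\sup_{v}\PP_s^{v,\tau^*}(F)\le\val_\infty$. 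This bounds the inf-sup by $\val_\infty$ and closes the loop. The main obstacle to watch is precisely this stitching step: one must ensure the limit strategy stays uniformly near-optimal against \emph{all} words simultaneously, which is exactly why the product-topology compactness of $\Sigma_2$ together with the coordinate-wise continuity of the payoff is essential.
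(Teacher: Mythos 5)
Your proposal is correct, but there is nothing in the paper to compare it against: the paper does not prove this theorem, it imports it wholesale from \cite{GR14}. Judged on its own merits, your argument is sound. Weak duality is indeed free; the horizon-$n$ games are finite extensive-form games in which the minimizer has perfect recall, so Kuhn's theorem legitimately turns them into finite matrix games with a value $\val_n$; the sequence $(\val_n)_n$ is non-decreasing and its limit $\val_\infty$ equals the sup-inf side (your tail-mass truncation handles mixed words with infinite support correctly, since the payoff is affine and $[0,1]$-valued); and the final step — extracting $\tau^*$ as a product-topology limit of horizon-optimal strategies $\tau_n$ and using that $\tau\mapsto\PP_s^{v,\tau}(F)$ depends continuously on finitely many coordinates of $\tau$ for each fixed pure word $v$ — is exactly the right mechanism. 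One small remark: the ``uniformity'' you flag as the main obstacle is not actually an obstacle. You never need $\tau^*$ to be uniformly near-optimal against all words simultaneously in any quantitative sense; you only need the pointwise bound $\PP_s^{v,\tau^*}(F)\le\val_\infty$ for each pure word $v$, obtained by passing to the limit in $\PP_s^{v,\tau_{n_k}}(F)\le\val_{n_k}\le\val_\infty$ for $n_k\ge|v|$, and then the bound for mixed words follows because their payoff is an average of pure-word payoffs, so the supremum of quantities each at most $\val_\infty$ is at most $\val_\infty$. What your elementary route exploits, and what makes it work where general value-existence proofs are delicate, is the special structure of half-blind games: every pure maximizer strategy is a finite word, so every payoff is determined at a finite horizon; the cited result in \cite{GR14} is proved in a more general setting where this structure is unavailable.
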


Define $\Sigma_2^f$ to be the set of finite-memory strategies for the
minimizer. These are strategies that are stochastic finite-state probabilistic transducers
reading histories and outputting elements of $\Delta(\atmi)$, mixed actions. Let
$\val^f(s)=\inf_{\tau\in\Sigma_2^f}\sup_{w\in
  \Sigma_1}\PP_s^{w,\tau}(F)$.

In general,
\[
  \underline\val(s)\leq \val(s)\leq \val^f(s).
\]

A natural question is whether the inequalities above are strict in general,
i.e. whether mixed strategies are strictly more powerful for the
maximizer and whether infinite-memory strategies are strictly more powerful for the
minimizer. 

The former question can be resolved easily. We can find examples
where the maximizer wins more by mixing her strategy. In fact the example in
Figure~\ref{fig:running_example} suffices. For this example we have
$\underline\val(s)< \val(s)$.

The latter question --- whether there exists an example such that
$\val(s)<\val^f(s)$ --- is harder, and its answer is more
counter-intuitive. When maximizer has full information,
it is well-known that minimizer can play optimally with no memory
(using a positional strategy).
When maximizer is totally blind, one might believe that minimizer does not need any memory either
because playing against an opponent that is totally blind
to satisfy a safety objective seems rather easy.
Surprisingly perhaps,
minimizer requires infinite memory
to play optimally against a blind maximizer and satisfy its safety objective.
We show that there exists a game where
$\val(s)<\val^f(s)$. This game is based on the following gadget.

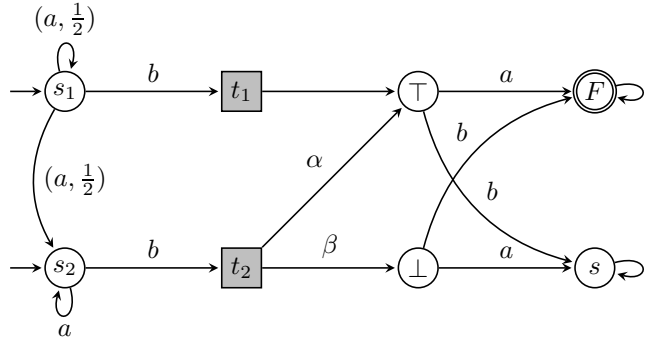
\begin{figure}[H]
  \begin{center}
    \begin{tikzpicture}[->,>=stealth,shorten >=1pt,auto,node distance=1.8cm,
      semithick, initial text=]
      \tikzstyle{vertex1}=[circle,draw=black,minimum size=15pt,inner sep=0pt] 
      \tikzstyle{vertex2}=[fill=black!25,draw=black,minimum size=15pt,inner sep=0pt]
      \node[vertex1,initial] (s1) {$s_1$};
      \node[vertex1,initial] (s2)[below=of s1] {$s_2$};
      \node[vertex2] (t1)[right=of s1]{$t_1$}; 
      \node[vertex2] (t2)[right=of s2]{$t_2$};
      \node[vertex1] (tp)[right=of t1]{$\top$};
      \node[vertex1] (bt)[right=of t2]{$\bot$};
      \node[vertex1,double] (fin)[right=of tp]{$F$};
      \node[vertex1] (sink)[right=of bt]{$s$};
      \path 
      (s1) edge[loop above] node {$(a,\frac{1}{2})$} (s1)
      (s1) edge[bend right] node {$(a,\frac{1}{2})$} (s2)
      (s1) edge node {$b$} (t1)
      (s2) edge node {$b$} (t2)
      (s2) edge[loop below] node {$a$} ()
      (t1) edge node {} (tp)
      (t2) edge node {$\alpha$} (tp)
      (t2) edge node {$\beta$} (bt)
      (tp) edge node {$a$} (fin)
      (tp) edge[bend right] node {$b$} (sink)
      (bt) edge node {$a$} (sink)
      (bt) edge[bend left] node {$b$} (fin)
      (fin) edge[loop right] node {} ()
      (sink) edge[loop right] node {} ();
    \end{tikzpicture}  
  \end{center}
  \caption{A gadget}
  \label{example:gadget}
\end{figure}

We give the main idea behind the gadget. The maximizer wants to be able to
ascertain whether she is in state top or bottom after playing his first $b$ so
that she can go to the final state. The objective of the minimizer is to make
the probability of being in the top state equal to that of being in the bottom
state, so that the maximizer cannot win more than $1/2$. In order to do this,
when it is his turn to make the choice between $\alpha$ and $\beta$ (or a mixing
of them) she has to know the exact probability distribution over $t_1$ and
$t_2$. But this is impossible to keep track with a finite-memory strategy,
i.e. the maximizer plays too many $a$'s for the minimizer's small memory. Hence
the maximizer can always win slightly more than $1/2$. We then use this gadget
in a game that emphasizes the importance of these winnings and prove that in
that game $1/2=\val(s)<\val^f(s)=1$. We prove this formally.

The game starts either at state $s_1$ or $s_2$ with equal probability. The maximizer can play a series of $a$'s and
eventually has to play a $b$ if she wants to make progress. After which the minimizer observes whether the game is in
the state $t_1$ or $t_2$.  In case it is in $t_1$ the minimizer has no choice and proceeds to state $\top$. In case it
is in $t_2$ the minimizer can choose between $\alpha$ and $\beta$ to go either to state $\top$ or to state $\bot$. Then
the maximizer has to guess which one it is. If the guess is right she wins if it is wrong she loses by going to the sink
state. The goal of the minimizer is to keep track of the probability distribution on the states of the game such that
when it is her time to make a decision she will play a mixed (between $\alpha$ and $\beta$) action such that the
probability to be in $\top$ is equal to the probability to be in $\bot$ equal to $1/2$. Keeping track of the
distribution will be impossible with a finite-memory strategy because the sequence of $a$'s that the maximizer plays can
be arbitrarily long.

Observe that $\val(\gamma)=1/2$, where $\gamma$ is the initial
distribution, i.e. $\gamma(s_1)=\gamma(s_2)=1/2$, by giving the optimal
strategies as follows. The maximizer can mix the two words $ba$ and $bb$ with
equal probability. Call this mixed word $w$. Then for all strategies $\tau$ that
the minimizer chooses we have $\PP_{\gamma}^{w,\tau}(F)=1/2$. On the other hand,
after a $b$ is played, the probability to be in the state $t_2$ is always larger
than $1/2$, $\PP_{\gamma}^{a^nb,\tau}(t_2)\geq 1/2$, and consequently the
minimizer has an optimal action such that both $\top$ and $\bot$ are reached
with equal probability and equal to $1/2$. Moreover this optimal action can be
played by the minimizer by keeping track of the distribution on $t_1$ and $t_2$
by counting the number of $a$s that are played before $b$. Albeit this requires
unbounded memory. We give a proof of this in what follows.

Assume that the game stops just before the minimizer makes her action, then we
have
\[
  \PP_{\gamma}^{a^nb,\tau}(t_2)=1-\frac{1}{2}\cdot \frac{1}{2^n}=\frac{2^{n+1}-1}{2^{n+1}}
\]
Therefore if $\tau$ is optimal, after seeing $a^nb$ it would play the
action $\beta$ with the following probability, 
\[
  \tau(a^nb)(\beta)=\frac{1}{2}\cdot \frac{2^{n+1}}{2^{n+1}-1}=\frac{1}{2}\cdot\frac{1}{1-\frac{1}{2^{n+1}}}.
\]
With such a strategy it would ensure that
$\PP_{\gamma}^{a^nb,\tau}(\top)=\PP_{\gamma}^{a^nb,\tau}(\bot)=1/2$. We
prove that this is impossible with a finite-memory strategy.

The proof is by contradiction. Assume that the minimizer has a finite-memory
strategy with $m$ states such that against the word $a^nb$ it plays the action
$\beta$ with probability $\frac{1}{2}\cdot\frac{1}{1-\frac{1}{2^{n+1}}}$. From
the definition of a finite-memory strategy, this implies that there exist two
$m\times m$ stochastic matrices $A$ and $B$, and $J\subset \{1,\ldots ,m\}$ such
that

\begin{equation}
  \label{eq:compstr3}
  \sum_{j\in J} (A^nB)_{i,j}=\frac{1}{2}\cdot\frac{1}{1-\frac{1}{2^{n+1}}},
\end{equation}

where $i$ is the initial memory location of the strategy, and for a
matrix $A$ we denote by $A_{i,j}$ the element on the $i$th row and
$j$th column. We use the following well-known theorem.  See
e.g. \cite{gantmacher1959theory}.
\begin{theorem}
  \label{theo:matpow}
  Let $A$ be a square $m\times m$ stochastic matrix and
  $\lambda_1,\lambda_1,\ldots,\lambda_r$ ($r\leq m$) its distinct
  eigenvalues. Then for all $n>m$
  \[
    (A^n)_{i,j}=\sum_{k=1}^r\lambda_k^n P_{ijk}(n),
  \]
  where $P_{ijk}$ are polynomials of smaller order than the
  multiplicity of $\lambda_k$.
\end{theorem}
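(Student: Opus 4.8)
The plan is to reduce the identity to the Jordan canonical form of $A$; in fact the stochasticity of $A$ is irrelevant to this particular formula, which holds for any complex $m\times m$ matrix and only uses that the $\lambda_k$ are its distinct eigenvalues. First I would write $A = PJP^{-1}$, where $J$ is the Jordan normal form of $A$, so that $A^n = PJ^nP^{-1}$. Since $(A^n)_{i,j}=\sum_{a,b}P_{i,a}(J^n)_{a,b}(P^{-1})_{b,j}$ and the factors $P_{i,a}$ and $(P^{-1})_{b,j}$ are constants independent of $n$, it suffices to understand the entries of $J^n$ block by block.

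The heart of the argument is the power of a single Jordan block. Let $J_k$ be a Jordan block of size $d$ for the eigenvalue $\lambda_k$, and write $J_k = \lambda_k I + N$, where $N$ is nilpotent with $N^d = 0$. Because $\lambda_k I$ and $N$ commute, the binomial theorem gives
\[
  J_k^n = \sum_{l=0}^{d-1}\binom{n}{l}\lambda_k^{\,n-l}N^l,
\]
the sum truncating at $l=d-1$ since $N^l=0$ for $l\geq d$. Each coefficient $\binom{n}{l}=\frac{n(n-1)\cdots(n-l+1)}{l!}$ is a polynomial in $n$ of degree exactly $l\leq d-1$, so for $\lambda_k\neq 0$ every entry of $J_k^n$ has the shape $\lambda_k^{\,n}\cdot q(n)$ with $\deg q\leq d-1$.

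Next I would collect the contributions across all blocks. The algebraic multiplicity $m_k$ of $\lambda_k$ equals the total size of the Jordan blocks carrying $\lambda_k$, so each such block has size $d\leq m_k$, and the polynomials it produces have degree at most $m_k-1$, \emph{strictly} less than $m_k$, exactly as claimed. Grouping the entries of $J^n$ by eigenvalue and folding the constant weights $P_{i,a}(P^{-1})_{b,j}$ into the coefficients then yields $(A^n)_{i,j}=\sum_{k=1}^r\lambda_k^{\,n} P_{ijk}(n)$ with $\deg P_{ijk}<m_k$.

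The single point needing care --- and the only reason the statement is restricted to $n>m$ --- is the eigenvalue $\lambda_k=0$. Its Jordan blocks contribute powers of a pure nilpotent $N$, whose entries are $\binom{n}{b-a}\,0^{\,n-(b-a)}$ within a block of size $d$ and hence vanish once $n\geq d$. Since $n>m\geq d$, all zero-eigenvalue terms are identically zero and can simply be omitted (equivalently absorbed under the convention $0^n=0$). With this edge case dispatched, the rest is routine linear algebra, so I expect the main obstacle to be purely a matter of careful bookkeeping: matching each block size to the multiplicity $m_k$ to guarantee the strict degree bound, and justifying the $n>m$ hypothesis through the $\lambda_k=0$ blocks.
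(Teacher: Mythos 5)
Your proof is correct, and it is the standard argument: the paper itself offers no proof of this theorem, citing it as a well-known result from Gantmacher's textbook, which obtains it from essentially the same Jordan-form (constituent-matrix) decomposition you use. Your handling of the two points that actually need care --- the strict degree bound coming from each Jordan block having size at most the algebraic multiplicity of $\lambda_k$, and the elimination of the $\lambda_k=0$ blocks using the hypothesis $n>m$ --- is exactly right.
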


Using Theorem \ref{theo:matpow} and doing a small calculation we see
that indeed there exist polynomials $P_1,P_2,\ldots,P_r$ such that for
all $n>m$
\[
  \sum_{j\in J} (A^nB)_{i,j}=\sum_{k=1}^r\lambda_k^n P_k(n).
\]
On the other hand the Taylor expansion for $\frac{1}{1-\frac{1}{2^{n+1}}}$ give
us
\[
  \sum_{j\in J} (A^nB)_{i,j}=\frac{1}{2}\cdot (1+\frac{1}{2^{n+1}}+\frac{1}{2^{2(n+1)}}+\cdots).
\]
Therefore 

\begin{equation}
  \label{eq:compstr1}
  \sum_{k=1}^r\lambda_k^n P_k(n)=\frac{1}{2}\cdot (1+\frac{1}{2^{n+1}}+\frac{1}{2^{2(n+1)}}+\cdots).
\end{equation}

Now observe that for complex numbers $z_1,\ldots,z_m$, $m\geq 1$, with
$|z_1|=|z_2|=\cdots =|z_m|$, real $c>0$, and polynomials $f_1,\ldots,f_m$ on $n$
of degree at most $d$,
\begin{equation}
  \label{eq:tempgen}
  \lim_{n\to\infty}\frac{c}{\sum_{i=1}^mz_i^nf_i(n)}=1,
\end{equation}
implies that $\sum_{i=1}^mz_i^nf_i(n)=\sum_{i=1}^mc_iz_i^n=c$ for some constants
$c_i$. The reason being that \eqref{eq:tempgen} clearly cannot be true for
$|z_i|<1$, as for $|z_i|\geq 1$ assume that the dominating term of the
denominator has the form $n^k\sum_{i=1}^mc_iz_i^n$ for constants $c_i$, then for
\eqref{eq:tempgen} to hold we need $k=0$. Hence
$\sum_{i=1}^mz_i^nf_i(n)=\sum_{i=1}^mc_iz_i^n$, and similarly it is necessary
that $|z_1|=|z_2|=\cdots = |z_m|=1$. Finally because of \eqref{eq:tempgen} we
have $\sum_{i=1}^mc_iz_i^n=c$.

Assume without loss of generality that
$|\lambda_1|=|\lambda_2|=\cdots = |\lambda_{r_1}|$, for some $1\leq r_1\leq r$
and that $|\lambda_1|\geq |\lambda_i|$, $1\leq i\leq r$. The expression on the
left hand side of \eqref{eq:compstr2} is dominated by
$\sum_{k=1}^{r_1}\lambda_k^nP_k(n)$ whereas the expression on the right hand
side is dominated by the leading term $1/2$.

Consequently, because of the equality above, it holds that
\[
  \lim_{n\to\infty}\frac{\frac{1}{2}}{\sum_{k=1}^{r_1}\lambda_k^n P_k(n)}=1.
\]
Applying \eqref{eq:tempgen} we have
$\sum_{k=1}^{r_1}\lambda_k^n
P_k(n)=\sum_{k=1}^{r_1}\lambda_k^nc_k=\frac{1}{2}$.
We substract both of these equal quantities from \eqref{eq:compstr1},
to get
\begin{equation}
  \label{eq:compstr2}
  \sum_{k=r_1}^r\lambda_k^n P_k(n)=\frac{1}{2}\cdot (\frac{1}{2^{n+1}}+\frac{1}{2^{2(n+1)}}+\cdots).
\end{equation}

Repeating the same argument for the leading terms of
\eqref{eq:compstr2} we have
\[ \lim_{n\to\infty}\frac{\frac{1}{2^{n+2}}}{\sum_{k=r_1}^{r_2}\lambda_k^nP_k(n)}=\lim_{n\to\infty}\frac{1}{4\sum_{k=r_1}^{r_2}(2\lambda_k)^nP_k(n)}=1.
\]
Again, applying \eqref{eq:tempgen} we get
$\sum_{k=r_1}^{r_2}(2\lambda_k)^nP_k(n)=\sum_{k=r_1}^{r_2}c'_k2^n\lambda_k^n=1/4$. Hence
we can subtract the quantity $\frac{1}{2^{n+2}}$ from both sides in
\eqref{eq:compstr2}. Repeating the same argument for the eigenvalues
that are left we conclude that
\[
0=\frac{1}{2}\cdot(\frac{1}{2^{r(n+1)}}+\frac{1}{2^{(r+1)(n+1)}}+\cdots),
\]
which is clearly a contradiction therefore there are no two finite stochastic
matrices $A,B$ such that \eqref{eq:compstr3} holds, and consequently the
minimizer has no finite-strategy that is optimal in achieving the $1/2$
payoff. Nevertheless for all $\epsilon>0$ the minimizer has $\epsilon$-optimal
strategies that have finite-memory. These strategies would constitute of
counting the number of $a$'s up to some length.

We have shown the following lemma. 

\begin{lemma}
  \label{lem:gadget}
  In the game in Figure \ref{example:gadget} for all finite-memory strategies
  $\tau$ for the minimizer there exists a word $w$ such that
  \[
    \PP_\gamma^{w,\tau}(F)>\frac{1}{2},
  \]
  where $\gamma$ is the initial distribution,
  $\gamma(s_1)=\gamma(s_2)=1/2$. 
\end{lemma}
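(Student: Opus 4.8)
The plan is to reduce the statement to a purely analytic fact about the mixing probabilities the minimizer is forced to use, and then to contradict the finite-memory assumption via Theorem~\ref{theo:matpow}. \textbf{Reduction.} First I would fix an arbitrary finite-memory strategy $\tau$ and record the distribution reached after the maximizer plays $a^nb$: the token sits in $t_1$ with probability $\tfrac{1}{2^{n+1}}$ and in $t_2$ with probability $1-\tfrac{1}{2^{n+1}}$, and from $t_1$ the minimizer is forced into $\top$. Hence her only lever is $q_n':=\tau(a^nb)(\beta)$, the probability of moving from $t_2$ to $\bot$, and a direct computation gives $\PP_\gamma^{a^nb,\tau}(\bot)=(1-\tfrac{1}{2^{n+1}})q_n'$ with $\PP_\gamma^{a^nb,\tau}(\top)=1-\PP_\gamma^{a^nb,\tau}(\bot)$. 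If this bottom-probability differs from $\tfrac12$ for even one $n$, then one of $\top,\bot$ carries mass strictly above $\tfrac12$, and the maximizer appends the correct guessing letter: the word $w=a^nba$ reaches $F$ exactly from $\top$, while $w=a^nbb$ reaches $F$ exactly from $\bot$, so in either case $\PP_\gamma^{w,\tau}(F)>\tfrac12$. Thus it suffices to show that no finite-memory $\tau$ can satisfy $(1-\tfrac1{2^{n+1}})q_n'=\tfrac12$, i.e. $q_n'=q_n:=\tfrac12\cdot\frac{1}{1-2^{-(n+1)}}$, simultaneously for all $n$.

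\textbf{Two incompatible forms for $q_n$.} Next I would write $q_n$ in two ways. Expanding the geometric series gives $q_n=\tfrac12\sum_{k\ge0}2^{-k(n+1)}$, an \emph{infinite} sum of geometric sequences in $n$ whose bases $2^{-k}$ are pairwise distinct and accumulate at $0$. On the other hand, encoding $\tau$ by its transition matrices $A$ (for $a$) and $B$ (for $b$) together with an accepting memory set $J$, the realized value is $q_n'=\sum_{j\in J}(A^nB)_{i,j}$ as in \eqref{eq:compstr3}; Theorem~\ref{theo:matpow} then rewrites this as a \emph{finite} exponential--polynomial sum $q_n'=\sum_{k=1}^r\lambda_k^nP_k(n)$ with $|\lambda_k|\le1$, the $\lambda_k$ being the eigenvalues of the stochastic matrix $A$. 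Equating the two forms yields \eqref{eq:compstr1}, an identity between a finite exponential--polynomial sequence and an infinite family of geometric modes.

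\textbf{Peeling off leading modes.} The heart of the argument is the asymptotic-matching fact \eqref{eq:tempgen}: any block $\sum_i z_i^nf_i(n)$ of terms of equal modulus whose ratio to a fixed positive constant tends to $1$ must in fact equal that constant, forcing every $|z_i|=1$ and collapsing the $f_i$ to constants. I would apply this iteratively. The highest-modulus eigenvalue block on the left is asymptotically equivalent to the leading term $\tfrac12$ on the right, so by \eqref{eq:tempgen} that block equals $\tfrac12$ and its eigenvalues have modulus $1$; subtracting $\tfrac12$ leaves \eqref{eq:compstr2}, whose right-hand side is now dominated by $\tfrac14\cdot(1/2)^n$. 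Matching again forces the next eigenvalue block to have modulus $\tfrac12$ with constant coefficient, and so on down the tower. Each pass permanently consumes at least one of the finitely many eigenvalue moduli of $A$, whereas the right-hand side supplies infinitely many distinct geometric modes $2^{-k}$; after the $r$ eigenvalues are exhausted the left-hand residual is identically $0$ while the right-hand residual is the nonzero tail $\tfrac12\sum_{k\ge r}2^{-k(n+1)}$, the sought contradiction.

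\textbf{Main obstacle.} The routine parts are the probability computation and the final guessing word; the delicate part is making the peeling rigorous. One must argue that the polynomials $P_k$ have degree $0$ — no genuine polynomial growth can survive matching against a purely geometric constant or tail — and one must correctly handle ties in modulus (several $\lambda_k$, possibly complex, lying on the same circle) so that at each stage precisely the current leading geometric mode on the right is peeled from the left. Once \eqref{eq:tempgen} is established and the accounting of consumed moduli against the infinite family $\{2^{-k}\}$ is set up, both termination and the contradiction follow cleanly, and the reduction of the first paragraph delivers the word $w$ with $\PP_\gamma^{w,\tau}(F)>\tfrac12$.
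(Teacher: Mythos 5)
Your proposal follows essentially the same route as the paper's proof: the same reduction (exact $1/2$-$1/2$ splitting forces $\tau(a^nb)(\beta)=\frac{1}{2}\cdot\frac{1}{1-2^{-(n+1)}}$, and any deviation is exploited by appending the correct guessing letter), the same encoding of the finite-memory strategy as $\sum_{j\in J}(A^nB)_{i,j}$ rewritten via Theorem~\ref{theo:matpow}, and the same iterative peeling of equal-modulus eigenvalue blocks against the geometric expansion of $\frac{1}{1-2^{-(n+1)}}$, reaching a contradiction because finitely many eigenvalues cannot match infinitely many distinct modes $2^{-k}$. Your version even makes the reduction step slightly more explicit than the paper does, but the substance is identical.
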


Now we give an example that gives a stronger property. We will use the game in
Figure \ref{example:gadget} in another game as a gadget. We then demonstrate
that for this larger game it also holds that $\val(i)=1/2$ where $i$ is the
initial state but $\val^f(i)=1$, i.e. for all finite-memory strategies $\tau$
and $\epsilon>0$ there is a finite word that reaches the set of final states
with probability larger than $1-\epsilon$.

\begin{figure}[H]
  \begin{center}
    \begin{tikzpicture}[->,>=stealth,shorten >=1pt,auto,node distance=1.8cm,
      semithick, initial text=]
      \tikzstyle{vertex1}=[circle,draw=black,minimum size=15pt,inner sep=0pt] 
      \tikzstyle{vertex2}=[star, star points=7,fill=black!25,draw=black,minimum size=15pt,inner sep=0pt]
      \node[vertex1] (i) {$i$};
      \node[vertex2] (t)[right of=i] {$\top$};
      \node[vertex2] (b)[left of=i] {$\bot$};
      \node[vertex1] (tt)[right=of t] {$\top\top$};
      \node[vertex1] (tb)[below of=tt] {$\top\bot$};
      \node[vertex1] (bb)[left=of b] {$\bot\bot$};
      \node[vertex1] (bt)[below of=bb]{$\bot\top$};
      \node[vertex1] (sink)[above of=bb]{$s$};
      \node[vertex1,double] (fin)[above of=tt]{$f$};
      \node[vertex1] (m)[below of=i] {$m$};
      \path 
      (i) edge[bend right,in=270,pos=0.3,above] node {$(a,\frac{1}{2})$} (b)
      (i) edge[bend left,in=90,above,pos=0.3] node {$(a,\frac{1}{2})$} (t)
      (b) edge[bend left] node {$c_2,y_2$} (bb)
      (b) edge[bend left] node {$c_2,x_2$} (bt)
      (b) edge[loop right] node {$c_1$} ()
      (t) edge[bend right] node {$c_1,x_1$} (tt)
      (t) edge[bend right] node {$c_1,y_1$} (tb)
      (tt) edge[loop right] node {$c_2$} ()
      (tb) edge[loop right] node {$c_2$} ()
      (tt) edge[bend right,above] node {$R$} (t)
      (tb) edge[bend left] node {$R$} (m)
      (bt) edge[bend right, below] node {$R$} (m)
      (bb) edge[bend left] node {$R$} (b)
      (b) edge[bend right,above] node {$\bar R$} (sink)
      (t) edge[bend left] node {$\bar R$} (fin)
      (m) edge node {$\bar R$} (i)
      (m) edge[loop below] node {$R,c_1,c_2$} ()
      (sink) edge[loop left] node {} ()
      (fin) edge[loop right] node {} ()
      ;
    \end{tikzpicture}  
  \end{center}
  \caption{A game for which $\val^\infty(i)=1/2$}
  \label{example:cexample}
\end{figure}
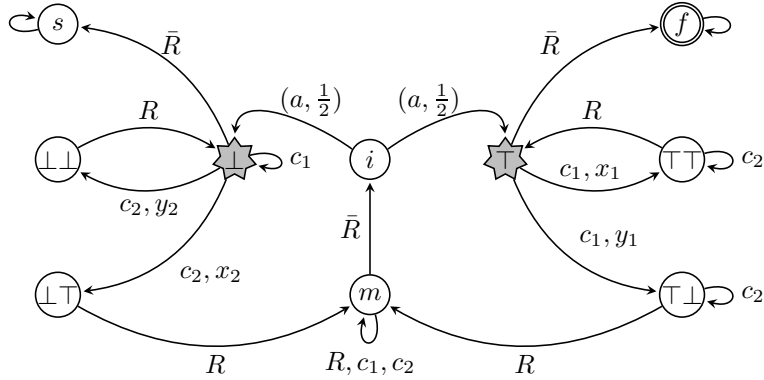

We give an informal description of the game in Fig. \ref{example:cexample}. The
state $i$ is the initial state. A fair coin is tossed at $i$ and if it is heads
then we move to state $\top$ otherwise we move to state $\bot$. Then, we toss a
biased coin in $\top$ by playing $c_1$, if we happen to be in $\bot$, playing
$c_1$ would not change anything. At this point another biased coin is tossed by
playing $c_2$ as a result we are in one of the states
$\bot\bot,\bot\top,\top\top,\top\bot$ after the two coin tosses. Repeating this
process $n$ times, i.e. by playing $a(c_1c_2R)^n$, we end up in state $\top$ if
and only if we had $n+1$ heads and symmetrically we end up in state $\bot$ if
and only if we have tossed $n+1$ tails. Now we play $\bar R$, and by doing so we
win if we have tossed $n+1$ consecutive heads, we lose if we have tossed $n+1$
consecutive tails and otherwise we go to the state $i$. If we repeat this
process $k$ times, i.e. by playing the word
\[
  (a(c_1c_2R)^n\bar R)^k,
\]
then the probability to win the game will be arbitrarily close to $1$ (for well
chosen $n$ and $k$) if and only if the coin tosses are biased towards heads,
i.e. $x_1,x_2>1/2$. Then the idea is to embed the gadget in
Fig.\ref{example:gadget} in place of the states $\bot$ and $\top$.

For all $k$ let 
\[
\mu_k=\PP_i^{(a(c_1c_2R)^n\bar R)^k}(\neg \{f,s\}),
\]
the probability to be in any state except the sink $(s)$ or final
$(f)$ state after the word $(a(c_1c_2R)^n\bar R)^k$ has been
played. Then we have
\begin{align*}
  \mu_0 &= 1, \text{ and }\\
  \mu_k &= \mu_{k-1}(1-\frac{1}{2}x_1^n-\frac{1}{2}y_2^n). 
\end{align*}
Hence
\[
  \mu_k = (1-\frac{1}{2}x_1^n-\frac{1}{2}y_2^n)^k. 
\]

Observe that
\begin{align*}
  \PP_i^{(a(c_1c_2R)^n\bar R)^k}(f)&=\frac{1}{2}x_1^n(\mu_0+\mu_1+\cdots +\mu_{k-1})\\
  &=\frac{1}{2}x_1^n\frac{1-(1-\frac{1}{2}x_1^n-\frac{1}{2}y_2^n)^k}{1-(1-\frac{1}{2}x_1^n-\frac{1}{2}y_2^n)}\\
  &=\frac{x_1^n}{x_1^n+y_2^n}\cdot (1-(1-\frac{1}{2}x_1^n-\frac{1}{2}y_2^n)^k).
\end{align*}
Then there exists some function $g$ such that
$\lim_{n\to\infty}(1-\frac{1}{2}x_1^n-\frac{1}{2}y_2^n)^{g(n)}=0$. Also, we have
$x_1>y_2$ if and only if $\lim_{n\to\infty}\frac{x_1^n}{x_1^n+y_2^n}=1$.

If we embed the gadget in Fig. \ref{example:gadget} in place of the
states $\bot$ and $\top$ and replace the letter $c_1$ with the letters
$a_1,b_1$ from the gadget and symmetrically $c_2$ with the letters
$a_2,b_2$, and such that the final state of the gadget embedded on the
right becomes $\top\top$, the sink state $\top\bot$ and symmetrically
the final state of the gadget embedded on the left becomes $\bot\top$
and the sink state $\bot\bot$ together with Lemma \ref{lem:gadget}
implies the following:

\begin{theorem}
  There exists a game with initial state $i$, such that $1/2=\val(i)<\val^f(i)=1$.
\end{theorem}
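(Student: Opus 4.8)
The plan is to read both values off the single probability identity
\[
  \PP_i^{(a(c_1c_2R)^n\bar R)^k}(f)=\frac{x_1^n}{x_1^n+y_2^n}\cdot\bigl(1-(1-\tfrac12 x_1^n-\tfrac12 y_2^n)^k\bigr)
\]
established just above, after re-interpreting $x_1$ and $y_2$ as the \emph{effective biases} produced by the two embedded gadgets. First I would pin down the substitution of Figure~\ref{example:gadget} into Figure~\ref{example:cexample}: at $\top$ the toss $c_1$ is replaced by the gadget played on the alphabet $\{a_1,b_1\}$, with final state identified with $\top\top$ and sink with $\top\bot$, so that the effective $x_1$ is exactly the probability with which the maximizer \emph{wins} the right gadget; symmetrically at $\bot$ the toss $c_2$ is replaced by the gadget on the \emph{disjoint} alphabet $\{a_2,b_2\}$, with final state $\bot\top$ and sink $\bot\bot$, so that $y_2$ equals the probability with which she \emph{loses} the left gadget. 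The crucial point is disjointness of the two alphabets: although the maximizer is blind and cannot tell whether she currently sits in the $\top$-cycle or the $\bot$-cycle, the $\{a_2,b_2\}$-letters act as no-ops inside the right gadget and the $\{a_1,b_1\}$-letters as no-ops inside the left one, so a single blind word nevertheless drives the two gadgets independently.

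For the equality $\val^f(i)=1$ I would fix an arbitrary finite-memory strategy $\tau\in\Sigma_2^f$ and observe that its behaviour, read as a function of the history generated inside one copy of the gadget, is again a finite-memory strategy on that gadget. Lemma~\ref{lem:gadget} then supplies, for each gadget separately and with arbitrarily many $a$'s, a pure word forcing the maximizer's winning probability strictly above $1/2$; because the gadget alphabets are disjoint these two words combine into one blind word yielding effective $x_1>\tfrac12>y_2$, hence $x_1>y_2$. Feeding $x_1>y_2$ into the identity, the factor $\frac{x_1^n}{x_1^n+y_2^n}\to 1$ as $n\to\infty$, while choosing $k=g(n)$ drives the remaining factor to $1$. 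Thus $\sup_{w\in\Sigma_1}\PP_i^{w,\tau}(F)=1$ for every finite-memory $\tau$, whence $\val^f(i)=1$.

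For $\val(i)=1/2$ I would argue both bounds against unrestricted minimizer strategies. For the upper bound the minimizer uses the infinite-memory counting strategy of the gadget analysis in \emph{both} gadgets: by equalising the masses on $t_1$ and $t_2$ at every visit she pins the maximizer's win probability to exactly $1/2$ in each gadget, so that effective $x_1=\tfrac12=y_2$ and the identity gives $\PP_i^{w,\tau}(F)\le\tfrac12$ for every (even mixed) word $w$, whence $\val(i)\le\tfrac12$. For the matching lower bound the maximizer mixes her two guesses inside each gadget, exactly as in the proof that $\val(\gamma)=1/2$; this secures effective $x_1=y_2=\tfrac12$ regardless of the minimizer, and letting $k\to\infty$ sends the second factor to $1$, so the reaching probability tends to $1/2$ and $\val(i)\ge\tfrac12$. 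Together with the general inequality $\val(i)\le\val^f(i)$ this yields $1/2=\val(i)<\val^f(i)=1$.

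The step I expect to be the main obstacle is the reduction from the composite game back to a single gadget: one must verify that fixing a finite-memory strategy of the fully informed minimizer really does induce a \emph{finite-memory} strategy on each embedded gadget, so that Lemma~\ref{lem:gadget} applies, and that the advantage guaranteed there for one isolated play survives the many repetitions forced by the outer loop $(\cdots)^k$ together with the blindness of the maximizer. The disjoint-alphabet encoding is precisely what makes this survival possible, since it lets the blind maximizer lengthen the $a$-runs of the two gadgets independently and thereby outrun the minimizer's bounded memory at every visit.
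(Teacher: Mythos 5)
Your proposal follows the paper's own proof essentially step for step: the same embedding of the Figure~\ref{example:gadget} gadget (on disjoint alphabets $\{a_1,b_1\}$ and $\{a_2,b_2\}$) into the game of Figure~\ref{example:cexample}, the same use of Lemma~\ref{lem:gadget} to force effective biases $x_1>\tfrac12>y_2$ against any finite-memory minimizer and then the displayed identity with $k=g(n)$ to get $\val^f(i)=1$, and the same counting/mixing strategies to pin $\val(i)=\tfrac12$. If anything, you make explicit (and honestly flag) the reduction-to-gadget and repetition-survival details that the paper itself compresses into the single sentence that the embedding ``together with Lemma~\ref{lem:gadget} implies'' the theorem.
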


The game in Fig. \ref{example:cexample} is not leaktight. We conjecture that for
leaktight games the finite-memory strategies are as powerful as the
infinite-memory ones ($\val(s)=\val^f(s)$). One can prove that for all
distributions on the states of the game there exists an optimal (mixed) action
for the minimizer, and this way construct an optimal strategy. This strategy has
in general unbounded memory. But intuitively for leaktight games the exact
distribution is not important, only the support. We leave the veracity of this
conjecture as an open problem.




\section*{Conclusion}
We have defined a class of stochastic games with partial observation where the maxmin-reachability problem is decidable. This holds under the assumption that maximizer is restricted to deterministic strategies. The extension of this result to the value $1$ problem where maximizer is allowed to use mixed strategies seems rather challenging.

\bibliography{bibliography}
\end{document}